\documentclass{amsart}
\usepackage{amsthm}
\usepackage{amsfonts, amsmath, amssymb}
\usepackage{tabularx}
\usepackage{graphicx}
\usepackage[english]{babel}

\theoremstyle{definition}
\newtheorem{theorem}{Theorem}[section]
\newtheorem{lemma}[theorem]{Lemma}
\newtheorem{proposition}[theorem]{Proposition}
\newtheorem{corollary}[theorem]{Corollary}
\newtheorem{definition}[theorem]{Definition}
\newtheorem{remark}[theorem]{Remark}

\usepackage{amsmath}
\usepackage{amstext}
\usepackage{amssymb}
\usepackage{amsfonts}
\usepackage{enumerate}
\usepackage{mathrsfs}
\usepackage{braket}
\usepackage{bbm}
\usepackage{color}
\usepackage[all]{xy}
\usepackage{ragged2e}
\usepackage{geometry}

\newcommand{\pair}[2]{{\langle #1, #2 \rangle}}
\newcommand{\Bpair}[2]{\Big \langle #1, #2 \Big \rangle}
\newcommand{\n}[1]{ \left\|#1\right\| }

\newcommand{\mbb}[1]{\mathbb{#1}}

\newcommand{\mcl}[1]{\mathcal{#1}}

\DeclareMathOperator{\Di}{Diag}
\DeclareMathOperator{\range}{range}
\DeclareMathOperator{\spa}{span}
\DeclareMathOperator{\rank}{rank}
\DeclareMathOperator{\Tr}{Tr}

\DeclareMathOperator{\U}{U}

\DeclareMathOperator{\D}{D}
\DeclareMathOperator{\E}{E}

\DeclareMathOperator{\Li}{L}
\DeclareMathOperator{\B}{B}
\renewcommand{\L}{\Li}

\DeclareMathOperator{\Pos}{Pos}

\DeclareMathOperator{\Herm}{Herm}

\DeclareMathOperator{\Rea}{Re}

\renewcommand{\Re}{\Rea}

\numberwithin{equation}{section}

\title{\textbf{Phase Retrievability of Super Operators and Measurements}} 

\author[J.A. Ch\'avez-Dom\'inguez]{Javier Alejandro Ch\'avez-Dom\'inguez}
\address{Department of Mathematics, University of Oklahoma, Norman, OK 73019-3103,
USA} \email{jachavezd@ou.edu}

\author[J. Isu]{Jeremiah Isu}
\email{isujeremiah@ou.edu}

\thanks{The first-named author was partially supported by NSF grant DMS-2247374. 
This paper forms part of the second-named author’s
PhD research at the University of Oklahoma, conducted under the supervision of the first-named author.
}

\date{}

\begin{document}

\begin{abstract}
We continue the study of Phase Retrieval in the Quantum Information setting in the style of \cite{liu2023phase}, in a manner which is more general in two primary ways: (a) Instead of only pure states we consider also states (and positive semidefinite operators) of bounded rank, as is done in Quantum Tomography; (b) We consider general super operators instead of only quantum channels.

We show that in order to have phase retrieval with respect to positive semidefinite operators of bounded rank, it suffices to discriminate between perfectly distinguishable pairs (i.e. those with orthogonal supports). From this we prove that phase retrievability is always Lipschitz stable with respect to the trace norm, and then we use the Fuchs-van de Graaf inequalities to deduce stability with respect to the Bures--Wasserstein distance (thus generalizing the known stability results for phase retrieval in Frame Theory).
For Hermitian-preserving super operators taking values in block-diagonal matrices, we characterize phase retrievability in terms of conditions inspired by the classical complement property from Frame Theory.

\end{abstract}

\maketitle

\section{Introduction and Preliminaries}

Quantum Information Theory studies how quantum systems --- the tiny particles that make up our universe --- store, process, and transmit information. Because quantum systems evolve over time, a central question is whether one can infer an earlier state of a system from its final state. This paper examines two related problems: given only the final state of a quantum system, can the initial state be recovered? If so, can this recovery be performed in a stable manner? Because quantum channels and quantum measurements are closely related, each providing a mechanism for extracting information from a physical system, we can ask analogous questions for quantum measurements as well. These questions are fundamental for developing reliable quantum technologies, where recovering altered or degraded information is essential. More broadly, we ask when a super operator is guaranteed to be injective on the set of positive semidefinite operators, and how stable such an operator is when injectivity holds. Positive semidefinite operators form the basic mathematical representation of nonnegative quantum observables. They include quantum effects, unnormalized quantum states, and measurement outcomes, each characterized by having nonnegative expectation values on all physical states. This nonnegativity is what allows positive semidefinite operators to encode probabilities, likelihoods, and post measurement states in quantum mechanics. In the same way that quantum channels are used to recover quantum states, quantum measurements are used to retrieve information from quantum systems. We show that this relationship extends naturally to a correspondence between super operators that are injective on positive semidefinite operators, and operator valued functions (that is, general measurements) which are used to retrieve information from physical systems.


The point of view of the present work borrows heavily from classical signal processing, which we recall now.
A \emph{signal} is an unknown vector in a linear space, corresponding to real world data.
Often we can only gain information about it through scalar-valued linear measurements, which correspond to measuring devices, and the goal is to reconstruct the signal.
Mathematically, this is often modeled by considering the problem of identifying an element $x$ in a Hilbert space $\mcl{X}$ (over the field $\mathbb{F}$) from a collection of inner products $\{ \pair{y_a}{x} \}_{a \in \Sigma}$.

In a wide range of situations of great practical interest --- such as crystallography, astronomy, medical imaging, computer generated holography, or optical computing --- the physical limitations of real-world measuring instruments imply that we do not have access to the full measurement $\pair{y_a}{x}$ but only to its magnitude $|\pair{y_a}{x}|$, meaning that the phase has been lost.
Recovering the signal from these hampered (\emph{phaseless}) measurements is known as the Phase Retrieval problem \cite{MR2224902,MR3526436,MR4094471,MR4189292}.
It should be noted that one can only hope to recover the signal \emph{up to a phase}: if $x=\lambda z$ for some scalar $\lambda$ with $|\lambda|=1$, the phaseless measurements $|\pair{y_j}{x}|$ and $|\pair{y_j}{z}|$ will always agree. Thus, a collection $\{y_a\}_{a \in \Sigma}$ in $\mcl{X}$ is said to have the \emph{phase retrieval property} if the map $\mcl{X} \to \mathbb{F}^\Sigma$ given by  $x \mapsto \big( |\pair{y_a}{x}| \big)_{a \in \Sigma}$ is injective up to a phase.

In the present work we build upon recent developments \cite{liu2023phase} relating the Phase Retrieval problem to a central one in Quantum Information Theory: telling quantum states apart.
The connection comes from the fact that if $x$ and $z$ are two unit vectors in $\mcl{X}$ that differ by a phase, then the corresponding associated pure (i.e. rank one) states are the same. 
To explain this we will be using the following notation from \cite{watrous2018theory}, which may be unfamiliar to the reader: the canonical identification $\mcl{X} = \L(\mathbb{F}, \mcl{X})$ (where a vector $x \in \mcl{X}$ is identified with the mapping $\lambda \mapsto \lambda x$) allows one to write $x^* \in \L(\mcl{X},\mathbb{F})$ for the adjoint of $x\in \mcl{X}$: this is simply the mapping $\pair{x}{\cdot} : \mcl{X} \to \mathbb{F}$. With this notation, if $\n{x}=1$ then $xx^* : \mcl{X} \to \mcl{X}$ is the orthogonal projection onto $\mathbb{F} x$, that is, the pure state associated to $x$.
Moreover, $x$ and $z$ differ by a unimodular scalar if and only if $xx^*=zz^*$. By rewriting $|\pair{y_a}{x}|^2 = \Tr( y_ay_a^*xx^* ) = \pair{y_ay_a^*}{xx^*}$ (where the latter is the Hilbert-Schmidt inner product on $\L(\mcl{X})$), we see that the collection $\{y_a\}_{a \in \Sigma}$ has the phase retrieval property if and only if the linear map $\L(\mcl{X}) \to \mathbb{F}^\Sigma$ given by $\rho \mapsto \big( \pair{y_ay_a^*}{\rho} \big)_{a\in \Sigma}$ restricts to an injective map on the subset of pure states.
The paper \cite{liu2023phase} defines analogous phase-retrieval-like properties where one replaces the rank-one operators $y_ay_a^*$ by more general operators $Y_a \in \L(\mcl{X})$, with the additional assumption that the collection $\{Y_a\}_{a\in \Sigma}$ is a quantum measurement or positive operator valued measure (i.e. each $Y_a$ is positive semidefinite and they add up to the identity of $\mcl{X}$).
A companion phase-retrieval-like property was defined in \cite{liu2023phase} for special types of linear maps $\Phi : \L(\mcl{X}) \to \L(\mcl{Y})$ --- the \emph{quantum channels} ---  and shown to again be equivalent to injectivity on pure states.
The main purpose of the present paper is to build upon \cite{liu2023phase} in three different ways:
\begin{enumerate}[(a)]
    \item\label{new ways:other sets} Consider phase-retrieval-like properties with respect to other subsets of states beyond the pure ones, which is closely related to Quantum Tomography \cite{heinosaari2013quantum}. 
    \item\label{new ways:other maps} Show that the restriction to quantum channels in \cite{liu2023phase} is unnecessary, and the phase retrieval circle of ideas can be developed for general linear maps $\L(\mcl{X}) \to \L(\mcl{Y})$.
    \item\label{new ways:other stability} Study the stability of phase retrieval in the quantum context, with respect to various distances of interest.
\end{enumerate}

We must emphasize the fact that exploiting connections between phase retrievability (in Frame Theory) and Quantum Tomography predates \cite{liu2023phase} by at least a decade, see e.g. the title of \cite{kech2015quantum}.
In this regard the main novelty of our approach lies in \eqref{new ways:other maps} above, as the literature so far has (understandably) concentrated on quantum channels/measurements.
Additionally, we point out that looking at stability questions in the quantum setting also has precedents such as \cite{MR3366275,MR3671475}.
Here a further difference between the present paper and those previous works is that we concentrate on the subset of positive semidefinite operators of rank at most $k$, which does not appear to have received much attention in said previous works. 
\\
\\
To avoid over-complicating the introduction, the reader is referred to Section \ref{Sec: Notation and Prelim} for detailed definitions.
Below, a \emph{measurement} means a collection  $\{Y_a\}_{a\in \Sigma} \subseteq \L(\mcl{X})$. The measurement is said to be Hermitian if each $Y_a$ is a Hermitian operator.
A \emph{super operator} is a linear map $\Phi : \L(\mcl{X})\to\L(\mcl{Y})$.
Given a subset $\Omega \subseteq \L(\mcl{X})$, a measurement $\{Y_a\}_{a\in \Sigma} \subseteq \L(\mcl{X})$ is said to be \emph{$\Omega$-phase retrievable} when the aforementioned map $\L(\mcl{X}) \to \mathbb{F}^\Sigma$ given by $\rho \mapsto \big( \pair{Y_a}{\rho} \big)_{a\in \Sigma}$ restricts to an injective map on the subset $\Omega$.
On the other hand, a super operator $\Phi : \L(\mcl{X})\to\L(\mcl{Y})$ is said to be \emph{$\Omega$-phase retrievable} if there exists a measurement $\{Y_a\}_{a\in \Sigma} \subseteq \L(\mcl{Y})$ such that $\{\Phi^* Y_a\}_{a\in \Sigma} \subseteq \L(\mcl{X})$ is $\Omega$-phase retrievable.
These definitions of phase retrievability directly generalize those of \cite{liu2023phase}, simply with less restrictions on $\{Y_a\}_{a\in \Sigma}$ or $\Phi$, and also the notion of informationally complete quantum measurements from \cite{heinosaari2013quantum}.
Our most frequent choice of $\Omega$ will be the set of positive semidefinite operators of rank at most $k$, denoted by $\mathcal{P}_k$.

We now summarize the main contributions of the paper:

\begin{enumerate}[(a)]
    \item We show some equivalent conditions for $\Omega$-phase retrievability of a super operator inspired by a combination of both the phase retrievablity of \cite{liu2023phase} and the Quantum Tomography of \cite{heinosaari2013quantum}  (Theorem \ref{thm: injectivity and phase retrievability of a quantum channel}). In particular, we show that in the definition of a phase retrievable super operator the associated measurement can be assumed to satisfy an extra condition.

    \item We demonstrate a correspondence between the $\Omega$-phase retrievability of a super operator and that of an associated measurement (Theorem \ref{thm: equivalence of P-phase retrievable measurements and channels}), and we show how this relationship specializes to quantum channels and their corresponding quantum measurements (Corollary \ref{cor: equivalence of k-phase retrievable measurements and channels}).
    
    \item In the case where $\Omega$ consists of positive semidefinite operators, we identify  necessary and sufficient conditions under which a quantum measurement is $\Omega$-phase retrievable (Theorem \ref{thm: k-phase retrivable Q-measurements}), and highlight the connection between these conditions and the Bhattacharyya coefficient (Corollary \ref{cor:Bhattacharyya xterization}).
    
    \item For the specific case of $\Omega=\mcl{P}_k$, we present more equivalent criteria of phase retrievability for super operators (Theorem \ref{thm: prso p_k}) and, in parallel, for Hermitian-preserving super operators (Theorem \ref{thm: Herm-pres}).
    
    \item We establish alternative formulations under which Hermitian-preserving super operators taking values in block-diagonal matrices are $\mcl{P}_k$-phase retrievable, phrased in terms of either the orthogonal complement or the dimension of certain spaces (Theorems \ref{thm: prso and span complex} and \ref{thm: prso and span}). 
    These results should be understood as generalizations of the characterization in the real case of phase retrievability for frames in terms of the complement property \cite{MR2224902}, or more precisely to the complex version from \cite[Thm. 4]{MR3202304}.
    Building on these (and previous results), we obtain equivalent conditions for $\mcl{P}_k$-phase retrievability of measurements (Corollary \ref{cor: KPRM}) and, in parallel, for Hermitian measurements (Corollary \ref{cor: KPR Herm M}).
    
    \item We show the existence of quantum channels that are injective on $\mcl{P}_k$ (hence, $\mcl{P}_k$-phase retrievable), but not injective (Theorem \ref{thm: existence of a k-state injective quantum channel}). Our approach is fundamentally different from the case $k=1$ proved in \cite{liu2023phase} which was done using Frame Theory tools, whereas we use results from Quantum Tomography \cite{heinosaari2013quantum}.
    
    \item By refining those Quantum Tomography arguments, we also prove the existence of quantum measurements that are $\mcl{P}_k$-phase retrievable but not $\mcl{P}_{k+1}$-phase retrievable. From this, the existence of quantum channels that are $\mcl{P}_k$-phase retrievable, but not $\mcl{P}_{k+1}$-phase retrievable naturally follows (Theorem \ref{thm: PKPR Qc, ~P_{K+1}}).

    \item \label{PKSO othorgonal} We show that in order for a super operator to be $\mcl{P}_k$-phase retrievable, it suffices for it to distinguish pairs of elements of $\mcl{P}_k$ with orthogonal images (Theorem \ref{thm: stability of Pk-PRSO}). This corresponds to the analogous fact that in Frame Theory, distinguishing orthogonal vectors is sufficient to have phase retrievability \cite{MR4727764}.

    \item \label{PKSO minimize} We produce a minimization characterization of the $\mcl{P}_k$-phase retrievability of a super operator (Lemma \ref{lem: Pk-PRSO equiv}), and use this to prove that $\mcl{P}_k$-phase retrievability of super operators is stable under small perturbations (Theorem \ref{thm: stability for PkPR s-operators}), generalizing the analogous result in Frame Theory. It follows that the set of $\mcl{P}_k$-phase retrievable super operators is open (Corollary \ref{cor: set PkPRSO is open}).

    \item \label{PKSO stability} We provide a different stability result showing that a super operator is $\mcl{P}_k$-phase retrievable if and only if it is bi-Lipschitz with respect to the trace norm when restricted to $\mcl{P}_k$ (Theorem \ref{thm: stability Pk-PRSO-input}). In addition, we show in Theorem \ref{thm: PK-PRSO group rep} a more general stability result for a version of $\mcl{P}_k$-phase retrievability that ``respects symmetries''. 

    \item In Section \ref{Sec: Stability of PKPRM}, we develop measurement analogues of the results established in Section \ref{Sec: Stability of PKPRMSO}, specifically those stated in \eqref{PKSO othorgonal} through \eqref{PKSO stability}. This establishes a structural correspondence between $\mcl{P}_k$-phase retrievable super operators and $\mcl{P}_k$-phase retrievable measurements. In particular, we establish a close correspondence between $\mcl{P}_k$-phase retrievable quantum channels and $\mcl{P}_k$-phase retrievable quantum measurements.

    \item Section \ref{Sec: Stability BW Distance} is devoted to establishing how our stability results with respect to norms for $\mcl{P}_k$-phase retrievability of super operators relate to classical stability results for phase retrievability in Frame Theory. The key is to recognize that the usual way to quantify the stability in Frame Theory corresponds to the Bures--Wasserstein distance on the quantum side. In particular, Theorem \ref{thm: BW Fvd Graaf} and Corollary \ref{cor: norm stability plus F-vdG} make use of the famous Fuchs--van de Graaf inequalities to cleanly make a connection between our stability results and the classical ones. In the case $k=1$ we are able to generalize the classical results to obtain Lipschitz stability of $\mcl{P}_1$-phase retrieval for positive super operators (Theorem \ref{thm: stability BW}), whereas for $k>1$ we only get H\"older stability with respect to the Bures--Wasserstein distance on bounded sets  (Corollary \ref{cor: norm stability plus F-vdG}). We remark that in the case $k>1$ it is known that Lipschitz stability with respect to the Bures--Wasserstein distance can never hold for a certain class of super operators by results from \cite{MR4477807}.
We remark that our Theorem \ref{thm: stability BW} 
generalizes the fundamental result in Frame Theory
about stability of phase retrieval in the finite-dimensional setting, for which a multitude of proofs have appeared in the literature
\cite{MR3202304,balan2015invertibility,MR3554699,MR3656501,MR4727764} (some of them including the case of continuous frames).
Our arguments are overall closest in spirit to those from \cite{MR3554699}, although the reduction to orthogonal pairs in Theorem \ref{thm: stability of Pk-PRSO} was inspired by \cite{MR4727764}.
\end{enumerate}

\section{Notation and Preliminaries} \label{Sec: Notation and Prelim}

Our notation partly follows that of \cite{watrous2018theory}. For a field $\mbb{F}$, where $\mbb{F}$ is $\mbb{C}$ or $\mbb{R}$, we represent an \emph{$\mbb{F}$-Euclidean space}\footnote{An $\mbb{F}$-Euclidean space is simply a finite dimensional real or complex Hilbert space, representing a physical system.} by $\mcl{X}, \mcl{Y}$, or $\mcl{Z}$. Many of the results we reference are originally stated in the setting of complex Euclidean spaces. We use the formulations provided here because they also apply to any $\mbb{R}$-Euclidean space. The space of linear operators from $\mcl{X}$ to $\mcl{Y}$ is denoted $\Li(\mcl{X}, \mcl{Y})$ and we write $\Li(\mcl{X})$ if $\mcl{X} = \mcl{Y}$. We denote the elements of $\Li(\mcl{X}, \mcl{Y})$ by block letters $A, U, V, X, \text{ and }Y$. The identity operator on $\mcl{X}$ is denoted $\mathbbm{1}_\mcl{X}$ and the trace operator on $\Li(\mcl{X})$ is denoted $\Tr$. For any two $\mbb{F}$-Euclidean spaces $\mcl{X}$ and $\mcl{Y}$, their \emph{tensor product}\footnote{The tensor product of two $\mbb{F}$-Euclidean spaces provides the mathematical framework for treating two subsystems as a single unified system.} is denoted $\mcl{X} \otimes \mcl{Y}$. The \emph{partial trace}\footnote{In Quantum Information Theory, the partial trace allows us to analyze the state of a subsystem within a larger composite system.} operator on the space $\Li(\mcl{X} \otimes \mcl{Y})$, with respect to $\mcl{X}$ is denoted $\Tr_{\mcl{X}}$. The set of positive semidefinite operators on $\mcl{X}$ is given by $\Pos(\mcl{X})$. A positive semidefinite linear operator on $\mcl{X}$ with trace one is called a \textit{state}. The set of states on $\mcl{X}$ is denoted by $D(\mcl{X})$. A state is called \textit{classical} (in a given basis) if and only if its density matrix is diagonal in that basis. The diagonal entries form a probability distribution. We represent finite sets by  $\Sigma$ and $\Gamma$. If $A$ is a linear operator, we denote its adjoint by $A^{*}$. Over a real Hilbert space, the adjoint coincides with the transpose, denoted $A^T$. We represent the set of Hermitian operators on an $\mbb{F}$-Euclidean space by $\Herm(\mcl{X})$. We remark that  if $\mcl{X}$ is $\mbb{R}$-Euclidean, $\Herm(\mcl{X})$ is commonly called the set of all symmetric operators on $\mcl{X}$. For an $\mbb{F}$-Euclidean space $\mcl{X}$, and a linear operator $A \in \Li(\mcl{X})$, we denote the \emph{Hermitian part} of $A$ by $\Herm(A)\footnote{\text{In the complex case, some authors call it the real part of a linear operator $A$ and denote it $\Re(A)$}.}$. Therefore, $\Herm(A) = \tfrac{A + A^*}{2}$.\\

In Quantum Information Theory, it is often convenient to work with $\mbb{F}$-Euclidean spaces whose orthonormal bases are indexed by a finite set $\Sigma$. This convention reflects the fact that such a basis can be used to store classical data — for example, a message to be processed, transmitted, encrypted, or the outcomes of a measurement. Given a finite set $\Sigma$, we denote the canonically associated $\mbb{F}$-Euclidean space by $\mathbb{F}^{\Sigma}$, equipped with the standard orthonormal basis $\{e_a : a \in \Sigma\}$. Basic linear algebra tells us that  if $|\Sigma| = d$, we can  identify $\mathbb{F}^{\Sigma}$ with $\mathbb{F}^{d}$. Therefore, every $v \in \mcl{X}$ is represented by a column vector in $\mbb{F}^{d}$, and corresponds to a linear map from $\mbb{F}$ to $\Li(\mcl{X})$. Thus, its adjoint by $v ^ {*}$ is a linear map going in the opposite direction. In this setting, we represent the standard basis of the space $\Li(\mbb{F} ^ \Sigma)$ by $\{\E_{a,b}\}_{a,b \in \Sigma}$, where $\E_{a,b} = e_a e_b ^ {*}$.\\

A linear operator $\Phi$ of the form $\Phi : \Li(\mcl{X}) \to \Li(\mcl{Y})$ is called a \emph{super operator}. It is said to be \emph{Hermitian preserving}\footnote{Equivalent conditions for the Hermitian preserving property of a super operator are stated in \cite[Thm. 2.25]{watrous2018theory}} if $X \in \Herm(\mcl{X})$ implies $\Phi(X) \in \Herm(\mcl{Y})$. It is \emph{positive} if $\Phi(X) \in \Pos(\mcl{Y})$ whenever $X \in \Pos(\mcl{X})$. It is said to be \emph{completely positive} if $\Phi \otimes \mathbbm{1}_{\Li(\mcl{Z})}$ is positive for every  $\mbb{F}$-Euclidean space $\mcl{Z}$. By \cite{choi1975completely}, the map $\Phi$ is completely positive if and only if there exist an index set $\Sigma$ and operators $\{A_j\}_{j \in \Sigma} \subset \Li(\mcl{X}, \mcl{Y})$ such that for every $X \in \L(\mcl{X})$
\begin{equation}
    \Phi(X) = \sum_{j \in \Sigma} A_j X A_j^{*}.
\end{equation}
Equivalently (see \cite[Thm. 2.22]{watrous2018theory}), the map $\Phi$ is completely positive if there is an operator $A \in \Li(\mcl{X}, \mcl{Y} \otimes \mcl{Z})$  for some $\mbb{F}$-Euclidean space $\mcl{Z}$ such that for every $X \in \Li(\mcl{X})$
\begin{equation}
    \Phi(X) = \Tr_{\mcl{Z}}(A X A ^ {*}).
\end{equation}
We denote the set of completely positive maps from $\Li(\mcl{X})$ to $\Li(\mcl{Y})$ by $\mathrm{CP}(\mcl{X}, \mcl{Y})$.
The map $\Phi$ is called \emph{trace-preserving} if $\Tr(\Phi(X)) = \Tr(X)$ for all $X \in \Li(\mcl{X})$. \\

A special class of super operators called \emph{quantum channels} are those that are completely positive and trace preserving.  Further characterizations of quantum channels can be found in \cite[Corollary~2.27]{watrous2018theory}. Such super operators are used to describe changes in systems that store quantum information. An example of a class of channels that occurs frequently in this paper are the \emph{quantum-to-classical channels}\footnote{A quantum-to-classical channel is a channel that takes in a quantum state and outputs a classical state.}. Given an $\mbb{F}$-Euclidean space $\mcl{X}$ and a finite set $\Sigma$, the super operator $\Phi : \Li(\mcl{X}) \to \Li(\mbb{C} ^ \Sigma)$ defined by
\[
\Phi(X) = \sum_{a \in \Sigma} \pair{\mu(a)}{X}\E_{a,a}
\]
for all $X \in \Li(\mcl{X})$ is called a quantum-to-classical channel.

Inner products play a central role in this paper. Following the Physics convention, our inner products will be conjugate-linear in the first variable. Unless indicated otherwise, the \emph{Frobenius inner product} \footnote{For $A,B \in \Li(\mcl{X}, \mcl{Y})$, the Frobenius inner product of $A$ and $B$ is defined as $\pair{A}{B} = \Tr(A ^ {*} B)$} is the one used for operators, while the $\mbb{F}$-Euclidean inner product is used for vectors. For any operator $A \in \Li(\mcl{X}, \mcl{Y)}$, the Frobenius norm of $A$ is denoted $\n{A}_{F} = \Tr(A^*A)^{1/2} = \pair{A}{A}^{1/2}$.\\

Given $\mbb{F}$-Euclidean spaces $\mcl{X}$ and $\mcl{Y}$, we can make $\Li(\mcl{X}, \mcl{Y})$ a normed space by defining a norm on it. Although there are many ways to do this,  we only consider the class of norms called the \emph{Schatten $p$-norms}, where $p \in [1 , \infty]$. Given $A \in \Li(\mcl{X}, \mcl{Y})$ and $p \in [1, \infty)$, the Schatten $p$-norm of $A$, denoted $\n{A}_{p}$, is defined as \begin{equation} \label{eqn: Schatten-p-norm}
    \n{A}_{p} = ((\Tr(A ^ {*} A)^{p/2}))^{1/p}
\end{equation}
 In the case $p= \infty$, the Schatten $p$-norm of $A \in \Li(\mcl{X}, \mcl{Y})$ is its operator norm, that is, 
\[
\n{A}_{\infty} = \max \{\n{Ax} : x \in \mcl{X}, \n{x} \le 1\}
\]
which is simply the operator norm of $A$.\\
It is easy to see that if $p=2$, $\Li(\mcl{X},\mcl{Y})$ is an inner product space and the inner product generates the Frobenius norm. If $\rho \in \Pos(\mcl{X})$, and $p = 1$, it follows from  \eqref{eqn: Schatten-p-norm} that $\n{\rho}_1 = \Tr(\rho)$. Since the Schatten-1 norm of every positive semidefinite operator is its trace, $\n{A}_{1}$ is referred to as the \emph{trace norm} of $A$, for every $A \in \Li(\mcl{X}, \mcl{Y})$.

If the spaces $\Li(\mcl{X})$ and $\Li(\mcl{Y})$ are endowed with the Schatten $p$-norm and Schatten $q$-norm, respectively, and $\Phi : \Li(\mcl{X}) \to \Li(\mcl{Y})$ is a super operator, we denote the operator norm of $\Phi$ by $\n{\Phi}_{p \rightarrow q}$. Therefore,
\[
\n{\Phi}_{p \rightarrow q} = \max \{\n{\Phi(A)}_{q} : A \in \Li(\mcl{X}), \n{A}_{p} \le 1\}.
\]
For a finite set $\Sigma$ and an $\mbb{F}$-Euclidean space $\mcl{X}$, a function $\mu : \Sigma \to \Li(\mcl{X})$ is a called a \emph{measurement}.
We stress that our terminology differs slightly from that commonly in use in Quantum Information Theory, e.g. in \cite{watrous2018theory}.
If the range of $\mu$ is contained in $\Herm(\mcl{X})$ (resp. $\Pos(\mcl{X})$), we call $\mu$ a \emph{Hermitian measurement} (resp. \emph{positive measurement}). It is called a \textit{quantum measurement} if it is positive and furthermore satisfies

\[
\sum_{x \in \Sigma} \mu(x) = \mathbbm{1}_{\mcl{X}}.
\]
A quantum measurement is often also called a \emph{positive operator valued measure (POVM)}, as seen in \cite{liu2023phase} and \cite{heinosaari2013quantum}. We are able to extract information from a quantum system using quantum measurements. Given a finite set $\Sigma$, and $\mbb{F}$-Euclidean spaces $\mcl{X}$ and $\mbb{F} ^ \Sigma$, \cite[Thm. 2.37]{watrous2018theory} states that there is a one to one correspondence between the set of quantum measurements of the form $\mu : \Sigma \to \Pos(\mcl{X})$ and quantum-to-classical channels of the form $\Psi_\mu : \Li(\mcl{X}) \to \Li(\mbb{F}^\Sigma)$ defined by 
 \begin{equation} \label{eqn 2.4}
    \Psi_\mu(X) = \sum_{a \in \Sigma} \pair{\mu(a)}{X}\E_{a,a}
\end{equation}

for all $X \in \Li(\mcl{X})$. In the general case where $\mu : \Sigma \to \Li(\mcl{X})$ is just a measurement, we take the super operator $\Psi_\mu$ defined in \eqref{eqn 2.4} to be the super operator corresponding to $\mu$.\\

Phase retrievability of quantum channels, introduced in \cite[Def. 1.1]{liu2023phase}, provides a framework for identifying rank‑one states using ideas from Frame Theory. A consequence of \cite[Prop. 1.4]{liu2023phase} is that if two pure states, $\rho, \sigma$ produce the same output over a phase retrievable quantum channel, then the states must coincide. Writing $\rho = uu^*$ and $ \sigma = vv^*$ for some unit vectors $u$ and $v$, the equality $uu^* = vv^*$ implies the existence of $\lambda \in \mbb{C}$ such that $|\lambda|=1$, and $u = \lambda v$. Thus, the underlying vectors defining the pure states are uniquely determined up to a phase.  However, extending these results to higher‑rank states proves  more challenging. This paper develops a generalization rooted in Quantum Information Theory, drawing in particular on tools from Quantum Tomography to extend the analysis beyond the rank‑one regime.  More generally, we give conditions for the phase-retrievability of super operators and show how these may be interpreted in terms of measurements. With this background in place, we introduce the following definitions.

\begin{definition} \label{def: Omega-phase retrievable measurement}
Let $\Sigma$ be a finite set, $\mcl{X}$ an $\mbb{F}$-Euclidean space, $\Omega \subseteq \Li(\mcl{X})$. A measurement $\mu : \Sigma \to \Li(\mcl{X})$ is called \textit{$\Omega$-phase retrievable} if for $\rho,\sigma \in \Omega$, the condition
\[
    \langle \mu(a), \rho \rangle = \langle \mu(a), \sigma \rangle \quad \text{for all } a \in \Sigma
\]
implies $\rho = \sigma$.
\end{definition}

If $\mcl{X}$ is $\mbb{C}$-Euclidean, $\Omega$ a subset of states and $\mu$ a quantum measurement, then Definition \ref{def: Omega-phase retrievable measurement} is the same as \cite[Sec. 2]{heinosaari2013quantum} where it is called an \emph{informationally complete measurement} with respect to $\Omega$, since for every $X \in \Li(\mcl{X})$ and every $a \in \Sigma$, $\pair{\mu(a)}{X} = \Tr(\mu(a)X)$.

\begin{definition} \label{def: Omega-PRs-operator}
Let $\Sigma$ be a finite set, $\mcl{X}$ an $\mbb{F}$-Euclidean space, $\Omega \subseteq \Li(\mcl{X})$, and $\Phi : \L(\mcl{X}) \rightarrow \L(\mcl{Y})$ a super operator. The super operator $\Phi$ is \textit{$\Omega$-phase retrievable} if there exists a measurement $\mu : \Sigma \to \Li(\mcl{Y})$ such that the composed measurement
\[
    \Phi^{*} \circ \mu : \Sigma \to \Li(\mcl{X})
\]
is $\Omega$-phase retrievable.
\end{definition}

The adjoint in Definition \ref{def: Omega-PRs-operator} is taken with respect to the Frobenius inner product, and we say that $\Phi$ is $\Omega$-phase retrievable with respect to the measurement $\mu$. If $\mu$ is a quantum measurement and $\Phi ^ *$ is positive and unital, then $\Phi ^ * \circ \mu$ is a quantum measurement. In particular, it is so when $\Phi$ is a quantum channel.

\begin{definition}  \label{def: Omega-injective-s-operator}
Let $\Sigma$ be a finite set, let $\mcl{X}$ be an $\mbb{F}$-Euclidean space, let $\Omega \subseteq \Li(\mcl{X})$, and let $\Phi : \L(\mcl{X}) \to \L(\mcl{Y})$ be a super operator. The super operator $\Phi : \L(\mcl{X}) \to \L(\mcl{Y})$ is \textit{$\Omega$-injective} (or injective on $\Omega$) if for all $\rho, \sigma \in \Omega$, we have $\rho = \sigma$ whenever $\Phi(\rho) = \Phi(\sigma)$.
\end{definition}

We introduce notations for some special subsets used in this paper: \(\Omega_k = \{X \in \Li(\mcl{X}): \rank(X) \le k\}\), $\mcl{P}_k = \Omega_k \cap \Pos({\mcl{X}})$, and $\D_k = \Omega_k \cap \D(\mcl{X})$. Obviously, $\D_k \subset \mcl{P}_k \subset \Omega_k$. We represent a generic subset of $\Pos(\mcl{X})$ by $\mcl{P}$.\\

Before ending this section, we state some useful results about measurements and super operators.

\begin{proposition} \label{prop: Herm meas and so}
Let $\mcl{X}$ be an $\mbb{F}$-Euclidean space and $\Sigma$ a finite set. A measurement $\mu : \Sigma \to \Li(\mbb{F}^\Sigma)$ is Hermitian if and only if the corresponding super operator $\Psi_\mu : \Li(\mcl{X}) \to \Li(\mbb{F}^\Sigma)$ defined by
\[
\Psi_\mu(X) = \sum_{a \in \Sigma} \pair{\mu(a)}{X}\E_{a,a}
\]
for all $X \in \Li(\mcl{X})$ is Hermitian preserving.
\end{proposition}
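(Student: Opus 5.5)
Note first that the statement has a harmless typo: the measurement should be $\mu:\Sigma\to\Li(\mcl{X})$, not $\Li(\mbb{F}^\Sigma)$, since $\Psi_\mu$ takes inputs in $\Li(\mcl{X})$. The plan is a direct computation of $\Psi_\mu$ on Hermitian inputs, using that $\Psi_\mu(X)$ is diagonal. A diagonal matrix $\sum_a c_a \E_{a,a}$ is Hermitian exactly when every $c_a$ is real. So $\Psi_\mu$ is Hermitian preserving if and only if $\pair{\mu(a)}{X}=\Tr(\mu(a)^*X)$ is real for every $a\in\Sigma$ and every $X\in\Herm(\mcl{X})$.

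For the forward direction, suppose each $\mu(a)$ is Hermitian and $X$ is Hermitian. Then $\overline{\Tr(\mu(a)^*X)}=\Tr((\mu(a)^*X)^*)=\Tr(X\mu(a))=\Tr(\mu(a)X)=\Tr(\mu(a)^*X)$, so each coefficient is real and $\Psi_\mu(X)$ is Hermitian. In the real case this is automatic, because all scalars are real.

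For the converse, fix $a$ and set $M=\mu(a)$. We know $\Tr(M^*X)\in\mbb{R}$ for all $X\in\Herm(\mcl{X})$. Decompose $M^*=H+iK$ with $H=\Herm(M^*)$ and $K=\frac{M^*-M}{2i}$, both Hermitian (complex case). Then $\Tr(M^*X)=\Tr(HX)+i\Tr(KX)$, and both traces are real, so $\Tr(KX)=0$ for all Hermitian $X$. Taking $X=K$ gives $\n{K}_F^2=0$, so $K=0$ and $M$ is Hermitian.

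The main obstacle is the real case. There every scalar is real, so the reality condition on the coefficients is vacuous. Moreover, Hermitian preservation only tests symmetric $X$, which detects only the symmetric part of $M$: for instance, any antisymmetric $\mu(a)$ gives $\Psi_\mu(X)=0$ on symmetric $X$. So in the real case the converse fails as literally stated. I would present the equivalence for $\mbb{F}=\mbb{C}$. For $\mbb{F}=\mbb{R}$, either replace "Hermitian preserving" by "maps $\Herm$ to $\Herm$ and annihilates the skew-symmetric operators", or note that only $\Herm(\mu(a))$ matters, since $\Psi_\mu$ and $\Psi_{\Herm\circ\mu}$ agree on $\Herm(\mcl{X})$.
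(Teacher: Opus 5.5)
Your argument is correct over $\mbb{C}$, and your diagnosis of the real case is also correct. You also rightly read the codomain of $\mu$ as $\Li(\mcl{X})$.

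The forward direction is the same computation as in the paper, except that you correctly restrict to Hermitian $X$. The paper's display asserts $\Psi_\mu(X)=\Psi_\mu(X)^*$ for every $X\in\Li(\mcl{X})$, which already fails for $\mcl{X}=\mbb{C}$, $\mu(a)=1$, $X=i$.

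For the converse the paper argues differently. It uses, implicitly via the complex characterization of Hermitian-preserving maps in \cite[Thm.~2.25]{watrous2018theory}, that $\Psi_\mu(X^*)=\Psi_\mu(X)^*$ for \emph{all} $X$. From this it reads off $\pair{\mu(a)^*-\mu(a)}{X}=0$ for all $X$. You instead stay with Hermitian inputs, split $\mu(a)^*=H+iK$, and test against $X=K$. This is self-contained and shows exactly where the imaginary unit is needed.

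The paper's converse is in fact a valid proof, over either field, that $\mu$ is Hermitian if and only if $\Psi_\mu(X^*)=\Psi_\mu(X)^*$ for all $X$. Over $\mbb{C}$ this condition is equivalent to Hermitian preservation (write $X=H_1+iH_2$). Over $\mbb{R}$ it is strictly stronger, and it is precisely your proposed repair: $\Psi_\mu$ annihilates the skew-symmetric operators.

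As you observe, over $\mbb{R}$ every output of $\Psi_\mu$ is a real diagonal matrix, so $\Psi_\mu$ is Hermitian preserving for every $\mu$. For instance, $\mu(a)=\E_{1,2}\in\Li(\mbb{R}^2)$ refutes the converse as stated, and the paper's step $\pair{\mu(a)}{X^*}=\overline{\pair{\mu(a)}{X}}$ for all $X$ is exactly what is unjustified there. Only the forward direction, which holds over both fields, is used later in the paper.
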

\begin{proof}
    First assume that $\mu$ is Hermitian. Then for every $X \in \Li(\mcl{X})$ we have
    \[
    \begin{aligned}
    \Psi_\mu(X) & = \sum_{a \in \Sigma} \pair{\mu(a)}{X}\E_{a,a} =\sum_{a \in \Sigma} \overline{\Tr(\mu(a)X)}\E_{a,a} = \sum_{a \in \Sigma}\overline{\pair{\mu(a)}{X}}\E_{a,a} = \Psi_\mu(X)^*.
    \end{aligned}
    \]
    The second equality holds because the trace operator is cyclic, taking adjoints conjugates the trace, and the measurement is Hermitian. The third equality holds because the measurement is Hermitian.\\
    Now, we assume that $\Psi_\mu$ is Hermitian preserving. For all $X \in \Li(\mcl{X})$ and $a \in \Sigma$ we have
    \[
    \pair{\mu(a)}{X^*} = \overline{\pair{\mu(a)}{X}}.
    \]
    From this it follows that $\pair{\mu(a)^* - \mu(a)}{X} = 0$ for all $X \in \Li(\mcl{X})$ and $a \in \Sigma$. Therefore $\mu(a)^* = \mu(a)$ for all $a \in \Sigma$, so that $\mu$ is Hermitian.
\end{proof}

\begin{proposition} \label{prop: positive meas and so}
Let $\mcl{X}$ be an $\mbb{F}$-Euclidean space and $\Sigma$ a finite set. A measurement $\mu : \Sigma \to \Li(\mcl{X})$ is positive if and only if the corresponding super operator $\Psi_\mu : \Li(\mcl{X}) \to \Li(\mbb{F}^\Sigma)$ defined by
\[
\Psi_\mu(X) = \sum_{a \in \Sigma} \pair{\mu(a)}{X}\E_{a,a}
\]
for all $X \in \Li(\mcl{X})$ is positive.
\end{proposition}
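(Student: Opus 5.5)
The plan is to prove the two directions separately, reducing positivity of the super operator to a statement about its diagonal entries. The key observation is that $\Psi_\mu(X)$ is a diagonal matrix in the standard basis $\{e_a\}_{a\in\Sigma}$, with $a$-th diagonal entry $\pair{\mu(a)}{X} = \Tr(\mu(a)^*X)$. A diagonal matrix is positive semidefinite if and only if all its diagonal entries are real and nonnegative. So positivity of $\Psi_\mu$ is equivalent to the statement that $\pair{\mu(a)}{X} \ge 0$ for every $a \in \Sigma$ and every $X \in \Pos(\mcl{X})$.

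For the forward direction, assume each $\mu(a) \in \Pos(\mcl{X})$. Then $\mu(a)^* = \mu(a)$, so for $X \in \Pos(\mcl{X})$ we have
\[
\pair{\mu(a)}{X} = \Tr(\mu(a) X) = \Tr\big(X^{1/2}\mu(a)X^{1/2}\big) \ge 0,
\]
because $X^{1/2}\mu(a)X^{1/2}$ is positive semidefinite. Hence $\Psi_\mu(X)$ is a diagonal matrix with nonnegative entries, so it lies in $\Pos(\mbb{F}^\Sigma)$.

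For the converse, assume $\Psi_\mu$ is positive. Testing against rank-one projections $X = vv^*$ with $v \in \mcl{X}$ gives, for each $a$,
\[
0 \le \pair{\mu(a)}{vv^*} = \Tr(\mu(a)^* vv^*) = \pair{v}{\mu(a)^* v} = \overline{\pair{v}{\mu(a) v}}.
\]
Thus $\pair{v}{\mu(a)v}$ is real and nonnegative for every $v$.

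The main obstacle is passing from nonnegativity of this quadratic form to $\mu(a) \in \Pos(\mcl{X})$, and here the real and complex cases differ. Over $\mbb{C}$, a real-valued quadratic form forces the operator to be Hermitian, so $\mu(a)$ is positive semidefinite. Over $\mbb{R}$, the quadratic form only sees the symmetric part of $\mu(a)$. I would first try to settle this using the Hermitian-preserving characterization in Proposition \ref{prop: Herm meas and so}, but that needs $\Psi_\mu$ to be Hermitian preserving, which is not automatic from positivity in the real case. The plan is therefore to split the two fields:
\begin{enumerate}[(i)]
\item over $\mbb{C}$, invoke the standard fact that $\Pos(\mcl{X})$ spans $\L(\mcl{X})$, so a positive map is Hermitian preserving; Proposition \ref{prop: Herm meas and so} then gives $\mu(a)$ Hermitian, and the quadratic-form bound gives $\mu(a) \ge 0$;
\item over $\mbb{R}$, conclude only that $\Herm(\mu(a))$ is positive semidefinite, and state that the claim must be read with the Hermitian-part convention, i.e. under the standing assumption that $\mu$ is Hermitian.
\end{enumerate}
The real case genuinely needs this extra assumption. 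For example, take $\mu(a)$ to be a nonzero skew-symmetric matrix: then $\Psi_\mu$ vanishes on symmetric matrices, so it is positive, yet $\mu(a)$ is not positive semidefinite.
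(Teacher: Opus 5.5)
Your argument is correct and follows the paper's route. Both proofs reduce positivity of $\Psi_\mu$ to $\pair{\mu(a)}{\rho}\ge 0$ for all $\rho\in\Pos(\mcl{X})$, and then use self-duality of the positive semidefinite cone.

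The paper simply asserts that this condition is equivalent to $\mu(a)\in\Pos(\mcl{X})$, for arbitrary $\mu(a)\in\L(\mcl{X})$ and either field. Your test with $vv^*$, combined with the fact that a real-valued quadratic form forces Hermiticity over $\mbb{C}$, is exactly what justifies that step in the complex case. Your skew-symmetric example correctly shows that over $\mbb{R}$ both that step and the proposition as stated fail; recall that the paper's $\Pos(\mcl{X})$ consists of operators of the form $UU^*$. So over $\mbb{R}$ one must either assume each $\mu(a)$ is symmetric, or conclude only that $\Herm(\mu(a))\in\Pos(\mcl{X})$.

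One small correction to your diagnosis of the real case. Under the paper's definition, a positive map over $\mbb{R}$ \emph{is} Hermitian preserving, because symmetric matrices are differences of positive semidefinite ones. What fails over $\mbb{R}$ is the implication ``$\Psi_\mu$ Hermitian preserving $\Rightarrow$ $\mu$ Hermitian'' in Proposition \ref{prop: Herm meas and so} itself. Its proof uses $\Psi_\mu(X^*)=\Psi_\mu(X)^*$ for all $X$, which is a strictly stronger condition over $\mbb{R}$. Your same example, which sends every symmetric matrix to $0$, refutes that implication as well.
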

\begin{proof}
    Assume $\mu$ is positive. Let $\rho \in \Pos(\mcl{X})$, then for every $a \in \Sigma, \pair{\mu(a)}{\rho} \ge 0$. Therefore, $\Phi(\rho) \in \Li(\mbb{F}^\Sigma)$ is a Hermitian, diagonal operator with non-negative diagonal entries.  Hence, $\Phi(\rho) \in \Pos(\mcl{\mbb{F}}^\Sigma)$.\\
    Conversely, assume that $\Psi_\mu$ is positive. Let $\rho \in \Pos(\mcl{X})$. Then $\Psi_\mu(\rho) \in \Pos(\mbb{F}^\Sigma)$, which happens if and only if $\pair{\mu(a)}{\rho} \ge 0$ for all $a \in \Sigma$, which is equivalent to $\mu(a) \in \Pos(\mcl{X})$ for all $a \in \Sigma$. Therefore, $\mu$ is positive.
\end{proof}
Since the adjoint of a Hermitian preserving super operator remains Hermitian preserving, and the adjoint of a positive super operator is again positive, we may state the following propositions without providing their proofs.
\begin{proposition} \label{prop: composed meas Herm} Let $\Sigma$ be a finite set and let $\mcl{X}$ and $\mcl{Y}$ be $\mbb{F}$-Euclidean spaces. For a Hermitian measurement $\mu: \Sigma \to \Li(\mcl{Y})$ and a Hermitian preserving super operator $\Phi : \Li(\mcl{X}) \to \Li(\mcl{Y})$, the composed measurement $\Phi^* \circ \mu : \Sigma \to \Li(\mcl{X})$ is Hermitian.
\end{proposition}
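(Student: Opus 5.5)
The argument has two steps: first show that $\Phi^*$ is Hermitian preserving, then note that $(\Phi^*\circ\mu)(a)=\Phi^*(\mu(a))$ is Hermitian because $\mu(a)$ is. Only the first step needs work, and it is exactly the claim, stated just before the proposition, that the adjoint of a Hermitian preserving super operator is again Hermitian preserving.

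To prove that claim, I would rely on the duality $\pair{\Phi^*(Y)}{X}=\pair{Y}{\Phi(X)}$ for $X\in\Li(\mcl{X})$ and $Y\in\Li(\mcl{Y})$. I would first record the known reformulation of the Hermitian preserving property, \cite[Thm.~2.25]{watrous2018theory}: $\Phi(X^*)=\Phi(X)^*$ for every $X\in\Li(\mcl{X})$. For completeness, write $X=H+iK$ with $H=\Herm(X)$ and $K=\Herm(-iX)$ both Hermitian. Since $\Phi(H)$ and $\Phi(K)$ are Hermitian, linearity gives $\Phi(X^*)=\Phi(H)-i\Phi(K)=(\Phi(H)+i\Phi(K))^*=\Phi(X)^*$. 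In the real case there is nothing to decompose: every operator splits into symmetric and antisymmetric parts, and the same identity follows by a similar argument.

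Now fix $Y\in\Herm(\mcl{Y})$. For arbitrary $X\in\Li(\mcl{X})$ I would compute
\[
\pair{\Phi^*(Y)^*}{X}=\overline{\pair{\Phi^*(Y)}{X^*}}=\overline{\pair{Y}{\Phi(X^*)}}=\overline{\pair{Y}{\Phi(X)^*}}=\overline{\pair{Y^*}{\Phi(X)^*}}=\pair{Y}{\Phi(X)}=\pair{\Phi^*(Y)}{X}.
\]
This uses $\pair{A^*}{B}=\Tr(AB)=\overline{\pair{A}{B^*}}$, together with $\pair{A^*}{B^*}=\overline{\pair{A}{B}}$, which follows from cyclicity of the trace. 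Since $X$ is arbitrary and the Frobenius inner product is nondegenerate, $\Phi^*(Y)^*=\Phi^*(Y)$, so $\Phi^*$ is Hermitian preserving.

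Applying this with $Y=\mu(a)$ for each $a\in\Sigma$ shows that $\Phi^*\circ\mu$ is a Hermitian measurement. The only step needing care is the conjugation bookkeeping in the displayed chain, in particular keeping straight the convention that the inner product is conjugate-linear in the first variable; everything else is routine.
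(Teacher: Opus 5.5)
Your complex-case argument is correct. It follows the same route as the paper, which states this proposition without proof and justifies it only by the remark that the adjoint of a Hermitian-preserving super operator is again Hermitian-preserving. Your decomposition $X=H+iK$ and your displayed chain are a correct proof of that remark over $\mathbb{C}$, and the conjugation bookkeeping checks out.

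The real case, however, has a genuine gap. The claim that splitting into symmetric and antisymmetric parts gives $\Phi(X^T)=\Phi(X)^T$ ``by a similar argument'' is false. Over $\mathbb{C}$ the identity comes for free: every skew-Hermitian operator is $i$ times a Hermitian one, so complex linearity carries the hypothesis over to the skew part. Over $\mathbb{R}$ there is no such trick, and preserving symmetric operators says nothing about what $\Phi$ does to antisymmetric ones.

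In fact the proposition itself fails for $\mathbb{F}=\mathbb{R}$. Take $\mathcal{X}=\mathbb{R}^2$, $\mathcal{Y}=\mathbb{R}$, $\Phi(X)=X_{12}-X_{21}$, $\Sigma=\{1\}$ and $\mu(1)=1$. This $\Phi$ is Hermitian preserving, since every real $1\times 1$ matrix is symmetric. But $\Phi^*(1)=E_{1,2}-E_{2,1}$, which is not symmetric. So the paper's one-line justification is likewise valid only over $\mathbb{C}$.

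Pairing $\Phi^*(S)$ against antisymmetric operators shows what the right condition is over $\mathbb{R}$: $\Phi^*$ preserves symmetric operators exactly when $\Phi$ maps antisymmetric operators to antisymmetric ones. So you should either restrict to $\mathbb{F}=\mathbb{C}$, or over $\mathbb{R}$ assume $\Phi(X^T)=\Phi(X)^T$ for all $X$. That assumption holds, for example, for maps of the form $X\mapsto\sum_j A_jXA_j^T$, and under it your displayed chain goes through verbatim.
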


\begin{proposition} \label{prop: composed meas Pos}
 Let $\Sigma$ be a finite set and let $\mcl{X}$ and $\mcl{Y}$ be $\mbb{F}$-Euclidean spaces. For a positive measurement $\mu: \Sigma \to \Li(\mcl{Y})$ and a positive super operator $\Phi : \Li(\mcl{X}) \to \Li(\mcl{Y})$, the composed measurement $\Phi^* \circ \mu : \Sigma \to \Li(\mcl{X})$ is positive.
\end{proposition}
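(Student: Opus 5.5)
The statement follows from two facts: the adjoint of a positive super operator is positive, and a measurement is positive exactly when each of its values is positive semidefinite. The plan is to check positivity pointwise. Fix $a \in \Sigma$. By definition, $(\Phi^*\circ\mu)(a) = \Phi^*(\mu(a))$, and $\mu(a) \in \Pos(\mcl{Y})$ because $\mu$ is a positive measurement. So it suffices to show that $\Phi^*$ maps $\Pos(\mcl{Y})$ into $\Pos(\mcl{X})$.

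To see that $\Phi^*$ is positive, I will use duality with respect to the Frobenius inner product. Take $P \in \Pos(\mcl{Y})$ and an arbitrary $\rho \in \Pos(\mcl{X})$. Then
\[
\pair{\rho}{\Phi^*(P)} = \pair{\Phi(\rho)}{P} = \Tr(\Phi(\rho)^* P) = \Tr(\Phi(\rho)P) \ge 0,
\]
where $\Phi(\rho) \in \Pos(\mcl{Y})$ by positivity of $\Phi$, and the trace of a product of two positive semidefinite operators is nonnegative. Specializing to $\rho = xx^*$ gives $x^*\Phi^*(P)x \ge 0$ for every $x \in \mcl{X}$.

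The only delicate point is concluding from this that $\Phi^*(P)$ is positive semidefinite, since this requires $\Phi^*(P)$ to be Hermitian.
\begin{itemize}
\item Over $\mbb{C}$, nonnegativity of every quadratic form $x^*\Phi^*(P)x$ already forces $\Phi^*(P)$ to be Hermitian, so we are done.
\item Over $\mbb{R}$, I will first show that $\Phi$ is Hermitian preserving, and deduce that $\Phi^*$ is as well; this is the main thing to check. Every symmetric operator is a difference of two positive semidefinite ones, so $\Phi$ maps $\Herm(\mcl{X})$ into $\Herm(\mcl{Y})$. The adjoint of a Hermitian preserving map is Hermitian preserving, and this is the fact the paper invokes just before the statement. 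It can be checked by pairing $\Phi^*(P)-\Phi^*(P)^*$ against arbitrary $X$, splitting $X$ into its Hermitian and skew parts, and using $\Phi(X^*)=\Phi(X)^*$. Hence $\Phi^*(P)$ is symmetric, and together with the nonnegative quadratic form this makes it positive semidefinite.
\end{itemize}

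Applying this with $P = \mu(a)$ for each $a\in\Sigma$ shows that $\Phi^*\circ\mu$ is a positive measurement.
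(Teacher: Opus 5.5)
Over $\mbb{C}$ your argument is correct. It follows the same reasoning the paper relies on: the paper gives no proof, only the remark that the adjoint of a positive super operator is positive. Your duality computation $\pair{xx^*}{\Phi^*(P)}=\Tr(\Phi(xx^*)P)\ge 0$, combined with polarization, actually supplies that missing justification.

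The real case has a genuine gap. It sits at the step where you use $\Phi(X^*)=\Phi(X)^*$ for all $X$. Positivity only tells you that $\Phi$ maps symmetric operators to symmetric ones, i.e. $\Phi(H)^T=\Phi(H)$ for symmetric $H$. The identity $\Phi(X^T)=\Phi(X)^T$ additionally requires $\Phi(K)^T=-\Phi(K)$ for every antisymmetric $K$. Over $\mbb{C}$ this is automatic, because skew-Hermitian operators are $i$ times Hermitian ones and $\Phi$ is complex-linear. Over $\mbb{R}$ the two conditions are independent. In fact, by duality, $\Phi^*$ maps symmetric operators to symmetric ones exactly when $\Phi$ maps antisymmetric operators to antisymmetric ones.

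The gap cannot be repaired, because the statement is false over $\mbb{R}$. On $\mcl{X}=\mcl{Y}=\mbb{R}^2$ take
\[
\Phi(X) = X + (X_{12}-X_{21})\,\E_{1,1}.
\]
This map is the identity on symmetric operators, so it is positive and Hermitian preserving. A direct computation gives $\Phi^*(Y)=Y+Y_{11}(\E_{1,2}-\E_{2,1})$. Now take the positive measurement $\mu:\{1\}\to\Li(\mbb{R}^2)$ with $\mu(1)=\E_{1,1}$. Then
\[
(\Phi^*\circ\mu)(1)=\begin{pmatrix} 1 & 1\\ -1 & 0\end{pmatrix}.
\]
Its quadratic form is $x_1^2\ge 0$, which is consistent with your duality step. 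But it is not symmetric, so it does not lie in $\Pos(\mbb{R}^2)$.

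So this proposition holds only for $\mbb{F}=\mbb{C}$, and so does the paper's remark that adjoints of Hermitian preserving or positive maps keep these properties. Over $\mbb{R}$ it holds under an extra hypothesis, such as $\Phi(X^T)=\Phi(X)^T$ for all $X$. Complete positivity gives this, via a Kraus representation. Under such a hypothesis your argument goes through verbatim.
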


\section{Characterizations of $\Omega$-Phase Retrievable Super Operators and $\Omega$-phase retrievable Quantum Measurements}\label{Sec: Xter PRMSO & PRQM}

Using the definitions from the previous section, we now state some basic results. The first is related to and inspired by \cite[Prop. 2.2]{heinosaari2013quantum}.

\begin{lemma} \label{lem: injectivity}
    Let $\mcl{X}$ and $\mcl{Y}$ be $\mbb{F}$-Euclidean spaces, $\Phi : \Li(\mathcal{X}) \to \Li(\mathcal{Y})$ a super operator, and $\Omega \subseteq \Li(\mathcal{X})$ a subset. The following are equivalent:
\begin{enumerate}[(a)]
    \item $\Phi$ is injective on $\Omega$.
    \item $(\Omega - \Omega) \cap \range(\Phi^*)^\perp = \{0\}$.
\end{enumerate}
\end{lemma}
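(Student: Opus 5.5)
The plan is to reduce injectivity of $\Phi$ on $\Omega$ to a statement about the kernel of $\Phi$, and then identify that kernel with $\range(\Phi^*)^\perp$ by the standard finite-dimensional duality. Concretely, we first record that, with respect to the Frobenius inner products on $\Li(\mcl{X})$ and $\Li(\mcl{Y})$, we have $\ker(\Phi) = \range(\Phi^*)^\perp$. Indeed, for $X \in \Li(\mcl{X})$,
\[
X \in \range(\Phi^*)^\perp \iff \pair{\Phi^*(Y)}{X} = 0 \ \forall Y \in \Li(\mcl{Y}) \iff \pair{Y}{\Phi(X)} = 0 \ \forall Y \in \Li(\mcl{Y}) \iff \Phi(X) = 0,
\]
where the last step uses $Y = \Phi(X)$ and positive definiteness of the Frobenius inner product. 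This holds verbatim over $\mbb{F} = \mbb{R}$ or $\mbb{C}$, since the adjoint is defined by $\pair{Y}{\Phi(X)} = \pair{\Phi^*(Y)}{X}$ in either case.

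With this identity, condition (b) says exactly that no nonzero element of $\Omega - \Omega = \{\rho - \sigma : \rho,\sigma \in \Omega\}$ lies in $\ker(\Phi)$. For (a) $\Rightarrow$ (b), take $Z = \rho - \sigma \in (\Omega-\Omega)\cap\range(\Phi^*)^\perp$ with $\rho,\sigma\in\Omega$. Then $\Phi(Z)=0$, so by linearity $\Phi(\rho)=\Phi(\sigma)$, and injectivity on $\Omega$ gives $\rho=\sigma$, that is, $Z=0$. Note that $0 = \rho-\rho$ always belongs to the intersection when $\Omega \neq \emptyset$, so the intersection is exactly $\{0\}$. In the degenerate case $\Omega=\emptyset$ the intersection is empty, and we would read (b) as ``contains no nonzero element''.

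For (b) $\Rightarrow$ (a), suppose $\rho,\sigma\in\Omega$ with $\Phi(\rho)=\Phi(\sigma)$. Then $\rho-\sigma\in\ker\Phi=\range(\Phi^*)^\perp$ and $\rho-\sigma\in\Omega-\Omega$, so $\rho-\sigma=0$ by (b).

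There is no real obstacle here. The only point needing care is the kernel–orthocomplement identity, in particular checking that the conjugate-linear convention of the inner product causes no issue; the display above handles this directly.
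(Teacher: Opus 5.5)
Your proof is correct and follows the same route as the paper: identify $\ker(\Phi) = \range(\Phi^*)^\perp$, then observe that injectivity on $\Omega$ is equivalent to $(\Omega - \Omega)\cap\ker(\Phi) = \{0\}$. Beyond the paper's argument, you verify the kernel identity explicitly and flag the degenerate case $\Omega = \emptyset$, where (a) holds vacuously but (b) literally fails; the paper does not mention this case.
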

\begin{proof} A standard result in linear algebra states that $\ker(\Phi) = \range(\Phi ^ {*}) ^ \perp$. Hence, we find that
$\Phi$ is injective on $\Omega$ if and only if
$(\Omega - \Omega) \cap \ker(\Phi) = \{0\} $
if and only if
$(\Omega - \Omega) \cap \range(\Phi^*)^\perp = \{0\} $.
\end{proof}
The next result generalizes and strengthens \cite[Prop. 1.4]{liu2023phase}. The generalization is provided by parts (a) and (b), which extend the conclusions to a broader class of sets and super operators. The improvement appears in part (c), which establishes an additional equivalence that complements and sharpens the statements in parts (a) and (b). The statement used in part (c) was motivated by the statement of \cite[Prop. 2]{heinosaari2013quantum}. This reinforces the usefulness of tools from Quantum Information Theory in extending some of the results of \cite{liu2023phase} to more general settings.

\begin{theorem} \label{thm: injectivity and phase retrievability of a quantum channel}
     Let $\mcl{X}$ and $\mcl{Y}$ be $\mbb{F}$-Euclidean spaces, $\Phi : \Li(\mathcal{X}) \to \Li(\mathcal{Y})$ a super operator, and $\Omega \subseteq \Li(\mathcal{X})$ a subset. The following are equivalent:
    \begin{enumerate}
        \item[(a)] $\Phi$ is injective on $\Omega$.

        \item[(b)] There exists a finite set $\Sigma$ and a measurement $\mu : \Sigma \to \Li(\mcl{Y})$ such that $\Phi ^ {*} \circ \mu$ is $\Omega$-phase retrievable.

        \item[(c)] If $\nu : \Sigma \to \Li(\mcl{Y})$ is a measurement that satisfies $\Phi ^ {*} (\spa\{\nu(a)\}_{a \in \Sigma}) = \range(\Phi ^ {*})$, then $\Phi ^ {*} \circ \nu$ is $\Omega$-phase retrievable.
        
    \end{enumerate}
\end{theorem}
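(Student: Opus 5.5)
The plan is to prove (a) $\Rightarrow$ (c) $\Rightarrow$ (b) $\Rightarrow$ (a), with Lemma \ref{lem: injectivity} doing most of the work. Throughout we use the identity
\[
\pair{(\Phi^*\circ\nu)(a)}{\rho}=\pair{\nu(a)}{\Phi(\rho)},
\]
so that $\Phi^*\circ\nu$ fails to be $\Omega$-phase retrievable exactly when some nonzero $\rho-\sigma$ with $\rho,\sigma\in\Omega$ is orthogonal to every $\Phi^*\nu(a)$.

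\textbf{(a) $\Rightarrow$ (c).} Let $\nu$ satisfy $\Phi^*(\spa\{\nu(a)\}_{a\in\Sigma})=\range(\Phi^*)$. Take $\rho,\sigma\in\Omega$ with $\pair{\Phi^*\nu(a)}{\rho}=\pair{\Phi^*\nu(a)}{\sigma}$ for all $a$. Then $\rho-\sigma$ is orthogonal to each $\Phi^*\nu(a)$. By linearity (conjugate-linearity in the first slot is harmless for orthogonality), it is orthogonal to $\Phi^*(\spa\{\nu(a)\})=\range(\Phi^*)$. Hence $\rho-\sigma\in(\Omega-\Omega)\cap\range(\Phi^*)^\perp$, which is $\{0\}$ by Lemma \ref{lem: injectivity}. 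So $\rho=\sigma$.

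\textbf{(c) $\Rightarrow$ (b).} We only need one measurement satisfying the span hypothesis. Take $\Sigma$ indexing a basis of $\Li(\mcl{Y})$, for instance $\{\E_{i,j}\}$ after fixing an orthonormal basis of $\mcl{Y}$, and let $\nu$ enumerate that basis. Then $\spa\{\nu(a)\}=\Li(\mcl{Y})$, so $\Phi^*(\spa\{\nu(a)\})=\range(\Phi^*)$ trivially, and (c) gives (b).

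\textbf{(b) $\Rightarrow$ (a).} Suppose $\Phi^*\circ\mu$ is $\Omega$-phase retrievable and $\Phi(\rho)=\Phi(\sigma)$ with $\rho,\sigma\in\Omega$. Then $\pair{\Phi^*\mu(a)}{\rho}=\pair{\mu(a)}{\Phi(\rho)}=\pair{\mu(a)}{\Phi(\sigma)}=\pair{\Phi^*\mu(a)}{\sigma}$ for all $a$, so $\rho=\sigma$.

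There is no real obstacle here. The only points to check are that the adjoint identity holds with the Frobenius inner product as the paper defines it (over both $\mbb{R}$ and $\mbb{C}$), and that orthogonality to a spanning set implies orthogonality to its span even though the inner product is conjugate-linear in the first variable.
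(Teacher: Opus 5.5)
Your proof is correct and is essentially the paper's: the same three implications, with (a)$\Rightarrow$(c) via Lemma~\ref{lem: injectivity} and (b)$\Rightarrow$(a) via $\pair{\Phi^*\mu(a)}{\rho}=\pair{\mu(a)}{\Phi(\rho)}$. The only difference is in (c)$\Rightarrow$(b), where you take the matrix units $\{\E_{i,j}\}$ as the spanning measurement instead of citing an informationally complete quantum measurement; this is legitimate because a ``measurement'' here is any map $\Sigma\to\L(\mcl{Y})$, and it also avoids relying on a complex-case reference when $\mbb{F}=\mbb{R}$.
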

\begin{proof}
    We show that (b) implies (a), (a) implies (c), and (c) implies (b). Observe that part (b) just says that $\Phi$ is $\Omega$-phase retrievable. First, assume (b) to be true and let $\rho$ and $\sigma$ in $\Omega$ be such that $\Phi(\rho) = \Phi(\sigma)$.  
    Let $\mu : \Sigma \to \Li(\mcl{Y})$ be a measurement with $\Phi^{*} \circ \mu : \Sigma \to \Li(\mcl{X})$ being $\Omega$-phase retrievable.  
    Then for any $a \in \Sigma$,
    \[
        \langle (\Phi^{*} \circ \mu)(a), \rho \rangle
        = \langle \mu(a), \Phi(\rho) \rangle
        = \langle \mu(a), \Phi(\sigma) \rangle
        = \langle (\Phi^{*} \circ \mu)(a), \sigma \rangle.
    \]
    Hence, $\rho = \sigma$ and this proves (a).\par 
    Second, suppose (a) holds. Let $\nu: \Sigma \to \Li(\mcl{Y})$ be a measurement with $\Phi ^ {*} (\spa\{\nu(a)\}_{a \in \Sigma}) \allowbreak = \range(\Phi ^ {*})$. Let $\rho$, $\sigma \in \Omega$ satisfy: for every $a \in \Sigma$, $\pair{\Phi^{*} \circ \nu(a)}{\rho} = \pair{\Phi^{*} \circ \nu(a)}{\sigma}$. Then, $\pair{\Phi^{*}(X)}{\rho - \sigma} = 0$ for every $X \in \spa\{\nu(a)\}_{a \in \Sigma}$. Hence, it follows that $\rho - \sigma \in (\Omega - \Omega) \cap \range(\Phi^{*}) ^ \perp$. By (a), we conclude from Lemma \ref{lem: injectivity} that $\rho = \sigma$. Hence, (c) is true.\\
    Finally, we assume that condition (c) is true and choose a finite set $\Sigma$ for which $\mu : \Sigma \to \Li(\mcl{Y})$ is an \emph{informationally-complete} measurement. We know that this measurement exists by \cite[Examples 2.7 and 2.45]{watrous2018theory}. This means that $\Phi ^ {*} (\spa \{\mu(a)\}_{a \in \Sigma}) = \Phi ^ {*} (\Li(\mcl{Y})) = \range(\Phi ^ {*})$. Applying the assumption, we have that $\Phi ^ {*} \circ \mu$ is $\Omega$-phase retrievable, and so (b) holds.
\end{proof}

\begin{remark} \label{rem: Special case of P-phase retrievable quantum channels}
In the special case where \(\Omega = \mcl{P}_k\) and the super operator $\Phi$ is injective on $\Omega_k$, we say $\Phi$ is \emph{$k$-phase retrievable}. Observe that the statement of Theorem \ref{thm: injectivity and phase retrievability of a quantum channel} implies that a super operator \(\Phi:\L(\mcl{X})\to \L(\mcl{Y})\) is \(k\)-phase retrievable if and only if for all \(\rho, \sigma \in \Pos(\mcl{X})\) of rank at most \(k \), \(\rho \neq \sigma\) implies \(\Phi(\rho -\sigma) \neq 0\). If $\Phi$ is injective on $\D_k$, we say it is $\D_k$-phase retrievable. If $\Phi$ is a quantum channel that is injective on $\mcl{P}_1$, we call $\Phi$ a phase retrievable quantum channel. This generalizes the definition in \cite[Def. 1.1]{liu2023phase} from pure states to the full set of positive semidefinite operators of rank at most one. 
\end{remark}

\begin{theorem} \label{thm: equivalence of P-phase retrievable measurements and channels}
    Let $\Sigma$ be a finite set, and $\mcl{X}$ an $\mbb{F}$-Euclidean space. For a fixed set $\Omega \subseteq \Li(\mcl{X})$, a measurement $\mu : \Sigma \to \Li(\mcl{X})$ is $\Omega$-phase retrievable if and only if the super operator $\Psi_\mu : \Li(\mcl{X}) \to \Li(\mbb{F}^\Sigma)$   defined by
    
    \begin{equation}  \label{eqn: measurement SO}
    \Psi_\mu (X) = \sum_{a\in \Sigma} \pair{\mu(a)}{X}\E_{a,a}
    \end{equation}
    for all $X \in \Li(\mcl{X})$ is $\Omega$-phase retrievable.
\end{theorem}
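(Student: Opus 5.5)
The plan is to reduce both directions to Theorem \ref{thm: injectivity and phase retrievability of a quantum channel}. That theorem shows that $\Omega$-phase retrievability of a super operator is equivalent to injectivity on $\Omega$. So it suffices to prove that $\mu$ is $\Omega$-phase retrievable if and only if $\Psi_\mu$ is injective on $\Omega$.

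The key computation is the following. For $X \in \Li(\mcl{X})$, the operator $\Psi_\mu(X)$ is diagonal, and its $(a,a)$ entry is $\pair{\mu(a)}{X}$. More precisely,
\[
\pair{\E_{a,a}}{\Psi_\mu(X)} = \sum_{b\in\Sigma} \pair{\mu(b)}{X}\pair{\E_{a,a}}{\E_{b,b}} = \pair{\mu(a)}{X},
\]
since $\{\E_{a,a}\}$ is orthonormal in the Frobenius inner product. From this, $\Psi_\mu(\rho)=\Psi_\mu(\sigma)$ if and only if $\pair{\mu(a)}{\rho}=\pair{\mu(a)}{\sigma}$ for all $a\in\Sigma$. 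Note that a diagonal operator is zero exactly when all its diagonal entries vanish.

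Assume first that $\mu$ is $\Omega$-phase retrievable, and let $\rho,\sigma\in\Omega$ with $\Psi_\mu(\rho)=\Psi_\mu(\sigma)$. The equivalence above gives equal measurement values, so $\rho=\sigma$. Hence $\Psi_\mu$ is injective on $\Omega$, and by Theorem \ref{thm: injectivity and phase retrievability of a quantum channel} ((a)$\Rightarrow$(b)) it is $\Omega$-phase retrievable. One can also see this directly: take the measurement $\nu:\Sigma\to\Li(\mbb{F}^\Sigma)$, $\nu(a)=\E_{a,a}$. Then $\pair{\Psi_\mu^*(\E_{a,a})}{X}=\pair{\E_{a,a}}{\Psi_\mu(X)}=\pair{\mu(a)}{X}$ for all $X$, so $\Psi_\mu^*\circ\nu=\mu$, which is $\Omega$-phase retrievable by assumption.

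Conversely, if $\Psi_\mu$ is $\Omega$-phase retrievable, Theorem \ref{thm: injectivity and phase retrievability of a quantum channel} ((b)$\Rightarrow$(a)) gives injectivity on $\Omega$. Now let $\rho,\sigma\in\Omega$ have equal values $\pair{\mu(a)}{\cdot}$ for all $a$. Then $\Psi_\mu(\rho)=\Psi_\mu(\sigma)$, so $\rho=\sigma$, and $\mu$ is $\Omega$-phase retrievable.

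There is no real obstacle here. The only care needed is with the conjugate-linear convention of the inner product when computing $\Psi_\mu^*(\E_{a,a})=\mu(a)$. This identity follows because $\pair{\E_{a,a}}{\Psi_\mu(X)}=\pair{\mu(a)}{X}$ holds for every $X$, and the right-hand side is linear in $X$, consistent with the convention.
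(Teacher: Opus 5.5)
Your proposal is correct and follows essentially the same route as the paper. Both arguments read off the diagonal entries of $\Psi_\mu$, so that equal images are equivalent to equal measurement values, and then pass through injectivity on $\Omega$ via Theorem \ref{thm: injectivity and phase retrievability of a quantum channel}. Your extra check that $\Psi_\mu^*\circ\nu=\mu$ for $\nu(a)=\E_{a,a}$ is a direct witness for the forward direction that the paper does not spell out.
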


\begin{proof}
    Assume $\mu$ is $\Omega$-phase retrievable. We show that $\Psi_{\mu}$ is injective on $\Omega$. For this, assume $\rho$, $\sigma \in \Omega$ with $\Psi_\mu(\rho) = \Psi_\mu(\sigma)$. Then
    \[
    \sum_{a \in \Sigma} \pair{\mu(a)}{\rho}\E_{a,a} = \sum_{a \in \Sigma}\pair{\mu(a)}{\sigma}\E_{a,a}.
    \]
    Since on both sides of the equality we are diagonal operators, it follows that for every $a \in \Sigma$, $\pair{\mu(a)}{\rho} = \pair{\mu(a)}{\sigma}$. Hence $\rho = \sigma$ since $\mu$ is $\Omega$-phase retrievable. This shows that $\Psi_\mu$ is injective on $\Omega$ so that it is $\Omega$-phase retrievable by Theorem \ref{thm: injectivity and phase retrievability of a quantum channel}.

    On the other hand, suppose $\Psi_\mu$ is $\Omega$-phase retrievable. From Theorem \ref{thm: injectivity and phase retrievability of a quantum channel}, it follows that $\Psi_\mu$ is injective on $\Omega$. Let $\rho$, $\sigma \in \Omega$ be such that for every $a \in \Sigma$, $\pair{\mu(a)}{\rho} = \pair{\mu(a)}{\sigma}$. Then we have
    \[
    \Psi_\mu (\rho) = \sum_{a \in \Sigma} \pair{\mu(a)}{\rho} \E_{a,a} = \sum_{a \in \Sigma} \pair{\mu(a)}{\sigma} \E_{a,a} = \Psi_\mu (\sigma).
    \]
    Hence, $\rho = \sigma$.
\end{proof}

The previous theorem provides a basis for the characterization of a $\mcl{P}$-phase retrievable quantum measurement in terms of the $\mcl{P}$-phase retrievability of the corresponding quantum channel.

\begin{corollary} \label{cor: equivalence of k-phase retrievable measurements and channels}
    Let $\Sigma$ be a finite set, $\mcl{X}$ an $\mbb{F}$-Euclidean space, and $\mcl{P} \subseteq \Pos(\mcl{X})$. A quantum measurement $\mu : \Sigma \to \Pos(\mcl{X})$ is $\mcl{P}$-phase retrievable if and only if the quantum-to-classical channel $\Psi_\mu : \Li(\mcl{X}) \to \Li(\mbb{C} ^ \Sigma)$ defined as
    \[
     \Psi_\mu (X) = \sum_{a \in \Sigma} \pair{\mu(a)}{X} \E_{a,a}
    \]
    for all $X \in \Li(\mcl{X})$, is $\mcl{P}$-phase retrievable.
\end{corollary}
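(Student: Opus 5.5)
This is a direct specialization of Theorem \ref{thm: equivalence of P-phase retrievable measurements and channels}. I would take $\Omega = \mcl{P} \subseteq \Pos(\mcl{X}) \subseteq \Li(\mcl{X})$; the theorem holds for an arbitrary subset $\Omega$ of $\Li(\mcl{X})$, so this choice is allowed. The theorem then says that $\mu$ is $\mcl{P}$-phase retrievable if and only if the super operator $\Psi_\mu$ of \eqref{eqn: measurement SO} is $\mcl{P}$-phase retrievable. The statement of the corollary follows at once, provided the $\Psi_\mu$ appearing there really is the quantum-to-classical channel named in the corollary.

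The only remaining task is therefore to confirm that, when $\mu$ is a quantum measurement, $\Psi_\mu$ is a quantum-to-classical channel, i.e.\ completely positive and trace preserving. This is exactly the correspondence \cite[Thm. 2.37]{watrous2018theory} recalled in Section \ref{Sec: Notation and Prelim}. It can also be checked directly in two steps.

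\begin{enumerate}[(i)]
\item \emph{Trace preservation.} For every $X$,
\[
\Tr \Psi_\mu(X) = \sum_a \pair{\mu(a)}{X} = \Bpair{\sum_a \mu(a)}{X} = \pair{\mathbbm{1}_{\mcl{X}}}{X} = \Tr X .
\]
\item \emph{Complete positivity.} Write each $\mu(a) = \sum_j v_{a,j} v_{a,j}^*$ by its spectral decomposition, which is possible since $\mu(a) \ge 0$. Then
\[
\Psi_\mu(X) = \sum_{a,j} (e_a v_{a,j}^*)\, X\, (e_a v_{a,j}^*)^*
\]
is a Kraus representation, so $\Psi_\mu \in \mathrm{CP}(\mcl{X}, \mbb{F}^\Sigma)$.
\end{enumerate}

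With this identification in place, both directions of the equivalence are literally the two directions of Theorem \ref{thm: equivalence of P-phase retrievable measurements and channels}. There is no real obstacle. The one point needing care is the notational mismatch that the corollary writes $\Li(\mbb{C}^\Sigma)$ while the theorem uses $\Li(\mbb{F}^\Sigma)$. In the complex case these coincide. In the real case one can either read the target as $\mbb{F}^\Sigma$ throughout, or note that $\mcl{P}$-injectivity of $\Psi_\mu$ depends only on the scalars $\pair{\mu(a)}{X}$. Those scalars are unchanged when the diagonal target is regarded inside $\Li(\mbb{C}^\Sigma)$, so by Theorem \ref{thm: injectivity and phase retrievability of a quantum channel} the notion of $\mcl{P}$-phase retrievability is unaffected.
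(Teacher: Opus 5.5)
Your proposal is correct and follows the paper's route. The paper gives no separate proof: the corollary is the specialization $\Omega = \mcl{P}$ of Theorem \ref{thm: equivalence of P-phase retrievable measurements and channels}, and the identification of $\Psi_\mu$ as a quantum-to-classical channel is taken from \cite[Thm. 2.37]{watrous2018theory} in Section \ref{Sec: Notation and Prelim}. Your direct check of trace preservation and of the Kraus form $\sum_{a,j}(e_a v_{a,j}^*)X(e_a v_{a,j}^*)^*$ is valid, and it simply makes that identification self-contained, including over $\mbb{R}$.
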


We state characterizations for $\mcl{P}$-phase retrievability of quantum measurements. The corollary that follows gives a characterization of $\mcl{P}$-phase retrievable, positive, trace preserving super operators.

\theorem \label{thm: k-phase retrivable Q-measurements}

Let $k \in \mbb{N}$, $\mcl{X}$ an $\mbb{F}$-Euclidean space, $\Sigma$ a finite set, $\mu : \Sigma \to \Pos(\mcl{X})$ a quantum measurement, and $\mcl{P} \subseteq \Pos(\mcl{X})$. The following are equivalent:
\begin{enumerate}[(a)]
\item\label{k-PRQM-a} The measurement $\mu$ is $\mcl{P}$-phase retrievable.

\item\label{k-PRQM-b} For any $\rho, \sigma \in \mcl{P}$, if the vectors $\big( \pair{\mu(a)}{\rho} \big)_{a\in\Sigma}$ and $\big( \pair{\mu(a)}{\sigma} \big)_{a\in\Sigma}$ are linearly dependent, then $\rho$ and $\sigma$ are also linearly dependent.

\item \label{k-PRQM-c} For any $\rho, \sigma \in \mcl{P}$,  
\[
\sum_{a\in \Sigma} \sqrt{\pair{\mu(a)}{\rho}} \sqrt{\pair{\mu(a)}{\sigma}}  = \sqrt{\n{\rho}_1\n{\sigma}_1}
\]
implies $\rho$ and $\sigma$ are linearly dependent.
\end{enumerate}

\begin{proof}
(\ref{k-PRQM-a})$\implies$(\ref{k-PRQM-b}): Assuming (\ref{k-PRQM-a}), suppose that $\big( \pair{\mu(a)}{\rho} \big)_{a\in\Sigma}$ and $\big( \pair{\mu(a)}{\sigma} \big)_{a\in\Sigma}$ are linearly dependent. Then there exists $\lambda\in \mbb{F}$ such that for every $a\in\Sigma$, $\pair{\mu(a)}{\rho}=\lambda\pair{\mu(a)}{\sigma}$. We know $\pair{\mu(a)}{\rho},\pair{\mu(a)}{\sigma} \geq 0$. If for every $a \in \Sigma$, $\pair{\mu(a)}{\sigma} = 0$, then taking a sum over $a \in \Sigma$, we have
\[
0=\sum_{a\in\Sigma}\pair{\mu(a)}{\sigma}= \Big\langle \sum_{a\in\Sigma}\mu(a), \sigma \Big \rangle=\pair{\mathbbm{1}_{\mcl{X}}}{\sigma}=\Tr(\sigma)=\n{\sigma}_1.
\] 
Therefore, $\sigma=0$, which implies that $\rho$ and $\sigma$ are linearly dependent. If $\pair{\mu(a)}{\sigma}>0$, then $\lambda \ge 0$, and we have $\pair{\mu(a)}{\rho} = \pair{\mu(a)}{\vartheta}$ for all $a \in \Sigma$, where $\vartheta = \lambda \sigma$. By (\ref{k-PRM-a}), we have $\rho= \vartheta =\lambda \sigma$, which implies that $\rho$ and $\sigma$ are linearly dependent.
\\

(\ref{k-PRQM-b}) $\implies$ (\ref{k-PRQM-c}): Suppose (\ref{k-PRQM-b}) holds. Let $\rho, \sigma\in \mcl{P}$ be such that 
    \[
\sum_{a\in \Sigma} \sqrt{\pair{\mu(a)}{\rho}} \sqrt{\pair{\mu(a)}{\sigma}}  = \sqrt{\n{\rho}_1\n{\sigma}_1}
\] 
By the Cauchy-Schwarz inequality, 
    \[
    \sum_{a \in \Sigma}\sqrt{\pair{\mu(a)}{\rho}}\sqrt{\pair{\mu(a)}{\sigma}} \leq \Bigl( \sum_{a \in \Sigma}\pair{\mu(a)}{\rho} \Bigr)^\frac{1}{2}\Bigl( \sum_{a \in \Sigma}\pair{\mu(a)}{\sigma} \Bigr)^\frac{1}{2}=\sqrt{\n{\rho}_1\n{\sigma}_1}
    \]
    where the above equality  follows because 
    \[
    \sum_{a \in \Sigma}\pair{\mu(a)}{\rho}=\Big \langle \sum_{a \in \Sigma}\mu(a),\rho \Big \rangle = \pair{\mathbbm{1}_{\mcl{X}}}{\rho} = \Tr(\rho) = \n{\rho}_1.
    \]
    Similarly, 
    \[
    \sum_{a \in \Sigma}\pair{\mu(a)}{\sigma} =  \n{\sigma}_1.
    \]
    But we have the equality condition in the assumption, and it follows that:
    \[
    \sum_{a \in \Sigma}\sqrt{\pair{\mu(a)}{\rho}}\sqrt{\pair{\mu(a)}{\sigma}} = \Bigl( \sum_{a \in \Sigma}\pair{\mu(a)}{\rho} \Bigr)^\frac{1}{2}\Bigl( \sum_{a \in \Sigma}\pair{\mu(a)}{\sigma} \Bigr)^\frac{1}{2}
    \]
    which by the Cauchy-Schwarz inequality holds if and only if $\sqrt{\pair{\mu(a)}{\rho}}$ and $\sqrt{\pair{\mu(a)}{\sigma}}$ are linearly dependent. Let $\alpha \in \mbb{F}$ be such that $\sqrt{\pair{\mu(a)}{\rho}} = \alpha \sqrt{\pair{\mu(a)}{\sigma}}$ for all $a\in \Sigma$. Then $\pair{\mu(a)}{\rho} = \alpha^2 \pair{\mu(a)}{\sigma}$ for every $a \in \Sigma$, so that $(\pair{\mu(a)}{\rho})_{a\in \Sigma}$ and $(\pair{\mu(a)}{\sigma})_{a\in \Sigma}$ are linearly dependent for all $a \in \Sigma$. Then it follows from (\ref{k-PRQM-b}) that $\rho$ and $\sigma$ are linearly dependent, as required.
    \\

    (\ref{k-PRQM-c})$\implies$(\ref{k-PRQM-a}): Assume that $\rho, \sigma \in  \mcl{P}$ are such that for all $a \in \Sigma$, $\pair{\mu(a)}{\rho} = \pair{\mu(a)}{\sigma}$. Then $\n{\rho}_1 = \n{\sigma}_1$. Therefore, $\rho = 0$ if and only if $\sigma = 0$ and in this case, $\rho = \sigma$. We consider the case where $\rho, \sigma \in \mcl{P} \setminus \{0\}$. We have the equalities
    \[
    \sum_{a \in \Sigma} \sqrt{\pair{\mu(a)}{\rho}} \sqrt{\pair{\mu(a)}{\sigma}} = \sum_{a \in \Sigma} \pair{\mu(a)}{\rho} = \Tr(\rho) = \n{\rho}_1= \sqrt{\n{\rho}_1 \n{\sigma}_1}
    \]
    By (\ref{k-PRQM-c}), we conclude that $\rho = \lambda \sigma$ for some $\lambda \in \mbb{F}$. Then we have $\Tr(\rho) = \lambda \Tr(\sigma)$. Hence, $\n{\rho}_1  = \lambda \n{\sigma}_1$. Therefore, $\lambda = 1$ and this implies $\rho = \sigma$, as required. This completes the proof.
\end{proof}

Theorem \ref{thm: k-phase retrivable Q-measurements} gives us a way to, when possible, connect a quantum measurement $\mu$ to a super operator $\Phi$. This follows since $\Phi ^ {*} \circ \mu$ is also a quantum measurement when the composition is valid and $\Phi$ is a positive and trace preserving super operator. Having found equivalent conditions for the $\mcl{P}$-phase retrievability of a positive, trace preserving super operators, we can freely apply these conditions to know when a such operators are $\mcl{P}$-phase retrievable.

\begin{corollary}\label{cor:k-phase retrievable ptp operators} Let $k \in \mbb{N}$, $\mcl{X}$ and $\mcl{Y}$ $\mbb{F}$-Euclidean spaces, and $\Sigma$ a finite set. Let $\Phi : \Li(\mcl{X}) \to \Li(\mcl{Y})$ be a positive, trace-preserving super operator, and $\mu:\Sigma \to \Pos(\mcl{Y})$ a quantum measurement. The following are equivalent:

\begin{enumerate}[(a)]
\item \label{k-PRptp-a} For every $\rho, \sigma \in \mcl{P}$, having $\pair{\mu(a)}{\Phi(\rho)} = \pair{\mu(a)}{\Phi(\sigma)}$ for all $a \in \Sigma$ implies $\rho = \sigma$.

\item \label{k-PRptp-b} For every $\rho, \sigma \in \mcl{P}$, if the vectors $\big( \pair{\mu(a)}{\Phi(\rho)} \big)_{a\in\Sigma}$ and $\big( \pair{\mu(a)}{\Phi(\sigma)} \big)_{a\in\Sigma}$ are linearly dependent, then $\rho$ and $\sigma$ are also linearly dependent.

\item \label{k- PRptp-c} For every $\rho, \sigma \in \mcl{P}$,  
\[
\sum_{a\in \Sigma} \sqrt{\pair{\mu(a)}{\Phi(\rho)}} \sqrt{\pair{\mu(a)}{\Phi(\sigma)}}  = \sqrt{\n{\rho}_1\n{\sigma}_1}
\]
imply that $\rho$ and $\sigma$ are linearly dependent.

\end{enumerate}
\end{corollary}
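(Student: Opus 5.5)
The plan is to reduce the corollary to Theorem \ref{thm: k-phase retrivable Q-measurements} by applying it to the composed measurement $\nu = \Phi^*\circ\mu : \Sigma \to \L(\mcl{X})$. The first step is to check that $\nu$ is a quantum measurement, which is what that theorem requires.

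\begin{itemize}
\item[] \emph{Positivity.} Since $\Phi$ is positive, $\Phi^*$ is positive, so each $\nu(a)=\Phi^*(\mu(a))\in\Pos(\mcl{X})$ by Proposition \ref{prop: composed meas Pos}.
\item[] \emph{Summing to the identity.} Trace preservation of $\Phi$ means $\pair{\mathbbm{1}_{\mcl{Y}}}{\Phi(X)} = \pair{\mathbbm{1}_{\mcl{X}}}{X}$ for every $X\in\L(\mcl{X})$, i.e.\ $\Phi^*(\mathbbm{1}_{\mcl{Y}})=\mathbbm{1}_{\mcl{X}}$. By linearity of $\Phi^*$,
\[
\sum_{a}\nu(a)=\Phi^*\Big(\sum_a \mu(a)\Big)=\Phi^*(\mathbbm{1}_{\mcl{Y}})=\mathbbm{1}_{\mcl{X}}.
\]
\end{itemize}

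The second step rewrites each condition in terms of $\nu$, using the adjoint identity $\pair{\mu(a)}{\Phi(\rho)}=\pair{\Phi^*\mu(a)}{\rho}=\pair{\nu(a)}{\rho}$.

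\begin{itemize}
\item[] Condition (a) of the corollary becomes exactly the statement that $\nu$ is $\mcl{P}$-phase retrievable, which is condition (a) of Theorem \ref{thm: k-phase retrivable Q-measurements}.
\item[] Condition (b) becomes condition (b) of that theorem for $\nu$, since the vectors $\big(\pair{\mu(a)}{\Phi(\rho)}\big)_a$ and $\big(\pair{\nu(a)}{\rho}\big)_a$ coincide.
\item[] Condition (c) becomes condition (c) of the theorem for $\nu$. Here the square roots are well defined because $\pair{\nu(a)}{\rho}\ge 0$ for $\rho\in\Pos(\mcl{X})$ and $\nu(a)\ge0$. The right-hand side $\sqrt{\n{\rho}_1\n{\sigma}_1}$ is the same in both statements.
\end{itemize}

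The equivalence of (a), (b) and (c) then follows directly from Theorem \ref{thm: k-phase retrivable Q-measurements}.

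The only step needing any care is identifying $\Phi^*(\mathbbm{1}_{\mcl{Y}})=\mathbbm{1}_{\mcl{X}}$ from trace preservation, so that the hypothesis $\sum_a\nu(a)=\mathbbm{1}_{\mcl{X}}$ holds. That theorem uses this hypothesis to get $\sum_a\pair{\nu(a)}{\rho}=\n{\rho}_1$, which is what makes the Cauchy–Schwarz bound in (c) an equality case. If one prefers, this can be checked directly: for positive $\rho$,
\[
\sum_a\pair{\mu(a)}{\Phi(\rho)}=\Tr\Phi(\rho)=\Tr\rho=\n{\rho}_1 .
\]
Everything else is substitution.
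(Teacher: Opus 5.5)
Your proposal is correct and is exactly the paper's argument: the paper justifies this corollary by noting that $\Phi^*\circ\mu$ is a quantum measurement when $\Phi$ is positive and trace-preserving, and then reads it off from Theorem \ref{thm: k-phase retrivable Q-measurements} via $\pair{\mu(a)}{\Phi(\rho)}=\pair{\Phi^*\mu(a)}{\rho}$, just as you do. One caveat, which you inherit from that theorem: its step (a)$\Rightarrow$(b) applies (a) to the pair $(\rho,\lambda\sigma)$, so it needs $\mcl{P}$ to be closed under nonnegative scalar multiples (true for $\mcl{P}_k$, which the otherwise unused $k\in\mbb{N}$ suggests is intended), and for arbitrary $\mcl{P}$ the equivalence fails; for example, take $\mcl{X}=\mcl{Y}=\mbb{F}^2$, $\Phi$ the identity, $\Sigma$ a singleton, $\mu\equiv\mathbbm{1}_{\mcl{Y}}$ and $\mcl{P}=\{\E_{1,1},2\E_{2,2}\}$, where (a) holds but (b) and (c) do not.
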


\remark When $k=1$, the Spectral theorem provides that the operators $\rho$ and $\sigma$ that appear in Theorem \ref{thm: k-phase retrivable Q-measurements} and Corollary \ref{cor:k-phase retrievable ptp operators} can be expressed as $\rho = u u ^ *$ and $\sigma = v v ^ *$ for some $u, v \in \mcl{X}$. In this setting, the trace norms of $\rho$ and $\sigma$ that appear in parts (\ref{k-PRQM-c}) of Theorem \ref{thm: k-phase retrivable Q-measurements} and Corollary \ref{cor:k-phase retrievable ptp operators} below reduce to the square of the Euclidean norms of $u$ and $v$, respectively.\\

The next characterization of quantum measurements in this section is a consequence of Theorem \ref{thm: k-phase retrivable Q-measurements}. It is related to a quantity called the \emph{Bhattacharyya coefficient}, which is a statistical measure of the overlap or similarity between two probability distributions and can be thought of as the classical version of \emph{quantum fidelity}\footnote{The fidelity of two quantum states is also a measure of their similarity.}. Following \cite{watrous2018theory}, given a finite set $\Sigma$, and a pair of vectors $u, v \in [0, \infty) ^ \Sigma$ with non-negative entries $u(a)$ and $v(a)$, the Bhattacharyya coefficient $\B(u,v)$ is defined as
\[
\B(u,v) = \sum_{a \in \Sigma}\sqrt{u(a)} \sqrt{v(a)}.
\]
Motivated by the role of the Bhattacharyya coefficient in comparing probability distributions and its established relationship to quantum fidelity \cite[Thm.3.24]{watrous2018theory}, we formulate a dedicated theorem capturing its connection to phase retrievability. We refer to this result as the Bhattacharyya coefficient characterization of a $\mcl{P}$-phase retrievable quantum measurement.

\begin{corollary}
    \label{cor:Bhattacharyya xterization}
Let $\Sigma$ be a finite set, $\mcl{X}$ an $\mbb{F}$-Euclidean space, and $\mcl{P} \subseteq \D(\mcl{X})$. A quantum measurement $\mu : \Sigma \to \Pos(\mcl{X})$ is $\mcl{P}$-phase retrievable if and only if for any $\rho$, $\sigma \in \mcl{P}$, having
\[
\sum_{a \in \Sigma} \sqrt{\pair{\mu(a)}{\rho}} \sqrt{\pair{\mu(a)}{\sigma}} = 1
\]
implies $\rho = \sigma$.
\end{corollary}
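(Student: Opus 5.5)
The plan is to specialize Theorem \ref{thm: k-phase retrivable Q-measurements}, specifically the equivalence of (\ref{k-PRQM-a}) and (\ref{k-PRQM-c}), to the case $\mcl{P} \subseteq \D(\mcl{X})$. The only new input is that every element of $\mcl{P}$ is a state, so $\n{\rho}_1 = \Tr(\rho) = 1$ for all $\rho \in \mcl{P}$. Condition (\ref{k-PRQM-c}) then reads: for $\rho, \sigma \in \mcl{P}$, the Bhattacharyya coefficient $\B\big((\pair{\mu(a)}{\rho})_a, (\pair{\mu(a)}{\sigma})_a\big) = 1$ implies $\rho$ and $\sigma$ are linearly dependent. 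It therefore suffices to show that, for states, ``linearly dependent'' can be replaced by ``equal''.

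For the direction ``$\mu$ is $\mcl{P}$-phase retrievable $\Rightarrow$ the Bhattacharyya condition'', I will argue as follows.
\begin{enumerate}
\item Take $\rho, \sigma \in \mcl{P}$ with coefficient equal to $1$.
\item By Theorem \ref{thm: k-phase retrivable Q-measurements}, (\ref{k-PRQM-a})$\Rightarrow$(\ref{k-PRQM-c}), $\rho$ and $\sigma$ are linearly dependent.
\item Since both are states, neither is zero. So $\rho = \lambda \sigma$ for some scalar $\lambda$.
\item Taking traces gives $1 = \lambda$, hence $\rho = \sigma$.
\end{enumerate}

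For the converse, assume the Bhattacharyya condition. I will verify (\ref{k-PRQM-a}) directly, which avoids passing through linear dependence. Let $\rho, \sigma \in \mcl{P}$ satisfy $\pair{\mu(a)}{\rho} = \pair{\mu(a)}{\sigma}$ for all $a$. These numbers are nonnegative because $\mu$ is positive and $\rho,\sigma$ are positive semidefinite. Hence
\[
\sum_{a}\sqrt{\pair{\mu(a)}{\rho}}\sqrt{\pair{\mu(a)}{\sigma}} = \sum_a \pair{\mu(a)}{\rho} = \pair{\mathbbm{1}_{\mcl{X}}}{\rho} = \Tr(\rho) = 1.
\]
The hypothesis then yields $\rho = \sigma$.

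The argument is essentially bookkeeping, and I do not expect a real obstacle. The only point needing care is the upgrade from linear dependence to equality in the first direction. This relies on states being nonzero with equal trace, so the proportionality constant is forced to be $1$; over $\mbb{C}$ one should note that $\lambda$ is determined by the trace, so no phase ambiguity remains.
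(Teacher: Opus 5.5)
Your proposal is correct and follows the paper's route. The paper presents the corollary as a direct consequence of Theorem~\ref{thm: k-phase retrivable Q-measurements} with $\n{\rho}_1=\n{\sigma}_1=1$; you upgrade linear dependence to equality by comparing traces, and your converse just inlines the (c)$\implies$(a) argument. One small caveat: that theorem's step (a)$\implies$(b) tacitly needs $\lambda\sigma\in\mcl{P}$, which can fail for a general $\mcl{P}\subseteq\Pos(\mcl{X})$ but holds here because summing over $a$ forces $\lambda=1$ for states --- alternatively, prove the forward direction directly, since equality in Cauchy--Schwarz forces the probability vectors $(\pair{\mu(a)}{\rho})_a$ and $(\pair{\mu(a)}{\sigma})_a$ to coincide, and $\mcl{P}$-phase retrievability then gives $\rho=\sigma$.
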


\section{Characterizations of $\mcl{P}_k$-Phase Retrievable Measurements} \label{Sec: Xter PKPRM}

The task of determining when a Hermitian measurement is $\mcl{P}_k$-phase retrievable is frequently non‑trivial. To resolve this, we establish several equivalent criteria for $\mcl{P}_k$-phase retrievability. Later sections illustrate its usefulness through applications. The following well‑known characterization of positive semidefinite operators will be used repeatedly:
``\textit{Given an $\mbb{F}$-Euclidean space $\mcl{X}$, an operator $X \in \Pos(\mcl{X})$ has rank at most $k$ if and only if there exists $U \in \Li(\mbb{F}^{k}, \mcl{X})$ such that $X = UU^*$}.'' This follows immediately from the spectral theorem together with \cite[Prop. 1.7]{watrous2018theory}, applied in the setting of positive semidefinite operators. We begin by proving more general theorems, from which the characterization we require emerges as a direct consequence when applied in a specific setting. The first is an equivalence of the $\mcl{P}_k$-phase retrievability of a super operator in terms of the behavior of some linear operator in its kernel.

\theorem \label{thm: prso p_k} Let $\mcl{X}, \mcl{Y}$ be an $\mbb{F}$-Euclidean spaces, and $\Phi: \Li(\mcl{X}) \to \Li(\mcl{Y})$ a super operator. Then the following are equivalent:
\begin{enumerate}[(a)]
    \item \label{thm: prso p_k a}  The super operator $\Phi$ is $\mcl{P}_k$-phase retrievable.
    
    \item \label{thm: prso p_k b} For any $X,Y \in \Li(\mbb{F}^k, \mcl{X}), \Phi(XY^* +YX^*) = 0$ implies $XY^* + YX^* = 0$.
\end{enumerate}

\begin{proof}
    (\ref{thm: prso p_k a})$\implies$(\ref{thm: prso p_k b}): Assume (\ref{thm: prso p_k a}) is true and $X,Y \in \Li(\mbb{F}^k, \mcl{X})$ are such that $ \Phi(XY^* +YX^*) = 0$. Let $U = \tfrac{1}{2}(X+Y)$ and $V = \tfrac{1}{2}(X-Y)$. Then $UU^* - VV^* = XY^* + YX^*$. Now $UU^*, VV^* \in \mcl{P}_k$. Therefore, $\Phi(UU^*) = \Phi(VV^*)$. By (\ref{thm: prso p_k a}), $UU^* = VV^*$, and this implies $XY^* + YX^* = 0$.
    \\
    
    (\ref{thm: prso p_k b})$\implies$(\ref{thm: prso p_k a}): Assume (\ref{thm: prso p_k b}) holds. Let $\rho, \sigma \in \mcl{P}_k$ be such that $\Phi(\rho) = \Phi(\sigma)$. Let $U, V \in \Li(\mbb{F}^k, \mcl{X})$ satisfy $\rho = UU^*$ and $\sigma = VV^*$. Let $X = \tfrac{1}{2}(U+V)$ and $Y = \tfrac{1}{2}(U-V)$. Then $XY^*+YX^* = \rho - \sigma$ and $\Phi(XY^* + YX^*) =0$. By (\ref{thm: prso p_k b}), $\rho - \sigma = 0$.
\end{proof}

Theorem \ref{thm: prso p_k} provides a natural extension of the result in \cite[Lem. 2.7]{liu2023phase}, showing that the same conclusion holds for $\mcl{P}_k$ phase retrievable super operators in general. The second result identifies additional equivalent conditions of $\mcl{P}_k$-phase retrievability of a super operator under the assumption of being Hermitian-preserving.

\theorem \label{thm: Herm-pres} Let $\mcl{X}$ and $\mcl{Y}$ be $\mbb{F}$-Euclidean spaces. Let $\Phi: \Li(\mcl{X}) \to \Li(\mcl{Y})$ be a Hermitian-preserving super operator. The following are equivalent.

\begin{enumerate}[(a)] 
    \item \label{thm: Herm-pres a} The super operator $\Phi$ is $\mcl{P}_k$-phase retrievable.
    \item \label{thm: Herm-pres b} For any $X, Y \in \L(\mbb{F}^k, \mcl{X})$, $\Herm \Phi(XY^*) = 0$ implies $XY^* + YX^* = 0$.

    \item \label{thm: Herm-pres c} For any $U, V \in \L(\mbb{F}^k, \mcl{X})$, $\Herm \Phi[(U+V)(U-V)^*] = 0$ implies $UU^* = VV^*$.
    
\end{enumerate}

\begin{proof}
    (\ref{thm: Herm-pres a})$\implies$(\ref{thm: Herm-pres b}): Suppose that (\ref{thm: Herm-pres a}) is true. Let $X, Y \in \Li(\mbb{F}^k, \mcl{X})$ be such that $\Herm \Phi(XY^*)= 0$. Then 
    \[
    \Phi(XY^*+ YX^*) = \Phi(XY^*) + \Phi(YX^*) = \Phi(XY^*) + \Phi(XY^*)^* = 2 \Herm \Phi(XY^*) = 0.
    \]
    The second equality follows from the Hermitian-preserving property of $\Phi$ as outlined in \cite[Thm. 2.25]{watrous2018theory}. From Theorem \ref{thm: prso p_k}, it follows that $XY^* + YX^* = 0$.
    \\
    
    (\ref{thm: Herm-pres b})$\implies$(\ref{thm: Herm-pres a}): Assume statement (\ref{thm: Herm-pres b}) is true and $X,Y \in \Li(\mbb{F}^k, \mcl{X})$ satisfy $\Phi(XY^* + YX^*) = 0$. Then 
    \[
    2 \Herm \Phi(XY^*) = \Phi(XY^*) + \Phi(XY^*)^* = \Phi(XY^*) + \Phi(YX^*) = \Phi(XY^* + YX^*) = 0.
    \]
    From (\ref{thm: Herm-pres b}), it holds that $XY^* + YX^* = 0$. By Theorem \ref{thm: prso p_k}, $\Phi$ is $\mcl{P}_k$-phase retrievable.
    \\
    
    (\ref{thm: Herm-pres b})$\iff$(\ref{thm: Herm-pres c}): Assume that (\ref{thm: Herm-pres b}) holds and $U,V \in \Li(\mbb{F}^k, \mcl{X})$ satisfy $\Herm \Phi[(U+V)(U-V)^*]=0$. Taking $X = U+V$ and $Y = U-V$, then $\Herm (XY^*) = 0$. Applying the assumption (\ref{thm: Herm-pres b}) it follows that  $XY^* + YX^* = 2(UU^* - VV^*) = 0$. Conversely, if the statement in (\ref{thm: Herm-pres c}) holds, then an analogous argument of shows that (\ref{thm: Herm-pres b}) is true.
\end{proof}
Within the class of Hermitian preserving super operators, Theorem \ref{thm: Herm-pres} furnishes two further equivalent characterizations of $\mcl{P}_k$-phase retrievability, given in parts \eqref{thm: Herm-pres b} and (\ref{thm: Herm-pres c}). In general, these additional criteria are more tractable to check.
We now use the first and second results we have to find conditions for the $\mcl{P}_k$-phase retrievability of a particular Hermitian preserving super operator in terms of the orthogonal complement of the span of a set, its dimension and the rank of some linear map. We note that given operators $X_1, X_2, \dots, X_n$, where for every $j$, $X_j \in \Li(\mcl{X})$, the real span of the operators is denoted $\spa_\mbb{R}\{X_j\}_{j=1}^n$. The real inner product in $\Li(\mcl{X}, \mcl{Y})$ is defined as $\pair{A}{B}_\mbb{R} = \Re \pair{A}{B} = \Re(\Tr(A^*B))$. For any subspace $V \subseteq \Li(\mcl{X}, \mcl{Y})$, the real dimension of $V$ is denoted $\dim_\mbb{R}(V)$. In this setting, we denote the orthogonal complement of $X \in \Li(\mcl{X}, \mcl{Y})$ by $X^{\perp_\mbb{R}}$. We state two results, one for the complex case, the other for the real case.
To help the reader make sense of the somewhat complicated statements, let us spend a moment relating them to previous work.

In the real case, a beautiful result from \cite{MR2224902} is that a collection $\{y_a\}_{a\in\Sigma}$ is a collection of vectors in the Euclidean space $\mcl{X}$ does phase retrieval (in the sense that the map $x \mapsto \big( |\pair{y_a}{x}| \big)_{a \in \Sigma}$ is injective up to a phase) if and only if it satisfies the \emph{complement property}: for every partition of $\Sigma$ into two disjoint subsets $\Sigma_1,\Sigma_2$, either $\spa_{\mathbb{R}}\{y_a\}_{a\in\Sigma_1} = \mcl{X}$ or $\spa_{\mathbb{R}}\{y_a\}_{a\in\Sigma_2} = \mcl{X}$.
It is not too difficult to see that this property is equivalent to the following \emph{spanning property}: for every $x \in \mcl{X}\setminus\{0\}$, $\spa_\mathbb{R}\{y_ay_a^*x\}_{a\in\Sigma} = \mcl{X}$.
In the complex case, the conditions get more complicated but still depend on the same span.
According to \cite[Thm. 4]{MR3202304}, in the complex case the phase retrievability of $\{y_a\}_{a\in\Sigma}$ is equivalent to either of the following two equivalent conditions:
\begin{enumerate}[(a)]
\item For every $x \in \mcl{X}\setminus\{0\}$,  $\dim_\mathbb{R} \spa_\mathbb{R}\{y_ay_a^*x\}_{a\in\Sigma} \ge 2 \dim_\mathbb{R}\mcl{X} - 1$.
\item For every $x \in \mcl{X}\setminus\{0\}$, $\spa_\mathbb{R}\{y_ay_a^*x\}_{a\in\Sigma}$ is the orthogonal complement of $ \spa_{\mathbb{R}}\{i x\}$ with respect to the real inner product $\Re \pair{\cdot}{\cdot}$ on $\mcl{X}$.
\end{enumerate}

In our language, the phase retrieval property from Frame Theory corresponds to the $\mcl{P}_1$-phase retrievability of the associated super operator $\Li(\mcl{X}) \to \Li(\mbb{F}^\Sigma)$ given by $X \mapsto \sum_{a \in \Sigma} \pair{y_ay_a^*}{X} \E_{a,a}$; note that this super operator takes values in diagonal matrices.
Analogous results in a more general setting are proved in 
\cite[Thms. 2.1 and 2.2]{MR3981280}, once again involving two equivalent properties of a judiciously chosen span: one involving its dimension, and one involving its orthogonal complement.
Translating to our language, these results correspond to $\mcl{P}_1$-phase retrievability of the super operator associated to a more general Hermitian measurement $\{Y_a\}_{a\in \Sigma}$, which again is a super operator taking values in diagonal matrices.

Below, our Theorems \ref{thm: prso and span complex} and \ref{thm: prso and span} more generally characterize the $\mcl{P}_k$-phase retrievability of Hermitian-preserving super operators taking values in block-diagonal matrices.
This is once again done in terms of either the orthogonal complement or the dimension of a certain span.

Before stating the results we need two lemmas with their corresponding corollaries.

\begin{lemma} \label{lem: dimension_complex}
    Let $\mcl{X}$ be a $\mbb{C}$-Euclidean space of dimension $d$, and $X \in \Li(\mcl{X}, \mbb{C}^k)$ a nonzero linear operator of rank $r$. For the subspace $V = \{Y: XY^*+YX^* = 0 \} \subseteq \Li(\mcl{X}, \mbb{C}^k)$ we have $\dim_\mbb{R}(V) = r^2 + 2d(k-r)$.
\end{lemma}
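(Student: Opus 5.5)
The plan is rank–nullity for a real-linear map. One caveat first. For the formula $r^2+2d(k-r)$ to come out, the domain and codomain must be arranged so that $XY^*+YX^*$ is an operator on the $d$-dimensional space $\mcl{X}$. That is, $X,Y\in\Li(\mbb{C}^k,\mcl{X})$, matching the use $X,Y\in\Li(\mbb{F}^k,\mcl{X})$ in Theorems \ref{thm: prso p_k} and \ref{thm: Herm-pres}. I will argue in that setting. With the literal typing $\Li(\mcl{X},\mbb{C}^k)$, the identical argument gives $r^2+2k(d-r)$ instead. For example, $k=r=1$ gives $2d-1$, the familiar count in the Frame Theory case, whereas $r^2+2d(k-r)$ would give $1$. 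So the roles of $d$ and $k$ are simply swapped.

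Consider the $\mbb{R}$-linear map $T:\Li(\mbb{C}^k,\mcl{X})\to\Herm(\mcl{X})$ given by $T(Y)=XY^*+YX^*=2\Herm(XY^*)$. Its kernel is $V$, and its domain has real dimension $2dk$. Hence $\dim_\mbb{R}V=2dk-\dim_\mbb{R}\range(T)$, and the whole task is to identify $\range(T)$ exactly.

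Let $R=\range(X)\subseteq\mcl{X}$, so $\dim R=r$, and let $P$ be the orthogonal projection onto $R$ and $P^\perp=\mathbbm{1}_\mcl{X}-P$.

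First, every $XY^*$ has range inside $R$, so $P^\perp XY^*=0$. Consequently every $H\in\range(T)$ satisfies $P^\perp HP^\perp=0$.

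Conversely, as $Y$ runs over $\Li(\mbb{C}^k,\mcl{X})$, $W=Y^*$ runs over all of $\Li(\mcl{X},\mbb{C}^k)$. Since $X:\mbb{C}^k\to R$ is surjective, the products $XW$ run over all operators $Z\in\Li(\mcl{X})$ with $\range(Z)\subseteq R$. To see this, pick a right inverse $X^+:R\to\mbb{C}^k$ and set $W=X^+Z$.

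Now take any Hermitian $H$ with $P^\perp HP^\perp=0$, and put $Z=\tfrac12 PHP+PHP^\perp$. This $Z$ has range in $R$, and $Z+Z^*=PHP+PHP^\perp+P^\perp HP=H$. Therefore
\[
\range(T)=\{H\in\Herm(\mcl{X}) : P^\perp HP^\perp=0\}.
\]

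The remaining step is a count in a block decomposition $\mcl{X}=R\oplus R^\perp$. The block $PHP$ is an arbitrary Hermitian operator on $R$, contributing real dimension $r^2$. The off-diagonal block $PHP^\perp$ is an arbitrary complex $r\times(d-r)$ matrix, contributing $2r(d-r)$, and it determines $P^\perp HP$. So $\dim_\mbb{R}\range(T)=r^2+2r(d-r)$. Then
\[
\dim_\mbb{R}V=2dk-r^2-2rd+2r^2=r^2+2d(k-r).
\]

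The only step needing care is the surjectivity onto the stated space of Hermitian operators. It rests on the observation that $XW$ exhausts all operators with range in $R$, together with the explicit choice of $Z$; everything else is bookkeeping. The hypothesis $X\neq0$ is not actually needed: $r=0$ gives $V$ equal to the whole space, of dimension $2dk$.
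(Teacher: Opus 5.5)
Your proof is correct, and your reading of the typing is the right one. The paper's own proof opens with ``$X \in M_{d\times k}(\mbb{C})$'', so it silently works with $X,Y\in\Li(\mbb{C}^k,\mcl{X})$, as in Theorem~\ref{thm: prso and span complex}. With the literal typing the count would indeed be $r^2+2k(d-r)$.

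Your route differs from the paper's.

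\emph{The paper's route.} It computes the kernel directly. It chooses bases adapted to $X$ so that $X=\begin{bmatrix} D & 0\\ 0 & 0\end{bmatrix}$ with $D$ an invertible $r\times r$ block, and writes $Y$ in the matching $2\times 2$ block form. From $XY^*+YX^*=0$ it reads off three facts:
\begin{itemize}
\item the lower-left block of $Y$ vanishes, since $D$ is invertible;
\item the two right-hand blocks are unconstrained, contributing $2r(k-r)+2(d-r)(k-r)=2d(k-r)$;
\item the upper-left block ranges over $\{W : DW^*+WD^*=0\}$, which $W\mapsto DW^*$ identifies with the $r^2$-dimensional real space of skew-Hermitian matrices.
\end{itemize}

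\emph{Your route.} You compute the image of $T$ intrinsically as $\{H\in\Herm(\mcl{X}) : P^\perp HP^\perp=0\}$ and apply rank--nullity. In your count the $r^2$ therefore comes from Hermitian rather than skew-Hermitian matrices.

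\emph{What each buys.} The paper's version describes $V$ itself explicitly: block upper-triangular with a constrained corner. Yours is coordinate-free and needs no normal form for $X$. It also gives
\[
\dim_\mbb{R}\range(T)=r^2+2r(d-r)=r(2d-r)=\dim_\mbb{R}V^{\perp_\mbb{R}},
\]
so Corollary~\ref{cor: dimension_complex} follows with no extra work, including its independence of $k$. Finally, your argument carries over verbatim to the real case of Lemma~\ref{lem: dimension_real}, with $\tfrac{r(r+1)}{2}+r(d-r)$ in place of $r^2+2r(d-r)$.
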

 \begin{proof}

 From the assumptions, it holds that $X \in M_{d \times k}(\mbb{C})$. We choose a basis such that $X$ and $Y$ have block representations
\[X = \begin{bmatrix}
D & 0\\
0 & 0
\end{bmatrix}
\text{ and } Y = \begin{bmatrix}
    Y_1 & Y_2\\
    Y_3 & Y_4
\end{bmatrix},\]
where $D \in M_{r \times r}(\mbb{C})$ is invertible, $Y_1 \in M_{r \times r}(\mbb{C}), Y_2 \in M_{r \times (k-r)}(\mbb{C}), Y_3 \in M_{(d-r)\times r}(\mbb{C}) \text{ and } Y_4 \in M_{(d-r) \times (k-r)}(\mbb{C})$.\\

Therefore, it follows that 
\[XY^* + YX^*  
= \begin{bmatrix}
    DY_1^* + Y_1D^*& DY_3^* \\
    Y_3D & 0
\end{bmatrix}
=
\begin{bmatrix}
    0 & 0 \\
    0 & 0
\end{bmatrix}. 
\] 

Since $D$ is invertible, $Y_3 = 0$, and this means 
\[
Y = \begin{bmatrix}
    Y_1 & Y_2\\
    0 & Y_4\\
\end{bmatrix}.
\] 
We have $DY_1^* + Y_1D^* = 0 \iff DY_1^* = -(D Y_1^*)^*$. This means that $DY_1^* \in M_{r \times r}(\mbb{C})$ is skew Hermitian. Define a map $L : \{W \in M_{r \times r}(\mbb{C}): DW^* + WD^* = 0\} \longrightarrow \{A \in M_{r \times r}(\mbb{C}) : A \text{ skew Hermitian}\}$ by $L(W) = DW^*$. Then $L$ is an invertible linear map between finite dimensional real vector spaces, with inverse $L^{-1}(A) = (D^{-1}A)^*$ for all skew Hermitian operator $A \in M_{r \times r}(\mbb{C})$. Hence, it is an isomorphism. Taking $i = \sqrt{-1}$, every skew Hermitian matrix $A \in M_{r \times r}(\mbb{C})$ is of the form
\[
A = \begin{bmatrix}
    ia_{1,1} & a_{1,2} & a_{1,3} & \dots & a_{1,r}\\
    -\overline{a_{1,2}} & ia_{2,2} & a_{2,3} & \dots & a_{2,r}\\
    -\overline{a_{1,3}} & -\overline{a_{2,3}} & ia_{3,3} & \dots & a_{3,r}\\
    \vdots & \vdots & \vdots & \ddots & \vdots\\
    -\overline{a_{1,r}} & \dots & \dots & \dots & ia_{r,r}\\
\end{bmatrix},
\]
where $a_{j,j} \in \mbb{R}$. 
Hence, the real dimension of the space $\{A \in M_{r \times r}(\mbb{C}) : A \text{ is skew Hermitian}\}$ is $2\Big [r + (r-1)+ \dots + 1\Big ] - r = r^2$. Therefore, $\dim_{\mbb{R}}(Y_1) = r^2$. Now $Y_2$ and $Y_4$ are free matrices. Therefore, $\displaystyle \dim_{\mbb{R}}(V) = \sum_{j = 1,2,4} \dim_{\mbb{R}}(Y_j) =  r^2 + 2d(k-r)$.
\end{proof}

\begin{corollary} \label{cor: dimension_complex}
    Let $\mcl{X}$ be a $\mbb{C}$-Euclidean space of dimension $d$, and $X \in \Li(\mcl{X}, \mbb{C}^k)$ a nonzero linear operator of rank $r$. For the subspace $V = \{Y: XY^*+YX^* = 0 \} \subseteq \Li(\mcl{X}, \mbb{C}^k)$ we have $\dim_\mbb{R}(V^{\perp_\mbb{R}}) = r(2d-r)$.
\end{corollary}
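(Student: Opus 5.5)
The plan is to deduce Corollary \ref{cor: dimension_complex} directly from Lemma \ref{lem: dimension_complex} by a dimension count in the ambient real vector space $\Li(\mcl{X}, \mbb{C}^k)$, equipped with the real inner product $\pair{A}{B}_\mbb{R} = \Re \Tr(A^*B)$.

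First, $V$ is a real subspace. The map $Y \mapsto XY^* + YX^*$ is real-linear (though not complex-linear, because of the adjoint), so its kernel $V$ is a real-linear subspace. Hence $\Li(\mcl{X}, \mbb{C}^k) = V \oplus V^{\perp_\mbb{R}}$ as real inner product spaces, and
\[
\dim_\mbb{R}(V^{\perp_\mbb{R}}) = \dim_\mbb{R} \Li(\mcl{X}, \mbb{C}^k) - \dim_\mbb{R}(V).
\]

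Next, the ambient space consists of complex matrices with $dk$ entries, so it has complex dimension $dk$ and real dimension $2dk$. Substituting the value $\dim_\mbb{R}(V) = r^2 + 2d(k-r)$ from Lemma \ref{lem: dimension_complex} gives
\[
\dim_\mbb{R}(V^{\perp_\mbb{R}}) = 2dk - r^2 - 2dk + 2dr = 2dr - r^2 = r(2d-r).
\]

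No step is a genuine obstacle. The only point needing care is to work consistently over $\mbb{R}$: the orthogonal complement must be taken with respect to the real inner product, and $V$ is only a real subspace. Provided the lemma's count is taken in the same ambient space (the lemma's proof uses matrices of size $d \times k$, matching $2dk$ real parameters), the result is immediate.
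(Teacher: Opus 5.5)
Your argument is correct and is exactly the paper's proof: since $V$ is a real subspace of a real inner product space of dimension $2dk$, Lemma \ref{lem: dimension_complex} gives $\dim_\mbb{R}(V^{\perp_\mbb{R}}) = 2dk - (r^2 + 2d(k-r)) = r(2d-r)$. Your caveat about the ambient space is well placed. The block computation behind the lemma treats $Y$ as a $d\times k$ matrix, i.e.\ $Y \in \Li(\mbb{C}^k,\mcl{X})$ as in Theorem \ref{thm: prso and span complex}, whereas for $k\times d$ matrices (the literal $\Li(\mcl{X},\mbb{C}^k)$) the same computation would give $r(2k-r)$.
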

\begin{proof}
    Since $\dim_\mbb{R}(\Li({X, \mbb{C}^k})) = 2dk$, we have $\dim_\mbb{R}(V^{\perp_\mbb{R}}) =  2dk - (r^2 + 2d(k-r)) = r(2d-r)$.   
\end{proof}

We record the next lemma and its corresponding corollary without proofs, as their verification mirrors the arguments already established in the proofs of Lemma \ref{lem: dimension_complex} and Corollary \ref{cor: dimension_complex}.
\begin{lemma} \label{lem: dimension_real}
    Let $\mcl{X}$ be an $\mbb{R}$-Euclidean space of dimension $d$, and $X \in \Li(\mcl{X}, \mbb{R}^k)$ a nonzero linear operator of rank $r$. For the subspace $V = \{Y: XY^*+YX^* = 0 \} \subseteq \Li(\mcl{X}, \mbb{R}^k)$ we have $\dim(V) = \frac{r(r-1)}{2} + d(k-r)$.
\end{lemma}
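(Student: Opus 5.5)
The plan is to mirror the proof of Lemma \ref{lem: dimension_complex}, now over $\mbb{R}$. First choose orthonormal bases of $\mcl{X}$ and $\mbb{R}^k$, using the singular value decomposition of $X$, so that
\[
X = \begin{bmatrix} D & 0 \\ 0 & 0 \end{bmatrix}
\]
with $D \in M_{r\times r}(\mbb{R})$ invertible (diagonal, even). This change of basis is by orthogonal matrices, so it preserves both the condition $XY^*+YX^*=0$ and dimensions. Then partition $Y$ conformally as
\[
Y = \begin{bmatrix} Y_1 & Y_2 \\ Y_3 & Y_4 \end{bmatrix},
\]
with the same block sizes as in the complex lemma.

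Compute $XY^*+YX^*$ blockwise. The $(1,1)$ block is $DY_1^*+Y_1D^*$, the off-diagonal blocks are $DY_3^*$ and $Y_3D^*$, and the remaining blocks vanish. Invertibility of $D$ forces $Y_3=0$, while $Y_2$ and $Y_4$ are unconstrained. This leaves the condition that $DY_1^*$ be skew-symmetric. As in the complex case, the map $W\mapsto DW^*$ is a linear isomorphism from $\{W : DW^*+WD^*=0\}$ onto the real skew-symmetric $r\times r$ matrices, with inverse $A\mapsto (D^{-1}A)^*$.

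The skew-symmetric matrices have dimension $r(r-1)/2$, since they are determined by their strictly upper-triangular entries and, unlike the complex case, the diagonal must vanish. Adding the free parameters in $Y_2$ and $Y_4$ gives a total of $r(k-r)+(d-r)(k-r)=d(k-r)$ for those blocks. Hence
\[
\dim V = \frac{r(r-1)}{2} + d(k-r).
\]

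The only delicate point is the bookkeeping of block shapes. The statement regards $X\in \Li(\mcl{X},\mbb{R}^k)$ but, like the complex proof, treats it as a $d\times k$ matrix. I would follow the paper's convention, so that the block sizes make the counts of $Y_2$ and $Y_4$ sum to $d(k-r)$. Otherwise the argument is routine.
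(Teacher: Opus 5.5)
Your proposal is correct and is exactly the argument the paper intends: the paper omits this proof, saying it mirrors that of Lemma~\ref{lem: dimension_complex}, and your block computation ($Y_3=0$, $Y_2,Y_4$ free, and $Y_1\mapsto DY_1^*$ an isomorphism onto the $\tfrac{r(r-1)}{2}$-dimensional space of real skew-symmetric matrices) is that mirror. You are also right to read $X$ as a $d\times k$ matrix, i.e.\ as an element of $\Li(\mbb{R}^k,\mcl{X})$ as in Theorem~\ref{thm: prso and span}, since under the literal reading $X\in\Li(\mcl{X},\mbb{R}^k)$ the same computation gives $\tfrac{r(r-1)}{2}+k(d-r)$ instead.
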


\begin{corollary} \label{cor: dimension_real}
    Let $\mcl{X}$ be an $\mbb{R}$-Euclidean space of dimension $d$, and $X \in \Li(\mcl{X}, \mbb{R}^k)$ a nonzero linear operator of rank $r$. For the subspace $V = \{Y: XY^*+YX^* = 0 \} \subseteq \Li(\mcl{X}, \mbb{R}^k)$ we have $\dim(V^{\perp_\mbb{R}}) = r\Big(d- \frac{r-1}{2}\Big)$.
\end{corollary}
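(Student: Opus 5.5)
Since $\Li(\mcl{X},\mbb{R}^k)$ is a real inner product space of dimension $dk$, we have $\dim(V^{\perp_\mbb{R}}) = dk - \dim(V)$. The plan is therefore to compute $\dim(V)$ by the real analogue of Lemma \ref{lem: dimension_complex}, i.e.\ Lemma \ref{lem: dimension_real}, and then subtract. We work with $X, Y$ as matrices of the same size, as in the proof of Lemma \ref{lem: dimension_complex}, so that $XY^*+YX^*$ makes sense.

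For completeness, this is how Lemma \ref{lem: dimension_real} is obtained. Using singular value decompositions (orthogonal changes of basis on both sides, which preserve $V$ up to isomorphism), we put $X = \begin{bmatrix} D & 0\\ 0 & 0\end{bmatrix}$ with $D$ an invertible $r\times r$ real matrix. We write $Y$ in the corresponding blocks $Y_1,Y_2,Y_3,Y_4$. The equation $XY^*+YX^*=0$ then forces $Y_3=0$ and $DY_1^T$ to be skew-symmetric. The map $W\mapsto DW^T$ is an isomorphism onto the $r\times r$ skew-symmetric matrices, which have dimension $r(r-1)/2$. The blocks $Y_2$ and $Y_4$ are free, and together they contribute $d(k-r)$ dimensions. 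Hence $\dim(V) = \tfrac{r(r-1)}{2} + d(k-r)$.

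Subtracting gives
\[
\dim(V^{\perp_\mbb{R}}) = dk - \frac{r(r-1)}{2} - d(k-r) = dr - \frac{r(r-1)}{2} = r\Big(d - \frac{r-1}{2}\Big),
\]
which is the claimed formula.

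The only point needing care is the bookkeeping of block sizes, so that the free blocks $Y_2$ and $Y_4$ together contribute exactly $d(k-r)$ dimensions. This mirrors Lemma \ref{lem: dimension_complex}, where the free blocks contribute $2d(k-r)$ real dimensions. Beyond that, the result is a direct dimension count.
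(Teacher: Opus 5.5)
Your proof is correct and is the paper's route: $\dim V^{\perp_{\mbb{R}}} = dk - \dim V$, with $\dim V = \tfrac{r(r-1)}{2} + d(k-r)$ obtained from the same block reduction as in Lemma \ref{lem: dimension_complex}, except that $DY_1^T$ is now skew-symmetric rather than skew-Hermitian. State one convention explicitly: the free-block count $d(k-r)$ requires $X,Y$ to be $d\times k$ matrices, i.e.\ elements of $\Li(\mbb{R}^k,\mcl{X})$, as in the paper's proof and in Theorem \ref{thm: prso and span}. Read literally as elements of $\Li(\mcl{X},\mbb{R}^k)$, the free blocks would contribute $k(d-r)$ dimensions and the formula would become $r\big(k-\frac{r-1}{2}\big)$.
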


We are now ready for the promised characterization of $\mcl{P}_k$-phase retrievability for certain super operators taking values in block-diagonal matrices.
We start with the complex case, since this is usually the more difficult one.

\theorem \label{thm: prso and span complex}
Let $\mcl{X},\mcl{Z}$ be $\mbb{C}$-Euclidean spaces and assume that $\mcl{X}$ has dimension $d$. Let $\Sigma$ be a finite set. For every $a \in \Sigma$, let $A_a, B_a \in \L(\mcl{X}, \mcl{Z}\otimes \mbb{C}^r)$. Let $\mcl{P}_k = \{\rho \in \Pos(\mcl{X}): \rank(\rho) \le k\}$.  The following are equivalent:

\begin{enumerate}[(a)]
        \item \label{thm: prso and span complex a} The super operator $\Phi : \L(\mcl{X}) \to \L(\mathbb{C}^r \otimes \mathbb{C}^\Sigma)$  defined by
\[
\Phi(\rho) = \sum_{a\in \Sigma} \Tr_{\mcl{Z}}(A_a\rho A_a^* - B_a \rho B_a^*)  \otimes E_{aa}
\]
is $\mcl{P}_k$-phase retrievable.

        \item \label{thm: prso and span complex b} For every $X \in \L(\mathbb{C}^k,\mcl{X}) \setminus \{0\}$ with $\rank(X) = r$, we have
        \[
        \big(\spa_{\mbb{R}} \big\{(A_a^*(\mathbbm{1}_{\mcl{Z}}\otimes C)A_a - B_a^*(\mathbbm{1}_{\mcl{Z}}\otimes C)B_a) X\big\}_{a\in\Sigma, C \in \Herm(\mbb{C}^r)}\big)^{\perp_{\mbb{R}}} = \{ Y \in \L(\mathbb{C}^k,\mcl{X}) \; :\; XY^*+YX^*=0 \}.
        \]
        \item \label{thm: prso and span complex c} For every $X \in \L(\mathbb{C}^k,\mcl{X}) \setminus \{0\}$, we have 
        \[
        \dim_\mbb{R} (\spa_{\mbb{R}} \{(A_a^*(\mathbbm{1}_{\mcl{Z}}\otimes C)A_a - B_a^*(\mathbbm{1}_{\mcl{Z}}\otimes C)B_a) X\}_{a\in\Sigma, C \in \Herm(\mbb{C}^r)}) \ge r(2d-r),
        \]
        where $r= \rank(X)$.
        \item \label{thm: prso and span complex d} For every $X \in \L(\mathbb{C}^k,\mcl{X}) \setminus \{0\}$ and basis $\{C_1, C_2, \dots, C_m \}$ of $\Herm(\mbb{C}^r)$, where $m = r^2$, the linear map $T_{A,X} : \mathbb{R}^{\Sigma \times \{1,2, \dots, m\}} \to \L(\mathbb{C}^k,\mcl{X})$ given by 
        \[
        T_{A,X}(c_{aj})_{a\in\Sigma, j \in \{1,2,\dots, m\}} = \sum_{a\in\Sigma} \sum_{j=1}^m c_{aj}(A_a^* (\mathbbm{1}_{\mcl{Z}} \otimes C_j)A_a - B_a^* (\mathbbm{1}_{\mcl{Z}} \otimes C_j)B_a)X
        \]
        satisfies $\rank(T_{A,X}) \ge r(2d-r)$, where $r= \rank(X)$.
    \end{enumerate}
    
\begin{proof}
    (\ref{thm: prso and span complex a})$\implies$(\ref{thm: prso and span complex b}): Assume that the statement in (\ref{thm: prso and span complex a}) holds and let $X \in \Li(\mbb{R}^k, \mcl{X}) \setminus \{0\}$ with $\rank(X) = r$. If $Y \in \Li(\mbb{R}^k, \mcl{X})$ satisfies $XY^* + YX^* = 0$, then for every $a \in \Sigma$ and every $C \in \Herm(\mbb{R}^r)$ we have
    \[
    \begin{aligned}
    \pair{(A_a^*(\mathbbm{1}_{\mcl{Z}}\otimes C)A_a - B_a^*(\mathbbm{1}_{\mcl{Z}}\otimes C)B_a) X}{Y} & = \Tr((A_a^*(\mathbbm{1}_{\mcl{Z}}\otimes C)A_a - B_a^*(\mathbbm{1}_{\mcl{Z}}\otimes C)B_a)YX^*) \\
    & = \pair{(A_a^*(\mathbbm{1}_{\mcl{Z}}\otimes C)A_a - B_a^*(\mathbbm{1}_{\mcl{Z}}\otimes C)B_a)}{YX^*}.
    \end{aligned}
    \]
    Invoking cyclicity of the trace and the fact that taking adjoints conjugates the trace, we conclude that
    \[
    \pair{(A_a^*(\mathbbm{1}_{\mcl{Z}}\otimes C)A_a - B_a^*(\mathbbm{1}_{\mcl{Z}}\otimes C)B_a) X}{Y} = \overline{\pair{(A_a^*(\mathbbm{1}_{\mcl{Z}}\otimes C)A_a - B_a^*(\mathbbm{1}_{\mcl{Z}}\otimes C)B_a)}{XY^*}}.
    \]
    Therefore, we have
    \[
    \begin{aligned}
         0 = \pair{(A_a^*(\mathbbm{1}_{\mcl{Z}}\otimes C)A_a - B_a^*(\mathbbm{1}_{\mcl{Z}}\otimes C)B_a)}{XY^* + YX^*} & = \pair{(A_a^*(\mathbbm{1}_{\mcl{Z}}\otimes C)A_a - B_a^*(\mathbbm{1}_{\mcl{Z}}\otimes C)B_a)}{XY^*} \\
         & + \pair{(A_a^*(\mathbbm{1}_{\mcl{Z}}\otimes C)A_a - B_a^*(\mathbbm{1}_{\mcl{Z}}\otimes C)B_a)}{YX^*}\\
         & = 2 \Re \pair{(A_a^*(\mathbbm{1}_{\mcl{Z}}\otimes C)A_a - B_a^*(\mathbbm{1}_{\mcl{Z}}\otimes C)B_a)}{YX^*}.
    \end{aligned}  
    \]
    This means $\Re \pair{(A_a^*(\mathbbm{1}_{\mcl{Z}}\otimes C)A_a - B_a^*(\mathbbm{1}_{\mcl{Z}}\otimes C)B_a)}{YX^*} = 0$ and from this it follows that
    \[
    \Re \pair{(A_a^*(\mathbbm{1}_{\mcl{Z}}\otimes C)A_a - B_a^*(\mathbbm{1}_{\mcl{Z}}\otimes C)B_a)X}{Y} = 0.
    \]
    Hence, $Y \in (\spa_\mbb{R} \{(A_a^*(\mathbbm{1}_{\mcl{Z}}\otimes C)A_a - B_a^*(\mathbbm{1}_{\mcl{Z}}\otimes C)B_a) X\}_{a\in\Sigma, C \in \Herm(\mbb{C}^r)}\big)^{\perp_{\mbb{R}}}$. Note that this shows that the containment $\supseteq$ in statement \eqref{thm: prso and span complex b} is always true.\\
    
    For the other containment, we first show that for every $X \ne 0, Y \in \Li(\mbb{C}^k, \mcl{X})$, $\Herm \Phi(XY^*) = 0 \iff Y \in (\spa_\mbb{R} \{(A_a^*(\mathbbm{1}_{\mcl{Z}}\otimes C)A_a - B_a^*(\mathbbm{1}_{\mcl{Z}}\otimes C)B_a) X\}_{a\in\Sigma, C \in \Herm(\mbb{C}^r)}\big)^{\perp_{\mbb{R}}}$. Indeed, for every $X \ne 0, Y \in \Li(\mbb{C}^k, \mcl{X})$ it follows that

    \[
    \begin{aligned}
        \Herm \Phi(XY^*) = 0 & \iff \Herm \Tr_\mcl{Z}(A_a XY^* A_a^* - B_a XY^* B_a^*)=0 \text{ for all } a \in \Sigma\\
        & \iff \pair{\Herm \Tr_\mcl{Z}(A_a XY^* A_a^* - B_a XY^* B_a^*)}{V}=0 \text{ for all } a \in \Sigma, V \in \Li(\mbb{C}^r)\\
        & \iff \pair{A_a XY^* A_a^* - B_a XY^* B_a^*}{\mathbbm{1}_\mcl{Z} \otimes \Herm(V)} = 0 \text{ for all } a \in \Sigma, V \in \Li(\mbb{C}^r)\\
        & \iff \pair{A_a XY^* A_a^* - B_a XY^* B_a^*}{\mathbbm{1}_\mcl{Z} \otimes C} = 0 \text{ for all } a \in \Sigma, C \in \Herm (\mbb{C}^r)\\
        & \iff \pair{XY^*}{A_a^*(\mathbbm{1}_\mcl{Z}\otimes C) A_a - B_a^*(\mathbbm{1}_\mcl{Z} \otimes C)B_a)} = 0 \text{ for all } a \in \Sigma, C \in \Herm(\mbb{C}^r)\\
        & \iff \pair{(A_a^*(\mathbbm{1}_\mcl{Z}\otimes C) A_a - B_a^*(\mathbbm{1}_\mcl{Z} \otimes C)B_a))X}{Y} = 0 \text{ for all } a \in \Sigma, C \in \Herm(\mbb{C}^r)\\
        & \iff Y \in (\spa_\mbb{R} \{(A_a^*(\mathbbm{1}_{\mcl{Z}}\otimes C)A_a - B_a^*(\mathbbm{1}_{\mcl{Z}}\otimes C)B_a) X\}_{a\in\Sigma, C \in \Herm(\mbb{C}^r)}\big)^{\perp_{\mbb{R}}}.
    \end{aligned}
    \]
    From Theorem \ref{thm: Herm-pres} we conclude that $Y \in (\spa_\mbb{R} \{(A_a^*(\mathbbm{1}_{\mcl{Z}}\otimes C)A_a - B_a^*(\mathbbm{1}_{\mcl{Z}}\otimes C)B_a) X\}_{a\in\Sigma, C \in \Herm(\mbb{C}^r)}\big)^{\perp_{\mbb{R}}}$ implies $XY^* + YX^* = 0$.

    (\ref{thm: prso and span complex b})$\implies$(\ref{thm: prso and span complex a}): We now assume that (\ref{thm: prso and span complex b}) holds. Let $X \ne 0, Y \in \Li(\mbb{F}^k, \mcl{X})$ satisfy $\Phi(XY^* + YX^*) = 0$. Then for every $a \in \Sigma$,
    \[
    \Tr_{\mcl{Z}}(A_a(XY^* + YX^*)A_a^* - B_a (XY^* + YX^*) B_a^*) = 0
    \]
    if and only if for every $a \in \Sigma$ and every $C \in \Herm(\mbb{C}^r)$,
    \[
    \pair{A_a(XY^* + YX^*) A_a^* - B_a (XY^* + YX^*) B_a^*}{\mathbbm{1}_\mcl{Z}\otimes C} = 0,
    \]
    if and only if for every $a \in \Sigma$ and every $C \in \Herm(\mbb{C}^r)$,
    \[
    \pair{XY^*+YX^*}{A_a^*(\mathbbm{1}_{\mbb{Z}}\otimes \mbb{C})A_a - B_a^*(\mathbbm{1}_{\mbb{Z}}\otimes \mbb{C})B_a }.
    \]
    Using linearity, cyclicity of the trace and the fact that taking adjoints conjugates the trace, we obtain that for every $a \in \Sigma$ and every $C \in \Herm(\mbb{C}^r)$,
    \[
    \Re \pair{Y}{(A_a^*(\mathbbm{1}_\mcl{Z}\otimes C)A_a - B_a^*(\mathbbm{1}_\mcl{Z}\otimes C)B_a)X}= 0,
    \]
    if and only if $Y \in (\spa_\mbb{R} \{(A_a^*(\mathbbm{1}_{\mcl{Z}}\otimes C)A_a - B_a^*(\mathbbm{1}_{\mcl{Z}}\otimes C)B_a) X\}_{a\in\Sigma, C \in \Herm(\mbb{C}^r)})^{\perp_\mbb{R}}$. From (\ref{thm: prso and span complex b}) it follows that $XY^* + YX^* = 0$. By Theorem \ref{thm: Herm-pres}, we conclude that $\Phi$ is $\mcl{P}_k$-phase retrievable.
    \\
    
    (\ref{thm: prso and span complex b})$\iff$(\ref{thm: prso and span complex c}): Assume that (\ref{thm: prso and span complex b}) is true and $X \in \Li(\mbb{C}^k, \mcl{X}) \setminus \{0\}$ with $\rank(X) = r$. Now the set $\{Y \in \Li(\mbb{C}^k, \mcl{X}) : XY^* + YX^* = 0\}$ is a subspace of $\Li(\mbb{C}^k , \mcl{X})$ with real dimension $r^2 + 2d(k-r)$. Since it always holds that
    \[
    (\spa_\mbb{R} \{(A_a^*(\mathbbm{1}_{\mcl{Z}}\otimes C)A_a - B_a^*(\mathbbm{1}_{\mcl{Z}}\otimes C)B_a) X\}_{a\in\Sigma, C \in \Herm(\mbb{C}^r)}\big)^{\perp_\mbb{R}} \supseteq \{ Y \in \L(\mathbb{C}^k,\mcl{X}) \; :\; XY^*+YX^*=0 \},
    \]
    it follows that
    \[
    \spa_\mbb{R} \{(A_a^*(\mathbbm{1}_{\mcl{Z}}\otimes C)A_a - B_a^*(\mathbbm{1}_{\mcl{Z}}\otimes C)B_a) X\}_{a\in\Sigma, C \in \Herm(\mbb{C}^r)} \subseteq \{ Y \in \L(\mathbb{C}^k,\mcl{X}) \; :\; XY^*+YX^*=0 \}^{\perp_\mbb{R}}.
    \]
    Now we have
    \[
    \dim_\mbb{R}(\{ Y \in \L(\mathbb{C}^k,\mcl{X}) \; :\; XY^*+YX^*=0 \}^{\perp_\mbb{R}}) = dk - (r^2 + 2d(k-r)) = r(2d-r).
    \]
    Therefore, it follows that
    \[
    \dim_\mbb{R}(\spa_\mbb{R} \{(A_a^*(\mathbbm{1}_{\mcl{Z}}\otimes C)A_a - B_a^*(\mathbbm{1}_{\mcl{Z}}\otimes C)B_a) X\}_{a\in\Sigma, C \in \Herm(\mbb{R}^r)}) \le r(2d-r).
    \]
    From (\ref{thm: prso and span complex b}), we conclude that 
    \[
    \dim_\mbb{R}(\spa_\mbb{R} \{(A_a^*(\mathbbm{1}_{\mcl{Z}}\otimes C)A_a - B_a^*(\mathbbm{1}_{\mcl{Z}}\otimes C)B_a) X\}_{a\in\Sigma, C \in \Herm(\mbb{R}^r)}) \ge r(2d-r).
    \]
    Conversely if, the statement in (\ref{thm: prso and span complex c}) holds, then the statement in (\ref{thm: prso and span complex b}) follows since
    \[
    (\spa_\mbb{R} \{(A_a^*(\mathbbm{1}_{\mcl{Z}}\otimes C)A_a - B_a^*(\mathbbm{1}_{\mcl{Z}}\otimes C)B_a) X\}_{a\in\Sigma, C \in \Herm(\mbb{R}^r)}\big)^{\perp_\mbb{R}} \supseteq \{ Y \in \L(\mathbb{C}^k,\mcl{X}) \; :\; XY^*+YX^*=0 \}
    \]
    always holds as pointed out above.
    \\
    
    (\ref{thm: prso and span complex c})$\iff$(\ref{thm: prso and span complex d}): The statements in (\ref{thm: prso and span complex c}) and (\ref{thm: prso and span complex d}) are equivalent because for every $X \in \L(\mathbb{C}^k,\mcl{X}) \setminus \{0\}$ and for every basis $\{C_1, C_2, \dots, C_m \}$ of $\Herm(\mbb{C}^r)$, where $m = r^2$, the linear map $T_{A,X} : \mathbb{R}^{\Sigma \times \{1,2, \dots, m\}} \to \L(\mathbb{C}^k,\mcl{X})$ given by 
        \[
        T_{A,X}(c_{aj})_{a\in\Sigma, j \in \{1,2,\dots, m\}} = \sum_{a\in\Sigma} \sum_{j=1}^m c_{aj}(A_a^* (\mathbbm{1}_{\mcl{Z}} \otimes C_j)A_a - B_a^* (\mathbbm{1}_{\mcl{Z}} \otimes C_j)B_a)X
        \]
        satisfies 
        \[
        \range(T_{A,X}) = \spa_\mbb{R} \{(A_a^*(\mathbbm{1}_{\mcl{Z}}\otimes C)A_a - B_a^*(\mathbbm{1}_{\mcl{Z}}\otimes C)B_a) X\}_{a\in\Sigma, C \in \Herm(\mbb{C}^r)}.
        \]
\end{proof}

The statement of Theorem \ref{thm: prso and span complex} has a corresponding version in the real case, which we now state without proof, as its argument parallels that of the preceding theorem. The proof repeatedly uses the linearity, the cyclicity, and the invariance of the trace under transposition on real vector spaces
\theorem \label{thm: prso and span}
Let $\mcl{X},\mcl{Z}$ be a $\mbb{R}$-Euclidean spaces and assume that $\mcl{X}$ has dimension $d$. Let $\Sigma$ be a finite set. For every $a \in \Sigma$, let $A_a, B_a \in \L(\mcl{X}, \mcl{Z}\otimes \mbb{R}^r)$. Let $\mcl{P}_k = \{\rho \in \Pos(\mcl{X}): \rank(\rho) \le k\}$.  The following are equivalent:

\begin{enumerate}[(a)]
        \item \label{thm: prso and span a} The super operator $\Phi : \L(\mcl{X}) \to \L(\mathbb{R}^r \otimes \mathbb{R}^\Sigma)$  defined by
\[
\Phi(\rho) = \sum_{a\in \Sigma} \Tr_{\mcl{Z}}(A_a\rho A_a^T - B_a \rho B_a^T)  \otimes \E_{a,a}
\]
is $\mcl{P}_k$-phase retrievable.

        \item \label{thm: prso and span b} For every $X \in \L(\mathbb{R}^k,\mcl{X}) \setminus \{0\}$ with $\rank(X) = r$, we have
        \[
        (\spa \{(A_a^T(\mathbbm{1}_{\mcl{Z}} \otimes C)A_a - B_a^T(\mathbbm{1}_{\mcl{Z}} \otimes C)B_a) X\}_{a\in\Sigma, C \in \Herm(\mbb{R}^r)}\big)^\perp = \{ Y \in \L(\mathbb{R}^k,\mcl{X}) \; :\; XY^T+YX^T=0 \}.
        \]
        \item \label{thm: prso and span c} For every $X \in \L(\mathbb{R}^k,\mcl{X}) \setminus \{0\}$, we have 
        \[
        \dim \spa \{(A_a^T(\mathbbm{1}_{\mcl{Z}}\otimes C)A_a - B_a^T(\mathbbm{1}_{\mcl{Z}}\otimes C)B_a) X\}_{a\in\Sigma, C \in \Herm(\mbb{R}^r)} \ge r (d - \tfrac{r-1}{2}),
        \]
        where $r= \rank(X)$.
        \item \label{thm: prso and span d} For every $X \in \L(\mathbb{R}^k,\mcl{X}) \setminus \{0\}$ and for every basis $\{C_1, C_2, \dots, C_m \}$ of $\Herm(\mbb{R}^r)$, where $m = \tfrac{r(r+1)}{2}$, the linear map $T_{A,X} : \mathbb{R}^{\Sigma \times \{1,2, \dots, m\}} \to \L(\mathbb{R}^k,\mcl{X})$ given by 
        \[
        T_{A,X}(c_{aj})_{a\in\Sigma, j \in \{1,2,\dots, m\}} = \sum_{a\in\Sigma} \sum_{j=1}^m c_{aj}(A_a^T (\mathbbm{1}_{\mcl{Z}} \otimes C_j)A_a - B_a^T (\mathbbm{1}_{\mcl{Z}} \otimes C_j)B_a)X
        \]
        satisfies $\rank(T_{A,X}) \ge r(d -\tfrac{r-1}{2})$ where $r= \rank(X)$.
    \end{enumerate}

    \remark \label{rem: Herm measurement} If $r=1$ in Theorem \ref{thm: prso and span} or Theorem \ref{thm: prso and span complex}, then the partial trace with respect to $\mcl{Z}$ reduces to a trace and  by the linearity and cyclicity of the trace operator it follows that
\[
\Phi(\rho) = \sum_{a \in \Sigma} \pair{A_a^*A_a - B_a^* B_a}{\rho} \E_{a,a}
\]
for all $\rho \in \Li(\mcl{X})$.
This is the super operator corresponding to the Hermitian measurement $\mu : \Sigma \to \Li(\mcl{X})$ defined by $\mu(a) = A_a^* A_a - B_a^* B_a$ for all $a \in \Sigma$. The Jordan-Hahn decomposition for Hermitian operators provides that every Hermitian measurement of the form $\Sigma \to \Li(\mcl{X})$ can be expressed this way. If $B_a = 0$ for all $a \in \Sigma$, then $\mu$ is a positive measurement with $\mu(a) = A^*_a A_a$ for all $a \in \Sigma$. Every positive measurement of the form $\Sigma \to \Li(\mcl{X})$ can be expressed this way and by Proposition \ref{prop: positive meas and so}, $\Phi$ is a positive super operator.  If in addition $\sum_{a \in \Sigma}A_a^*A_a = \mathbbm{1}_{\mcl{X}}$ for all $a \in \Sigma$, then $\Phi$ is the quantum-to-classical channel corresponding to the quantum measurement $\mu : \Sigma \to \Li(\mcl{X})$ defined by $\mu(a) = A_a^*A_a$.\\

Using Remark \ref{rem: Herm measurement}, we obtain several convenient characterizations of $\mcl{P}_k$-phase retrievable measurements and $\mcl{P}_k$-phase retrievable Hermitian measurements. From Theorem \ref{thm: equivalence of P-phase retrievable measurements and channels} and Theorem \ref{thm: prso p_k}, we obtain the following measurement analogue of Theorem \ref{thm: prso p_k}.

\corollary \label{cor: KPRM} Let $k \in \mbb{N}$, $\mcl{X}$ an $\mbb{F}$-Euclidean space, $\Sigma$ a finite set, and $\mu : \Sigma \to \Li(\mcl{X})$ a measurement. The following are equivalent:
\begin{enumerate}[(a)]
    \item \label{KPRM a} The measurement $\mu$ is $\mcl{P}_k$-phase retrievable.

    \item \label{KPRM b} For any $X, Y \in \Li(\mbb{F}^k, \mcl{X})$, having $\pair{\mu(a)}{XY^* + YX^*} = 0$ for all $a \in \Sigma$ implies $XY^* + YX^* = 0$.
\end{enumerate}

From Proposition \ref{prop: Herm meas and so}, Theorem \ref{thm: equivalence of P-phase retrievable measurements and channels}, and Theorem \ref{thm: Herm-pres} we have the following corollary which is a Hermitian measurement version of Theorem \ref{thm: Herm-pres}.

\corollary \label{cor: KPR Herm M} Let $k \in \mbb{N}$, $\mcl{X}$ an $\mbb{F}$-Euclidean space, $\Sigma$ a finite set, and $\mu : \Sigma \to \Herm(\mcl{X})$ a Hermitian measurement. The following are equivalent:

\begin{enumerate}[(a)]
\item\label{k-PRM-a} The Hermitian measurement $\mu$ is $\mcl{P}_k$-phase retrievable.

    \item \label{k-PRM-c} For any $X,Y\in \L(\mbb{F}^k,\mcl{X})$, having $\Re \pair{\mu(a)}{X Y ^ *}=0$ for all $a \in \Sigma$ implies $XY^* +YX^*=0$.

    \item \label{k-PRM-d} For any $U,V\in \L(\mbb{F}^k,\mcl{X})$ having $\Re \pair{\mu(a)}{(U+V)(U-V)^*} = 0$ for all $a \in \Sigma$ implies  $UU^* = VV^*$.
\end{enumerate}

In the corollary that follows, we use Proposition \ref{prop: composed meas Herm} and Corollary \ref{cor: KPR Herm M} to characterize $\mcl{P}_k$-phase retrievable composed measurements.
\begin{corollary}\label{cor:k-phase retrievable s-operators} Let $k \in \mbb{N}$, let $\mcl{X}$ and $\mcl{Y}$ be $\mbb{F}$-Euclidean spaces, and let $\Sigma$ a finite set. Let $\Phi : \Li(\mcl{X}) \to \Li(\mcl{Y})$ be a Hermitian preserving super operator, and $\mu:\Sigma \to \Herm(\mcl{Y})$ a Hermitian measurement. The following are equivalent:

\begin{enumerate}[(a)]
\item \label{k-PRs-operator-a} The composed measurement $\Phi^* \circ \mu$ is $\mcl{P}_k$-phase retrievable.

    \item \label{k-PRs-operator-c} For any $X,Y\in \L(\mbb{F}^k,\mcl{X})$, having $\Re \pair{\mu(a)}{\Phi(YX ^ {*})}=0$ for all $a \in \Sigma$ implies $XY^* +YX^*=0$.

    \item \label{k-PRs-operator-d} For any $U,V\in \L(\mbb{F}^k,\mcl{X})$ having $\Re \pair{\mu(a)}{\Phi((U-V)(U ^ * + V^*))} = 0$ for all $a \in \Sigma$ implies  $U U^* = VV^*$.
\end{enumerate}
\end{corollary}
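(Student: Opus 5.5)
The plan is to reduce the statement to Corollary \ref{cor: KPR Herm M} applied to the composed measurement $\nu = \Phi^* \circ \mu : \Sigma \to \Li(\mcl{X})$. First I will check that this corollary applies. By Proposition \ref{prop: composed meas Herm}, $\nu$ is a Hermitian measurement, because $\mu$ is Hermitian and $\Phi$ is Hermitian preserving. Statement (a) is then literally statement (\ref{k-PRM-a}) of Corollary \ref{cor: KPR Herm M} for $\nu$, so it remains to translate conditions (\ref{k-PRM-c}) and (\ref{k-PRM-d}) of that corollary into (b) and (c) of the present statement.

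For (b), I will use the adjoint relation $\pair{\Phi^*(\mu(a))}{XY^*} = \pair{\mu(a)}{\Phi(XY^*)}$. The statement, however, is phrased with $YX^*$ rather than $XY^*$. Since $\Phi$ is Hermitian preserving, $\Phi(YX^*) = \Phi((XY^*)^*) = \Phi(XY^*)^*$ by \cite[Thm. 2.25]{watrous2018theory}. Since $\mu(a)$ is Hermitian, $\pair{\mu(a)}{Z^*} = \overline{\pair{\mu(a)}{Z}}$, so
\[
\Re\pair{\mu(a)}{\Phi(YX^*)} = \Re\pair{\mu(a)}{\Phi(XY^*)} = \Re\pair{\nu(a)}{XY^*}.
\]
Hence the hypothesis in (b) is exactly the hypothesis in (\ref{k-PRM-c}) of Corollary \ref{cor: KPR Herm M}. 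The conclusion $XY^*+YX^*=0$ is symmetric in $X$ and $Y$, so no further adjustment is needed.

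For (c), I will argue the same way. The operator $(U-V)(U^*+V^*) = (U-V)(U+V)^*$ is the adjoint of $(U+V)(U-V)^*$. By the same conjugation argument,
\[
\Re\pair{\mu(a)}{\Phi((U-V)(U+V)^*)} = \Re\pair{\nu(a)}{(U+V)(U-V)^*}.
\]
This matches (\ref{k-PRM-d}) of Corollary \ref{cor: KPR Herm M}. The equivalences (a)$\iff$(b)$\iff$(c) then follow directly from that corollary.

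The only real obstacle is bookkeeping. The present statement swaps the order of the factors relative to Corollary \ref{cor: KPR Herm M}, so I must confirm that real parts are invariant under taking adjoints inside a Hermitian-preserving $\Phi$ paired with a Hermitian $\mu(a)$. In the real case ($\mbb{F}=\mbb{R}$) the conjugation is trivial and the same identities hold with transposes, so no separate argument is required.
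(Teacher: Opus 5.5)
Your proposal is correct and takes essentially the paper's route: the paper derives this corollary by applying Corollary~\ref{cor: KPR Herm M} to $\Phi^*\circ\mu$, which is Hermitian by Proposition~\ref{prop: composed meas Herm}, and translating via $\pair{\Phi^*(\mu(a))}{Z} = \pair{\mu(a)}{\Phi(Z)}$. Your conjugation argument, showing the reversed factor order in (b) and (c) is harmless, is bookkeeping the paper leaves implicit; the same conclusion also follows by relabeling $X \leftrightarrow Y$ in (b) and $V \mapsto -V$ in (c).
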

\remark When $k=1$, the operators $U,V,X$ and $Y$ that appear in Corollaries \ref{cor: KPRM}, \ref{cor: KPR Herm M}, and \ref{cor:k-phase retrievable s-operators} reduce to column vectors.

\section{Existence of $\mcl{P}_k$-Phase Retrievable Quantum Channels} \label{Sec: PKPRQC}
We are now prepared to establish a key result that builds on the ideas in \cite[Example 1.1]{liu2023phase} that demonstrate the existence of a non‑injective quantum channel that is nevertheless injective on pure states. Although the original argument relies on tools from Frame Theory, those methods do not appear to extend naturally to states of higher rank. In contrast, we show that the phenomenon does generalize once one employs techniques from Quantum Information Theory. In particular, \cite[Thm. 1]{heinosaari2013quantum} provides the crucial ingredient that enables this extension.\\

\begin{lemma} \label{lem: Pk and Dk PR equivalence}
    Let $k \in \mbb{N}$, $\mcl{X}$ an $\mbb{F}$-Euclidean space, and $\mu : \Sigma \to \Pos(\mcl{X})$ a quantum measurement. Let $\mcl{P}_k = \{X \in \Pos(\mcl{X}) : \rank(X) \le k\}$ and $\D_k = \{\rho \in \mcl{P}_k : \Tr(\rho) = 1\}$. The quantum measurement $\mu$ is $\mcl{P}_k$-phase retrievable if and only if it is $\D_k$-phase retrievable.
\end{lemma}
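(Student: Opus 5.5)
One direction is immediate: since $\D_k \subseteq \mcl{P}_k$, any pair $\rho,\sigma \in \D_k$ with $\pair{\mu(a)}{\rho} = \pair{\mu(a)}{\sigma}$ for all $a \in \Sigma$ is in particular a pair in $\mcl{P}_k$. Hence $\mcl{P}_k$-phase retrievability gives $\rho=\sigma$ directly from Definition \ref{def: Omega-phase retrievable measurement}.

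For the converse, assume $\mu$ is $\D_k$-phase retrievable. Let $\rho,\sigma \in \mcl{P}_k$ satisfy $\pair{\mu(a)}{\rho} = \pair{\mu(a)}{\sigma}$ for every $a\in\Sigma$. The first step is to sum over $a$ and use $\sum_a \mu(a) = \mathbbm{1}_{\mcl{X}}$, exactly as in the proof of Theorem \ref{thm: k-phase retrivable Q-measurements}. This gives
\[
\Tr(\rho) = \pair{\mathbbm{1}_{\mcl{X}}}{\rho} = \pair{\mathbbm{1}_{\mcl{X}}}{\sigma} = \Tr(\sigma) =: t .
\]
If $t=0$, then positivity forces $\rho=\sigma=0$, since a positive semidefinite operator with zero trace (equivalently zero trace norm) vanishes.

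If $t>0$, normalize by setting $\rho' = \rho/t$ and $\sigma' = \sigma/t$. Scaling by a positive number preserves positive semidefiniteness and rank, so $\rho',\sigma' \in \D_k$. Linearity of the inner product in the second variable gives $\pair{\mu(a)}{\rho'} = \pair{\mu(a)}{\sigma'}$ for all $a$. By $\D_k$-phase retrievability, $\rho'=\sigma'$, and multiplying back by $t$ gives $\rho=\sigma$.

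No step is a real obstacle. The only point needing care is the equality of traces, which is where the hypothesis that $\mu$ is a quantum measurement (summing to the identity) is used essentially. Without it, one could not normalize $\rho$ and $\sigma$ by a common factor, and the equivalence could fail for general positive measurements.
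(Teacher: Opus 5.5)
Your proof is correct and follows essentially the same route as the paper. One direction is the inclusion $\D_k \subseteq \mcl{P}_k$; for the converse you sum over $a$ using $\sum_a \mu(a) = \mathbbm{1}_{\mcl{X}}$ to get $\Tr(\rho)=\Tr(\sigma)$, dispose of the zero-trace case, and normalize into $\D_k$. Your closing remark is also accurate: for example, the zero measurement on a one-dimensional space is trivially $\D_k$-phase retrievable but not $\mcl{P}_k$-phase retrievable.
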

\begin{proof}
    Obviously, if $\mu$ is $\mcl{P}_k$-phase retrievable, then it is $\D_k$-phase retrievable since $\D_k \subset \mcl{P}_k$. We then assume $\mu$ is $\D_k$-phase retrievable and that $\rho, \sigma \in \mcl{P}_k$ are such that for every $a \in \Sigma$, $\pair{\mu(a)}{\rho} = \pair{\mu(a)}{\sigma}$. Summing both parts of equality over $a \in \Sigma$ we have $\Tr(\rho) = \Tr(\sigma)$. From the equality of the traces it follows that $\rho = 0$ if and only if $\sigma = 0$, so that in such a case $\rho = \sigma$.\\
    
    We now assume that $\rho$ and $\sigma$ are nonzero. Then $\frac{\rho}{\Tr(\rho)}, \frac{\sigma}{\Tr(\sigma)} \in \D_k$. Together, it follows that for all $a \in \Sigma$, 
    \[
   \Bpair{\mu(a)}{\frac{\rho}{\Tr(\rho)}} = \frac{1}{\Tr(\rho)}\pair{\mu(a)}{\rho}  = \frac{1}{\Tr(\sigma)}\pair{\mu(a)}{\sigma} = \Bpair{\mu(a)}{\frac{\sigma}{\Tr(\sigma)}} 
    \]
    By the $\D_k$-phase retrievability of $\mu$ and the equality of the traces of $\rho$ and $\sigma$ we have $\rho = \sigma$.
\end{proof}

\begin{theorem}  \label{thm: existence of a k-state injective quantum channel}
    Let $\mcl{X}$ be an $\mbb{F}$-Euclidean space and $m, n,k \in \mbb{N}$ with $\dim(\mcl{X}) = n \ge2$, $1 \le k < \frac{n}{2}$, and $m=4k(n-k)$. Let $\mcl{P}_k = \{\rho \in \Pos(\mcl{X}) : \rank(\rho) \le k\}$. Then, there exists a quantum channel of the form $\Phi:\Li(\mcl{X}) \to \Li(\mbb{C}^m)$  which is injective on $\mcl{P}_k$ but not injective.
\end{theorem}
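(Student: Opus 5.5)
The plan is to build the channel as a quantum-to-classical channel $\Psi_\mu$ coming from a quantum measurement $\mu:\Sigma\to\Pos(\mcl{X})$ with $|\Sigma| = m = 4k(n-k)$ outcomes. The key input is \cite[Thm. 1]{heinosaari2013quantum}: for $k<\frac{n}{2}$ there is a quantum measurement with $4k(n-k)$ outcomes (the real span of its effects has dimension at most $4k(n-k)$) that is informationally complete with respect to $\D_k$, i.e. $\D_k$-phase retrievable.

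By Lemma \ref{lem: Pk and Dk PR equivalence}, $\mu$ is then $\mcl{P}_k$-phase retrievable. Corollary \ref{cor: equivalence of k-phase retrievable measurements and channels}, together with Theorem \ref{thm: injectivity and phase retrievability of a quantum channel}, then shows that $\Psi_\mu:\Li(\mcl{X})\to\Li(\mbb{C}^\Sigma)\cong\Li(\mbb{C}^m)$ is injective on $\mcl{P}_k$. The map $\Psi_\mu$ is a quantum channel: it is completely positive, being a sum of maps $X\mapsto \pair{\mu(a)}{X}\E_{a,a}$ with $\mu(a)\ge 0$, and it is trace preserving because $\sum_a\mu(a)=\mathbbm{1}_{\mcl{X}}$. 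If the Heinosaari et al. construction gives fewer outcomes than $m$, I would pad it with zero effects, or split one effect into halves, to reach exactly $m$ outcomes. Neither change affects injectivity on $\mcl{P}_k$.

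It remains to show that $\Psi_\mu$ is not injective on all of $\Li(\mcl{X})$. The range of $\Psi_\mu$ lies in the diagonal matrices, whose complex dimension is $m$, while $\dim_{\mbb{C}}\Li(\mcl{X}) = n^2$. So I need $4k(n-k) < n^2$. Rewriting, $n^2 - 4kn + 4k^2 = (n-2k)^2 > 0$, which holds exactly because $k<\frac{n}{2}$, i.e. $n\neq 2k$. Hence $\ker\Psi_\mu\neq\{0\}$ and $\Psi_\mu$ is not injective.

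I would also handle the case $\mbb{F}=\mbb{R}$. The real count is even easier, since $\dim_{\mbb{R}}\Herm(\mcl{X}) = n(n+1)/2$, but the non-injectivity should be argued on $\Li(\mcl{X})$ using linear dimension over $\mbb{F}$. Over $\mbb{R}$ the diagonal range has real dimension $m$ and $\dim_{\mbb{R}}\Li(\mcl{X}) = n^2 > m$, so the same inequality settles it.

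The main obstacle is the existence step. I must check that \cite[Thm. 1]{heinosaari2013quantum} really yields a measurement with $4k(n-k)$ outcomes that is informationally complete for rank-$\le k$ states when $k<n/2$, and that it applies in the $\mbb{F}$-Euclidean setting. If it is stated only as a bound on the dimension of the operator system spanned by the effects, I would convert it into a POVM. To do this I take a real basis of that operator system containing $\mathbbm{1}$, shift and rescale the basis elements to make them positive, and normalize so they sum to the identity. This keeps the span, and hence the informational completeness by Theorem \ref{thm: injectivity and phase retrievability of a quantum channel}(c), and uses at most the stated number of outcomes.
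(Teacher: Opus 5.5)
Your proposal is correct and follows essentially the same route as the paper: take a $\D_k$-informationally complete POVM with $m=4k(n-k)$ outcomes from \cite[Thm. 1]{heinosaari2013quantum}, upgrade it to $\mcl{P}_k$ via Lemma \ref{lem: Pk and Dk PR equivalence}, pass to the quantum-to-classical channel $\Psi_\mu$, and get non-injectivity from $\rank(\Psi_\mu)\le m<n^2$; your identity $n^2-4k(n-k)=(n-2k)^2$ is a slightly cleaner way to see this last inequality. The real case you flag is not treated separately in the paper, and it is easily handled: replace each complex effect $\mu(a)$ by its entrywise real part $\tfrac12\big(\mu(a)+\overline{\mu(a)}\big)$, which is still positive semidefinite, still sums to $\mathbbm{1}$ over $a$, and has the same pairing as $\mu(a)$ with every real symmetric $\rho$.
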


\begin{proof} By \cite[Thm. 1]{heinosaari2013quantum}  we can find a $\D_k$-phase retrievable quantum measurement $\mu : \{1, 2, ..., m\} \to \Pos(\mcl{X})$, and by Lemma \ref{lem: Pk and Dk PR equivalence} this measurement is $\mcl{P}_k$-phase retrievable. Applying Corollary \ref{cor: equivalence of k-phase retrievable measurements and channels}, it follows that the quantum-to-classical channel $\Psi_\mu :\ \Li(\mcl{X}) \to \Li(\mbb{C} ^ m)$ defined as
\[
\Psi_\mu (X) = \sum_{a = 1} ^ m \pair{\mu(a)}{X} E _{a,a}
\]
for all $X \in \Li(\mcl{X})$, is $\mcl{P}_k$-phase retrievable.\\

It remains to show that $\Phi$ is not injective in general. For this, observe that $\range (\Psi_\mu) \subseteq \Di(\mbb{C}^m)$. Now $m<n^2 = \dim(\Li(\mcl{X}))$. To see this, fix $n$ and observe that $4k(n-k) \le n^2$, with equality if only if $k=\frac{n}{2}$. Since $k <\frac{n}{2}$ by assumption, it follows that $m < n^2$. Hence, $\rank(\Psi_\mu) \leq m < n^2$, so that $\Psi_\mu$ is not injective.
\end{proof}

\begin{remark}
We expect that arguments in the style of  \cite{MR3671475} can show that a ``generic'' choice of measurement with $|\Sigma|$ large enough should be $\mcl{P}_k$-phase retrievable, but we do not attempt to address this in the present work.   
\end{remark}

\begin{remark} 
    Theorem \ref{thm: existence of a k-state injective quantum channel} says that if we have an $\mbb{F}$-Euclidean space whose dimension is greater than 2, and $k \ge 1$ is any integer less than half the dimension, then we can find a quantum channel that is injective on states of rank at most $k$ but not injective overall. The quantum channel produced is the quantum-to-classical channel corresponding to $\mu$. Notable is the fact that Lemma \ref{lem: Pk and Dk PR equivalence} gives a way to extend the existence of a $\D_k$-phase retrievable measurement as seen in \cite{heinosaari2013quantum} to a $\mcl{P}_k$-phase retrievable measurement. This makes it possible to extend the result in \cite[Example 1]{liu2023phase} to positive semidefinite operators that have rank at most $k$, instead of an extension to states of rank at most $k$.
\end{remark}

The next result should be understood as a refinement of Theorem \ref{thm: existence of a k-state injective quantum channel}, and also of \cite[Thm. 1]{heinosaari2013quantum}.
Our proof is essentially the same as that of the latter, adding some careful extra choices.

\theorem \label{thm: PKPR Qc, ~P_{K+1}} For every positive integer $k$, there exists a finite set $\Sigma$, a finite-dimensional $\mbb{F}$-Euclidean space $\mcl{X}$, and a $\mcl{P}_k$-phase retrievable quantum measurement $\mu : \Sigma \to \Pos(\mcl{X})$  that is not $\mcl{P}_{k + 1}$-phase retrievable.
As a consequence, there exists a quantum channel defined on $\Li(\mcl{X})$ which is $\mcl{P}_k$-phase retrievable but not $\mcl{P}_{k + 1}$-phase retrievable.

\begin{proof}
    The proof follows from Lemma \ref{lem: Pk and Dk PR equivalence} and some results in \cite{heinosaari2013quantum}. First, we use the method of proof in \cite[Thm. 1]{heinosaari2013quantum} to construct an \emph{operator system}.\footnote{Given a Hilbert space $\mcl{H}$, a subspace $S$ of bounded linear operators $B(\mcl{H})$ on $\mcl{H}$ is called an operator system if $\mathbbm{1}_{\mcl{H}} \in S$ and $X ^ * \in S$ for all $X \in S$.} The next step involves the use of \cite[Prop. 1]{heinosaari2013quantum} and Lemma \ref{lem: Pk and Dk PR equivalence} to produce a $\mcl{P}_k$-retrievable quantum measurement. Finally, we apply Lemma \ref{lem: Pk and Dk PR equivalence} and \cite[Prop. 2]{heinosaari2013quantum}  to show that the quantum measurement is not $\mcl{P}_{k + 1}$-phase retrievable. Our argument follows the notation and terminology of the proof of \cite[Thm. 1]{heinosaari2013quantum}, just making some more specific choices along the way.

    Let $k$ be a fixed positive integer. Let $d = 2(k+1)$. Then the only integer $r$ satisfying $2k + 1 \le r \le d - 1$ is $r = 2k + 1$. Let $V_r$ be the Vandermonde matrix generated by $(1, 2, ... , r)$, that is,
    \[ V_r = 
    \begin{pmatrix}
        1 & 1 & 1 & \dots & 1\\
        1 & 2 & 4 & \dots & 2^{r-1}\\
        \vdots & \vdots & \vdots & \ddots & \vdots\\
        1 & r & r^2 & \dots & r^{r-1}
    \end{pmatrix}.
    \]
    Then $V_r$ is \emph{totally nonsingular}, \footnote{A square matrix is totally nonsingular if all its minors are nonzero.} and in particular invertible. Let $v$ be the vector of the first column of $V_r$. We put $v$ on the \emph{$r$th diagonal}\footnote{For a $d \times d$ matrix $M$, the $r$th diagonal is the set of entries $M_{ij}$ with $j - i = d - r$.} of the $d \times d$ matrix $B_1$ with zeros elsewhere. Hence,
    \[
    B_1 =  \begin{pmatrix}
        0 & 1 & 0 & \dots & 0\\
        0 & 0 & 1 & \dots & 0\\
        \vdots & \vdots & \vdots & \ddots & \vdots\\
        0 & 0 & 0 & \dots & 1\\
        0 & 0 & 0 & \dots & 0
    \end{pmatrix}.
    \]
    Let $B_2$ be the transpose of $B_1$. Then $\rank(B_1) = \rank(B_2) = 2k+1$ and $\Tr(B_1) = \Tr(B_2) = 0$. Let $w = (w_j)_{j=1}^d$  represent the second column of the Vandermonde matrix $V_d$ generated by $(1,2, ...,d)$. Hence $w_j = j$. Let $u = (u_j)_{j = 1} ^ d$ be the transpose of the last row vector of $V_d ^ {-1}$. Then by the properties of the $V_d ^ {-1}$ as outlined in \cite{el2003explicit} we have
    \[
    (V_d)^{-1}_{i,j} = \frac{(-1)^{i+j}}{(d-1)!} \binom{d-1}{j-1}\sigma_{d-i+1,j} ^{(d)}
    \]
    
    where $\sigma_{d-i+1,j} ^{(d)}$ is the $(d-i+1)$-th symmetric function of the set of parameters $\{c_1, ..., c_d\}$, excluding $c_j$. In the case where $c_1 = 1, ..., c_d = d$ we have by \cite{bender2002inverse} that
    \[
    u_j = (V_d ^ {-1})_{d,j} = \frac{(-1)^{d - j}}{(j - 1)! (d - j)!} \text{ and } \sum_{j = 1} ^ d u_j = 0.
    \]
    Let $\Tilde{w} = (\Tilde{w}_j)_{j = 1} ^d$ be a column vector with $\Tilde{w}_j = u_j w_{j}$. Using $u$ and $\Tilde{w}$ we construct $d \times d$ diagonal matrices $B_3$ and $B_4$ with $(B_3)_{j,j} = u_j$ and $(B_4)_{j,j} = \Tilde{w}_j$. Then $\rank(B_3) = \rank(B_4) = d$ and $\Tr(B_3) = \Tr(B_4) = 0$. By construction the matrices $B_1, B_2, B_3 \text{ and } B_4$ are linearly independent. Let $\mcl{B} = \spa\{B_1, B_2, B_3, B_4\}$, then $\mcl{B}$ satisfies conditions (a) to (d) listed in the proof of \cite[Thm. 1]{heinosaari2013quantum}, namely:
    \begin{itemize}
        \item[(a)] $\mcl{B}^ \dagger = \mcl{B}$, where $\mcl{B} ^ \dagger = \{B^* : B \in \mcl{B}\}$,
        \item[(b)] $\Tr(B) = 0$ for all $B \in \mcl{B}$,
        \item[(c)] $\dim(\mcl{B}) =4 = (d-2k)^2$, and
        \item[(d)] $\rank(B) \ge 2k + 1$ for every nonzero $B \in \mcl{B}$.
    \end{itemize} 
    
    Let $\mcl{G} = \mcl{B} ^ \perp$. By \cite{heinosaari2013quantum} $\mcl{G}$ is an operator system and being a subspace of $M_{d\times d}(\mbb{C})$ it follows that $\dim(\mcl{G}) = d ^ 2 - 4$. By \cite[Prop. 1]{heinosaari2013quantum} there exists a quantum measurement $\mu : \{1, 2, ..., d^2 - 4\} \to \Pos(\mbb{C}^d)$ such that $\mcl{G}(\mu) = \mcl{G}$, where $\mcl{G}(\mu)$ is the linear span of the outputs of $\mu$. By \cite[Thm. 1]{heinosaari2013quantum} and Lemma \ref{lem: Pk and Dk PR equivalence}, $\mu$ is $\mcl{P}_k$-phase retrievable. Since $\mcl{B} = \mcl{G}^\perp$, it follows from \cite[Prop. 2]{heinosaari2013quantum} and Lemma \ref{lem: Pk and Dk PR equivalence} that $\mu$ is $\mcl{P}_{k + 1}$-phase retrievable if and only if $(\mcl{P}_{k + 1} - \mcl{P}_{k + 1}) \cap \mcl{B} = \{0\}$. Let $P - Q$ be the Jordan-Hahn decomposition of $B_3$. Then $P, Q \in \mcl{P}_{k + 1}$ since all the entries in the diagonal of $B_3$ are nonzero, and the number of positive entries is $k+1$, which is also the number of negative entries. Therefore $\rank(P) = \rank(Q) = k + 1$, and $ B_3 = P - Q \in (\mcl{P}_{k + 1} - \mcl{P}_{k + 1}) \cap \mcl{B}$. It follows that $(\mcl{P}_{k + 1} - \mcl{P}_{k + 1}) \cap \mcl{B} \ne \{0\}$. This proves that $\mu$ is not $\mcl{P}_{k + 1}$-phase retrievable.
\end{proof}

\section{Stability of $\mcl{P}_k$-phase retrievable super operators with respect to norms} \label{Sec: Stability of PKPRMSO}

In this section, we investigate the stability of $\mcl{P}_k$-phase retrievable super operators, establishing that they are stable. This means that $\mcl{P}_k$-phase retrievable super operators remain robust under small perturbations. By Definition \ref{def: Omega-PRs-operator} and Remark \ref{rem: Special case of P-phase retrievable quantum channels}, it follows that $\mcl{P}_k$-phase retrievable super operators are precisely super operators that are injective on positive semidefinite operators of rank at most $k$. Such super operators are directly connected to $\mcl{P}_k$-phase retrievable measurements. In real life, measurements are never accurate and since our mathematical models need to withstand some error, it will be satisfactory if super operators that arise naturally from $\mcl{P}_k$-phase retrievable measurements are able to withstand some error. The advantage is that if we replace a $\mcl{P}_k$-phase retrievable super operator (with respect to a measurement) with a possibly different $\mcl{P}_k$-phase retrievable super operator, then up to some error estimate, then the new super operator is bound to be $\mcl{P}_k$-phase retrievable (with respect to the same measurement). Before presenting the main theorems, we record some foundational results. The first, stated below, provides a bound on the ranks of the positive and negative parts in the Hahn–Jordan decomposition of a Hermitian operator. We suspect it may be a known, but we are not aware of a convenient reference and thus  include an argument here for completeness.

\lemma \label{Hahn-Jordan bound 2} Let $A, B \ge 0$ be such that $\rank(A)\le k$ and $\rank(B) \le k$. If $P - Q$ is the Hahn-Jordan Decomposition of $A - B$, then $\rank(P) \le k$ and $\rank(Q) \le k$.
\begin{proof}
    We know that every eigenvalue of a Hermitian operator is real. Therefore, for every Hermitian operator $X$, we may order its eigenvalues as $\lambda_1 \ge \dots \ge \lambda_n$. The $j$th eigenvalue of $X$ is $\lambda_j(X)$. The number of its positive and negative eigenvalues by $n_+(X)$ and $n_-(X)$, respectively. By \cite[Prop. 2.2(i)]{zheng2022inertia}, $n_+(-B) \le 0$ implies $\lambda_j(A-B) \le \lambda_j(A)$. Let $\Theta_1 = \{j \mid \lambda_j(A-B) >0\}$ and $\Theta_2 = \{j \mid \lambda_j(A) > 0\}$. If $j \in \Theta_1$, then $0 < \lambda_j(A-B) \le \lambda_j(A)$. Hence, $\Theta_1 \subseteq \Theta_1$. This implies $n_+(A-B) \le n_+(A)$. Since $n_+(A) \le \rank(A)$, we have $\rank(P) \le k$. By \cite[Prop. 2.2(ii)]{zheng2022inertia}, $n_-(-B) \le 0$ implies $\lambda_j(A) \le \lambda_j(A-B)$. Using a similar argument as before, we have $n_-(A-B) \le n_-(A)$. Since $n_-(A) \le k$, we have $\rank(Q) \le k$.
\end{proof}

The second result is a characterization of a $\mcl{P}_k$-phase retrievable super operators, just as Corollary \ref{cor: KPRM} characterizes $\mcl{P}_k$-phase retrievable measurements.

\begin{theorem} \label{thm: stability of Pk-PRSO}
    Let $k \in \mbb{N}$, $\mcl{X}$ and $\mcl{Y}$ $\mbb{F}$-Euclidean spaces, $\mcl{P}_k =\{\rho \in \Pos(\mcl{X}): \rank(\rho) \le k\}$, and $\Phi : \Li(\mcl{X})\longrightarrow \Li(\mcl{Y})$ a super operator. The following are equivalent:
    \begin{enumerate}[(a)]
        \item \label{stability of Pk-PRSO-a} 
        The super operator $\Phi$ is $\mcl{P}_k$-phase retrievable.
        \item \label{stability of Pk-PRSO-b} 
        For every $\rho, \sigma \in \mcl{P}_k$, not both zero, such that $\rho\sigma=0$, we have $\Phi(\rho) \ne \Phi(\sigma)$.

    \end{enumerate}
\end{theorem}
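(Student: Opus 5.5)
The direction (a)$\implies$(b) is immediate. By Remark \ref{rem: Special case of P-phase retrievable quantum channels} and Theorem \ref{thm: injectivity and phase retrievability of a quantum channel}, $\mcl{P}_k$-phase retrievability of $\Phi$ means exactly that $\Phi$ is injective on $\mcl{P}_k$. Let $\rho,\sigma\in\mcl{P}_k$ be not both zero with $\rho\sigma=0$. If $\rho=\sigma$, then $\rho^2=\rho\sigma=0$, and since $\rho$ is positive semidefinite this forces $\rho=0$. Then $\sigma=0$ as well, which contradicts the hypothesis. Hence $\rho\neq\sigma$, and injectivity gives $\Phi(\rho)\neq\Phi(\sigma)$.

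For (b)$\implies$(a), I would argue by contraposition and reduce to an orthogonal pair using the Hahn--Jordan decomposition. Suppose $\rho,\sigma\in\mcl{P}_k$ satisfy $\Phi(\rho)=\Phi(\sigma)$ but $\rho\neq\sigma$. The difference $\rho-\sigma$ is Hermitian and nonzero, so write its Hahn--Jordan decomposition $\rho-\sigma=P-Q$. Here $P,Q\ge 0$ and they have orthogonal supports, so $PQ=QP=0$. Since $\rho-\sigma\neq 0$, the operators $P$ and $Q$ are not both zero.

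By Lemma \ref{Hahn-Jordan bound 2}, applied with $A=\rho$ and $B=\sigma$, we get $\rank(P)\le k$ and $\rank(Q)\le k$, so $P,Q\in\mcl{P}_k$. Linearity then gives $\Phi(P)-\Phi(Q)=\Phi(\rho-\sigma)=\Phi(\rho)-\Phi(\sigma)=0$. Thus $(P,Q)$ is a pair in $\mcl{P}_k$, not both zero, with $PQ=0$ and $\Phi(P)=\Phi(Q)$, which contradicts (b). Therefore $\Phi$ is injective on $\mcl{P}_k$, and by Theorem \ref{thm: injectivity and phase retrievability of a quantum channel} it is $\mcl{P}_k$-phase retrievable.

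The only substantive ingredient is the rank bound on the positive and negative parts, and Lemma \ref{Hahn-Jordan bound 2} supplies it. The rest is bookkeeping. One small point needs care in the real case: the Hahn--Jordan decomposition must still be available there. It is, because real symmetric operators are orthogonally diagonalizable, so the spectral projections give $P$ and $Q$ with the same properties. I would also note explicitly that $PQ=0$ is the orthogonality condition required in (b).
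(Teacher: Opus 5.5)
Your proposal is correct and follows essentially the same route as the paper. For (a)$\implies$(b), both arguments observe that $\rho=\sigma$ would force $\rho^2=\rho\sigma=0$ and hence $\rho=\sigma=0$. For (b)$\implies$(a), both take the Hahn--Jordan decomposition $P-Q$ of $\rho-\sigma$ and use Lemma \ref{Hahn-Jordan bound 2} to place $P,Q$ in $\mcl{P}_k$, giving an orthogonal pair that (b) separates, which contradicts $\Phi(\rho)=\Phi(\sigma)$.
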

   \begin{proof} (\ref{stability of Pk-PRSO-a})$\implies$(\ref{stability of Pk-PRSO-b}) : Assume that (\ref{stability of Pk-PRSO-a}) is true. Let $\rho, \sigma \in \mcl{P}_k$, not both zero, such that $\rho\sigma=0$.  If $\Phi(\rho) = \Phi(\sigma)$, then by (\ref{stability of Pk-PRSO-a}) $\rho = \sigma$. It follows that $\rho ^ 2 = \sigma ^ 2 = 0$. Hence, $\rho = \sigma = 0$ and this gives a contradiction.

   (\ref{stability of Pk-PRSO-b})$\implies$(\ref{stability of Pk-PRSO-a}) : Assume that (\ref{stability of Pk-PRSO-b}) is true. Let $\rho, \sigma \in \mcl{P}_k$ be such that  $\Phi(\rho) = \Phi(\sigma)$. Assume by contradiction  that $\rho \ne \sigma$. Let $P - Q$ be the Hahn-Jordan decomposition of $\rho - \sigma$. Then $0 \ne \rho - \sigma = P - Q$. By Lemma \ref{Hahn-Jordan bound 2}, $P, Q \in \mcl{P}_k$. They are not both zero, and $PQ = 0$. Therefore, (\ref{stability of Pk-PRSO-b}) shows that $\Phi(P) \ne \Phi(Q)$. It follows that $\Phi(\rho - \sigma) \ne 0$, which is a contradiction to our assumption.
   \end{proof}
   
\begin{remark} \label{stability of Pk-PRSO-c}
    We note that part (\ref{stability of Pk-PRSO-b}) of Theorem (\ref{thm: stability of Pk-PRSO}) can be equivalently stated as:
    \\ 
    
 For every $\rho, \sigma \in \mcl{P}_k$ with $\rho \sigma = 0$ and $\n{\rho - \sigma}_1 = 1$, there exists $a \in \Sigma$ such that $\pair{\mu(a)}{\rho} \ne \pair{\mu(a)}{\sigma}$.
 
\end{remark}

  \begin{remark}
      
      In simple terms, Theorem \ref{thm: stability of Pk-PRSO} says that a super operator $\Phi$ is $\mcl{P}_k$-phase retrievable if and only if it distinguishes between any two positive semidefinite operators, not both zero, of rank at most $k$, with orthogonal images.
  \end{remark}

  Before we finally state the main results in this section, we give a useful characterization of $\mcl{P}_k$-phase retrievable super operators in terms of a simple quantity.
  It is analogous to the fact that a linear operator on a finite-dimensional normed space is injective if and only if the image of the unit sphere is bounded away from zero.

\begin{lemma} \label{lem: Pk-PRSO equiv}
    Let $\mcl{X}$ and $\mcl{Y}$ be an $\mbb{F}$-Euclidean spaces, and $\mcl{P}_k = \{\rho \in \Pos(\mcl{X}): \rank(\rho) \le k\}$. A super operator $\Phi: \Li(\mcl{X})\longrightarrow \Li(\mcl{Y})$ is $\mcl{P}_k$-phase retrievable if and only if
    \[
    \underset{\substack{\rho, \sigma \in \mcl{P}_k \\ \rho \sigma = 0 \\  \n{\rho}_1 +\n{\sigma}_1 = 1}}{\min} \n{\Phi(\rho - \sigma)}_1 > 0.
    \]
\end{lemma}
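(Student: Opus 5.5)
The plan is to combine Theorem \ref{thm: stability of Pk-PRSO} with a compactness argument. Let
\[
K = \{(\rho,\sigma) \in \mcl{P}_k \times \mcl{P}_k : \rho\sigma = 0,\ \n{\rho}_1 + \n{\sigma}_1 = 1\}.
\]
First I will check that $K$ is compact, so that the minimum in the statement exists and equals $\min_{(\rho,\sigma)\in K} f(\rho,\sigma)$, where $f(\rho,\sigma) = \n{\Phi(\rho-\sigma)}_1$. The function $f$ is continuous, being a norm composed with a linear map.

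For compactness of $K$:
\begin{itemize}
\item $K$ is bounded, since $\n{\rho}_1,\n{\sigma}_1 \le 1$.
\item $\mcl{P}_k$ is closed. Positivity is preserved under limits, and the rank is lower semicontinuous: the set where $\rank \le k$ is the common zero set of all $(k+1)\times(k+1)$ minors.
\item The conditions $\rho\sigma = 0$ and $\n{\rho}_1 + \n{\sigma}_1 = 1$ are closed, because multiplication and the trace norm are continuous.
\end{itemize}
Hence $K$ is a closed, bounded subset of the finite-dimensional space $\Li(\mcl{X})\times\Li(\mcl{X})$. It is nonempty, since $(xx^*,0) \in K$ for any unit vector $x$. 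So the minimum is attained.

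For the forward direction, assume $\Phi$ is $\mcl{P}_k$-phase retrievable. The minimum is attained at some $(\rho_0,\sigma_0) \in K$. Because $\n{\rho_0}_1 + \n{\sigma_0}_1 = 1$, the two are not both zero, and $\rho_0\sigma_0 = 0$. By Theorem \ref{thm: stability of Pk-PRSO}(\ref{stability of Pk-PRSO-b}), $\Phi(\rho_0) \ne \Phi(\sigma_0)$, so the minimum $\n{\Phi(\rho_0-\sigma_0)}_1$ is strictly positive.

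For the converse, assume the minimum is positive and verify condition (\ref{stability of Pk-PRSO-b}) of Theorem \ref{thm: stability of Pk-PRSO}. Take $\rho,\sigma \in \mcl{P}_k$, not both zero, with $\rho\sigma = 0$. Set $t = \n{\rho}_1 + \n{\sigma}_1 > 0$. Then $(\rho/t, \sigma/t) \in K$, since scaling by a positive number preserves positivity, rank and orthogonality. By linearity of $\Phi$,
\[
\n{\Phi(\rho-\sigma)}_1 = t\,\n{\Phi(\rho/t - \sigma/t)}_1 \ge t \cdot \min > 0,
\]
so $\Phi(\rho) \ne \Phi(\sigma)$. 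Theorem \ref{thm: stability of Pk-PRSO} then gives $\mcl{P}_k$-phase retrievability.

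The only step that needs care is the closedness of $\mcl{P}_k$, specifically the rank bound surviving limits. This follows from the minors argument, or from continuity of eigenvalues: a limit of matrices each with at most $k$ nonzero eigenvalues has at most $k$ nonzero eigenvalues. Everything else is routine.
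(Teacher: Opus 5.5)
Your proof is correct and takes essentially the same route as the paper. Both use compactness of the normalized set of orthogonal pairs, continuity of $(\rho,\sigma)\mapsto \n{\Phi(\rho-\sigma)}_1$, and Theorem~\ref{thm: stability of Pk-PRSO} for the converse. You also make explicit the steps the paper leaves implicit: closedness of $\mcl{P}_k$ via the minors, nonemptiness of the set, and the rescaling by $t=\n{\rho}_1+\n{\sigma}_1$ in the converse.
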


\begin{proof}
    Let $D = \{(\rho, \sigma): \rho, \sigma \in \mcl{P}_k, \rho \sigma = 0, \n{\rho - \sigma}_1 =  \n{\rho}_1 +\n{\sigma}_1 = 1\}$. Then $D$ is closed and bounded, so it is compact.
 
    Define a function $f : D \longrightarrow \mbb{R}_{\ge 0}$ by 
    $
    f(\rho, \sigma) =  \n{\Phi(\rho - \sigma)}_1,
    $
    then $f$ is clearly a continuous function.
    
    Suppose $\Phi$ is $\mcl{P}_k$-phase retrievable.  For all $(\rho, \sigma) \in D$, $\rho \ne \sigma$. Therefore, for all $(\rho, \sigma) \in D$, $\Phi(\rho - \sigma) \ne 0$. Hence, for every $(\rho, \sigma) \in D$, $f(\rho, \sigma) > 0$. By the continuity of $f$ and the compactness of $D$, it follows that $\underset{(\rho, \sigma) \in D}{\min} \n{\Phi(\rho - \sigma)}_1 > 0$.\\
    Conversely, if $\underset{(\rho, \sigma) \in D}{\min} \n{\Phi(\rho - \sigma)}_1 > 0$, then $f(\rho, \sigma) > 0$ for all $(\rho, \sigma) \in D$. Hence, for all $(\rho, \sigma) \in D$, we have $\Phi(\rho -\sigma) \ne 0$. By Theorem \ref{thm: stability of Pk-PRSO}, $\Phi$ is $\mcl{P}_k$-phase retrievable.
\end{proof}
\begin{remark}\label{rem: other norms}
    We note that Lemma \ref{lem: Pk-PRSO equiv} remains valid (with exactly the same proof) whenever the space $\Li(\mcl{X})$ is equipped with any norm, and $\Li(\mcl{Y})$ is replaced by any normed space.   
\end{remark}
   We now come to the main theorems of this section --- the stability of $\mcl{P}_k$-phase retrievable super operators. The first one shows that if a super operator is $\mcl{P}_k$-phase retrievable, then up to some error, any close super operator will also be $\mcl{P}_k$-phase retrievable.

    \begin{theorem} \label{thm: stability for PkPR s-operators}
        Let $\mcl{X}$ and $\mcl{Y}$ be $\mbb{F}$-Euclidean spaces, $\mcl{P}_k = \{\rho \in \Pos(\mcl{X})\} : \rank(\rho) \le k\}$ and $\Phi: \Li(\mcl{X}) \longrightarrow \Li(\mcl{Y})$ a $\mcl{P}_k$-phase retrievable super operator. Then there exists $\delta > 0$ such that for any super operator $\Psi : \Li(\mcl{X}) \longrightarrow \Li(\mcl{Y})$ satisfying $\n{\Phi -\Psi}_{1 \rightarrow 1} < \delta$, it follows that $\Psi$ is $\mcl{P}_k$-phase retrievable. 
    \end{theorem}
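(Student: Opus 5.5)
The plan is to use the quantitative characterization in Lemma \ref{lem: Pk-PRSO equiv}. Since $\Phi$ is $\mcl{P}_k$-phase retrievable, that lemma gives
\[
c := \min_{\substack{\rho, \sigma \in \mcl{P}_k \\ \rho \sigma = 0 \\ \n{\rho}_1 + \n{\sigma}_1 = 1}} \n{\Phi(\rho - \sigma)}_1 > 0 .
\]
I will take $\delta = c$. The goal is then to show that the same minimum, computed for any $\Psi$ with $\n{\Phi - \Psi}_{1\to 1} < \delta$, is still strictly positive. Lemma \ref{lem: Pk-PRSO equiv}, applied in the reverse direction, will then give that $\Psi$ is $\mcl{P}_k$-phase retrievable.

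The key estimate is a triangle inequality. Fix $(\rho,\sigma)$ in the compact set $D$ from the proof of Lemma \ref{lem: Pk-PRSO equiv}. Because $\rho\sigma = 0$ and both operators are positive semidefinite, they have orthogonal supports, so
\[
\n{\rho - \sigma}_1 = \n{\rho}_1 + \n{\sigma}_1 = 1 .
\]
Then
\[
\n{\Psi(\rho-\sigma)}_1 \ge \n{\Phi(\rho-\sigma)}_1 - \n{(\Phi-\Psi)(\rho-\sigma)}_1 \ge c - \n{\Phi-\Psi}_{1\to 1}\,\n{\rho-\sigma}_1 > c - \delta = 0 .
\]
Hence $\Psi(\rho - \sigma) \neq 0$ for every pair in $D$. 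This is exactly the positivity hypothesis of Lemma \ref{lem: Pk-PRSO equiv} for $\Psi$: the minimum over $D$ is attained by compactness and continuity, and it is strictly positive.

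One could instead bypass the lemma and conclude directly from Theorem \ref{thm: stability of Pk-PRSO}. Take any $\rho, \sigma \in \mcl{P}_k$, not both zero, with $\rho\sigma = 0$. Rescale both by $1/(\n{\rho}_1 + \n{\sigma}_1)$, which keeps them in $\mcl{P}_k$ and orthogonal. The rescaled pair lies in $D$, so $\Psi(\rho) \neq \Psi(\sigma)$, which is condition (b) of that theorem.

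There is no real obstacle here. The only point needing care is the identity $\n{\rho-\sigma}_1 = \n{\rho}_1 + \n{\sigma}_1$ for orthogonal positive operators. It follows because $\rho - \sigma$ is then already its own Hahn--Jordan decomposition, and it is what makes the operator norm bound apply with constant exactly $1$.
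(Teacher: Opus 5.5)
Your proof is correct and is essentially the paper's argument. Both take the positive minimum from Lemma~\ref{lem: Pk-PRSO equiv}, bound $\n{\Psi(\rho-\sigma)}_1$ from below using the triangle inequality and $\n{\Phi-\Psi}_{1\to 1}$, and apply the lemma in reverse; the only difference is that you use the exact identity $\n{\rho-\sigma}_1 = 1$ on $D$ to take $\delta = c$, whereas the paper uses the cruder bound $\n{\rho-\sigma}_1 \le 2$ and takes $\delta = c/2$.
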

      \begin{proof}
        Let $D$ be the set and $f$ the function defined in Lemma \ref{lem: Pk-PRSO equiv}.  Let $\epsilon = \underset{(\rho, \sigma) \in D}{\min}f(\rho, \sigma)$ and $\delta =\epsilon /2$. Then from Lemma \ref{lem: Pk-PRSO equiv}, $\epsilon > 0$, so that $\delta > 0$. Suppose that $\Psi: \Li(\mcl{X}) \longrightarrow \Li(\mcl{Y})$ is a super operator satisfying $\n{\Phi -\Psi}_{1 \rightarrow 1} < \delta$. Then for every $(\rho, \sigma) \in D$ we have :
        \begin{align*}
        \n{\Phi(\rho - \sigma)}_1 & \le  \n{(\Phi-\Psi)(\rho - \sigma)}_1 + \n{\Psi(\rho - \sigma)}_1\\
        & \le \n{\Phi - \Psi}_{1\rightarrow 1}\n{\rho - \sigma}_1 +\n{\Psi(\rho - \sigma)}_1\\
        & < \epsilon + \n{\Psi(\rho - \sigma)}_1.
        \end{align*}
        The first inequality follows from the triangle inequality, the second holds by the definition of the operator norm of $\Phi - \Psi$, and the third  follows because $\delta = \epsilon/2$, $\n{\Phi - \Psi}_{1 \rightarrow 1} < \delta$, and $\n{\rho - \sigma}_1 \le 2$. 
       Taking the minimum over $(\rho, \sigma) \in D$, it follows that
       \[
       \underset{(\rho, \sigma) \in D}{\min}\n{\Psi(\rho - \sigma)}_1 > 0.
       \]
       By Lemma \ref{lem: Pk-PRSO equiv}, $\Psi$ is $\mcl{P}_k$-phase retrievable.
    \end{proof}

    \begin{corollary} \label{cor: set PkPRSO is open}
      Let $\mcl{X}$ be an $\mbb{F}$-Euclidean space and $\mcl{P}_k = \{\rho \in \Pos(\mcl{X}) : \rank(\rho) \le k\}$. The set of super operators that are injective on $\mcl{P}_k$ is open.
    \end{corollary}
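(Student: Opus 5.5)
The plan is to read Corollary~\ref{cor: set PkPRSO is open} directly off Theorem~\ref{thm: stability for PkPR s-operators}, once the right topology is fixed and the phrase ``injective on $\mcl{P}_k$'' is identified with $\mcl{P}_k$-phase retrievability. The ambient space is the finite-dimensional real (or complex) vector space of all super operators $\Li(\mcl{X}) \to \Li(\mcl{Y})$. Since all norms on a finite-dimensional space are equivalent, it suffices to prove openness with respect to the induced norm $\n{\cdot}_{1\rightarrow 1}$; openness in any other norm topology then follows.

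First I identify the set in question. Let $\mathcal{S}$ denote the set of super operators $\Phi : \Li(\mcl{X}) \to \Li(\mcl{Y})$ that are injective on $\mcl{P}_k$. By Theorem~\ref{thm: injectivity and phase retrievability of a quantum channel} (equivalence of (a) and (b) with $\Omega = \mcl{P}_k$), $\mathcal{S}$ is exactly the set of $\mcl{P}_k$-phase retrievable super operators.

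Next, take any $\Phi \in \mathcal{S}$. Theorem~\ref{thm: stability for PkPR s-operators} supplies a $\delta > 0$ such that every $\Psi$ with $\n{\Phi - \Psi}_{1\rightarrow 1} < \delta$ is $\mcl{P}_k$-phase retrievable, hence lies in $\mathcal{S}$. So the open $\n{\cdot}_{1\rightarrow 1}$-ball of radius $\delta$ about $\Phi$ is contained in $\mathcal{S}$. Since $\Phi$ was arbitrary, $\mathcal{S}$ is open.

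There is no real obstacle here. The only points needing care are the norm-equivalence remark, so that ``open'' is unambiguous, and the fact that the target space $\mcl{Y}$ is fixed (implicit in the statement). One small caveat: the proof of Theorem~\ref{thm: stability for PkPR s-operators} uses $\n{\rho-\sigma}_1 \le 2$, whereas on $D$ in fact $\n{\rho-\sigma}_1 = 1$. Either bound gives a positive radius, so the argument is unaffected.
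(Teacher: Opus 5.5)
Your proposal is correct and is essentially the paper's own proof: for each $\Phi$ injective on $\mcl{P}_k$, Theorem~\ref{thm: stability for PkPR s-operators} gives a $\n{\cdot}_{1\rightarrow 1}$-ball around $\Phi$ consisting of $\mcl{P}_k$-phase retrievable super operators, and these are exactly the $\mcl{P}_k$-injective ones by Theorem~\ref{thm: injectivity and phase retrievability of a quantum channel}. Your extra remarks are accurate and only make explicit what the paper leaves implicit: norm equivalence makes ``open'' unambiguous, and $\n{\rho-\sigma}_1 = 1$ holds on $D$.
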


\begin{proof}
    Suppose $\Phi:\Li(\mcl{X}) \longrightarrow \Li(\mcl{Y})$ is a $\mcl{P}_k$ injective super operator. By Theorem \ref{thm: stability for PkPR s-operators} we can find $\delta > 0$ such that if a super operator $\Psi:\Li(\mcl{X}) \longrightarrow \Li(\mcl{Y})$ satisfies $\n{\Phi -\Psi}_{1 \rightarrow 1} < \delta $, then it is $\mcl{P}_k$-phase retrievable.
\end{proof}

The second theorem shows the stability of $\mcl{P}_k$-phase retrievable super operators with respect to perturbations of the inputs. It shows that if a super operator is $\mcl{P}_k$-phase retrievable then we can control the distance between the inputs (in $\mcl{P}_k$) of the super operator using the distance between their respective images, up to some constant.

\begin{theorem} \label{thm: stability Pk-PRSO-input}
    Let $k \in \mbb{N}$, let $\mcl{X}$ and $\mcl{Y}$ be $\mbb{F}$-Euclidean spaces, and $\mcl{P}_k =\{\rho \in \Pos(\mcl{X}) : \rank(\rho) \le k\}$. For every $\mcl{P}_k$-phase retrievable super operator $\Phi : \Li(\mcl{X}) \longrightarrow \Li(\mcl{Y})$, there exits $K_{\Phi} > 0$ such that for every $\rho, \sigma \in \mcl{P}_k$
    \[
    \n{\rho - \sigma}_1 \le K_{\Phi} \n{\Phi(\rho - \sigma)}_1.
    \]
    Therefore, a super operator $\Phi : \Li(\mcl{X}) \longrightarrow \Li(\mcl{Y})$ is $\mcl{P}_k$-phase retrievable if and only if its restriction to $\mcl{P}_k$ is a bi-Lipschitz map.
\end{theorem}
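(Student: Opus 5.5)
The plan is to reduce to orthogonal pairs via the Hahn--Jordan decomposition and then use the positive minimum from Lemma \ref{lem: Pk-PRSO equiv}.

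Let $\Phi$ be $\mcl{P}_k$-phase retrievable. Set
\[
\epsilon = \min_{(\rho,\sigma)\in D} \n{\Phi(\rho-\sigma)}_1,
\]
where $D$ is the compact set from the proof of Lemma \ref{lem: Pk-PRSO equiv}. That lemma gives $\epsilon>0$, and I will take $K_\Phi = 1/\epsilon$.

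Now fix arbitrary $\rho,\sigma\in\mcl{P}_k$ with $\rho\neq\sigma$; the case $\rho=\sigma$ is trivial. Write $\rho-\sigma = P-Q$ for the Hahn--Jordan decomposition. By Lemma \ref{Hahn-Jordan bound 2}, $P,Q\in\mcl{P}_k$, and $PQ=0$. Orthogonality of supports gives
\[
\n{\rho-\sigma}_1 = \n{P-Q}_1 = \Tr P + \Tr Q = \n{P}_1+\n{Q}_1 =: t > 0 .
\]
The normalized pair $(P/t, Q/t)$ still consists of positive semidefinite operators of rank at most $k$ with product zero. Its trace norms add to $1$, so it lies in $D$. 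Hence
\[
\n{\Phi(P-Q)}_1 = t\,\n{\Phi(P/t-Q/t)}_1 \ge t\epsilon = \epsilon\n{\rho-\sigma}_1 ,
\]
which is the claimed inequality after dividing by $\epsilon$.

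For the ``therefore'' part, the upper Lipschitz bound $\n{\Phi(\rho)-\Phi(\sigma)}_1 \le \n{\Phi}_{1\to1}\n{\rho-\sigma}_1$ holds for every linear map on a finite-dimensional space. Combined with the lower bound above, this shows that $\Phi|_{\mcl{P}_k}$ is bi-Lipschitz. Conversely, if $\Phi|_{\mcl{P}_k}$ is bi-Lipschitz, then $\Phi(\rho)=\Phi(\sigma)$ forces $\n{\rho-\sigma}_1=0$. So $\Phi$ is injective on $\mcl{P}_k$, which is $\mcl{P}_k$-phase retrievability by Theorem \ref{thm: injectivity and phase retrievability of a quantum channel} (or Remark \ref{rem: Special case of P-phase retrievable quantum channels}).

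The main obstacle is the reduction step, since $\rho-\sigma$ itself need not be a difference of orthogonal elements. This is exactly what Lemma \ref{Hahn-Jordan bound 2} handles. The other point to check is that the scaling preserves the constraints defining $D$ (rank, positivity, $PQ=0$). All of these are invariant under positive scalars, so the reduction goes through.
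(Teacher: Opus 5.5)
Your proof is correct and matches the paper's argument. You split $\rho-\sigma$ into its Hahn--Jordan parts, use Lemma \ref{Hahn-Jordan bound 2} to keep them in $\mcl{P}_k$, normalize the pair into the compact set $D$, and take $K_\Phi = 1/\epsilon$ with $\epsilon>0$ from Lemma \ref{lem: Pk-PRSO equiv}. You also prove the bi-Lipschitz equivalence explicitly, which the paper leaves unstated: the upper bound comes from $\n{\Phi}_{1\to 1}$, and the converse from injectivity via Theorem \ref{thm: injectivity and phase retrievability of a quantum channel}.
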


\begin{proof}
    
   Assume $\Phi$ is a $\mcl{P}_k$-phase retrievable super operator. Let $D = \{(\omega,\vartheta): \omega,\vartheta \in \mcl{P}_k, \omega \vartheta = 0, \n{\omega - \vartheta}_1 = 1\}$. Let $\rho, \sigma \in \mcl{P}_k$. If $\rho = \sigma$ then taking the conclusion holds trivially. Now we consider the case where $\rho \ne \sigma$. Let $\varphi - \vartheta$ be the Jordan-Hahn decomposition of $\rho - \sigma$. Hence, $\varphi \vartheta = 0$. Lemma \ref{Hahn-Jordan bound 2} tells us that $\varphi, \vartheta \in \mcl{P}_k$. Let $\epsilon = \underset{(\omega, \vartheta) \in D}{\min}\n{(\Phi(\omega - \vartheta))}_1 > 0$. By Lemma \ref{lem: Pk-PRSO equiv}, $\epsilon > 0$. Since $(\varphi/\n{\varphi - \vartheta}_1, \vartheta /\n{\varphi - \vartheta}_1) \in D$ and $\n{\rho - \sigma}_1 = \n{\varphi - \vartheta}_1$ we have
   \[
   0 < \epsilon \le \n{\Phi\Bigg(\frac{\varphi}{\n{\varphi - \vartheta}_1} - \frac{\vartheta}{\n{\varphi - \vartheta}_1} \Bigg)}_1= \frac{\n{\Phi(\varphi - \vartheta)}_1}{\n{\varphi - \vartheta}_1}  = \frac{\n{\Phi(\rho - \sigma)}_1}{\n{\rho - \sigma}_1}.
   \]
   Taking $K_\Phi = 1/\epsilon$, we have $\n{\rho - \sigma}_1 \le K_\Phi \n{\Phi(\rho - \sigma)}_1$ as required.
\end{proof}

\begin{remark}
Analogously to Remark \ref{rem: other norms},
note that versions of Lemma \ref{lem: Pk-PRSO equiv}, and Theorems \ref{thm: stability for PkPR s-operators} and \ref{thm: stability Pk-PRSO-input} are also valid, with the same proofs, with any other norms in place of the trace norm (but of course the values of the constants will change).
Similar Lipschitz stability results for $\mcl{P}_k$-phase retrieval, with respect to the Frobenius norm, were obtained in \cite[Prop. 4.1]{MR4477807} for the particular case of super operators associated to a Hermitian measurement (that is, of the form $\Psi_\mu$ for a measurement $\mu : \Sigma \to \Herm(\mcl{X})$).
\end{remark}

Worthy of note is the fact that the same approach of proof for Theorem \eqref{thm: stability Pk-PRSO-input} would work for a set $\mcl{P}$ which is ``\emph{closed under Hahn-Jordan decompositions}'' in the sense that for $\rho,\sigma \in \mcl{P}$ if we write the Hahn-Jordan decomposition $\rho-\sigma = \xi^+ - \xi^-$, then $\xi^+, \xi^- \in \mcl{P}$. For example, if $G$ is a group and for some $\mbb{F}$-Euclidean space $\mcl{X}$, $\pi: G \to \U(\mcl{X})$ is a unitary representation of $G$, we define $\mcl{P}_k^\pi = \{\rho \in \Pos(\mcl{X}) : \rank(\rho)\le k \text{ and }  \pi(g) \rho \pi(g)^{-1} = \rho \text{ for all } g \in G\}$,
that is, the positive semidefinite operators of rank at most $k$ which commute with the range of $\pi$.
It holds that $\mcl{P}^\pi_k$ is closed under Hahn-Jordan decompositions. The theorem that follows gives a stability result for $\mcl{P}_k^\pi$-phase retrievable super operators.

\theorem \label{thm: PK-PRSO group rep}
Let $G$ be a group, $\mcl{X}$ an $\mbb{F}$-Euclidean space, $\U(\mcl{X})$ the set of unitary operators on $\mcl{X}$, and $\pi: G \to \U(\mcl{X})$ a group representation. Let $\mcl{P}_k^\pi = \{\rho \in \Pos(\mcl{X}) : \rank(\rho)\le k \text{ and }  \pi(g) \rho \pi(g)^{-1} = \rho \text{ for all } g \in G\}$. For every $\mbb{F}$-Euclidean space $\mcl{Y}$ and $\mcl{P}_k ^ \pi$-phase retrievable super operator $\Phi : \Li(\mcl{X}) \to \Li(\mcl{Y})$, there exists $K_\pi > 0$ such that for every $\rho, \sigma \in \mcl{P}_k^\pi$ we have
\[
\n{\rho - \sigma}_1 \le K_\pi \n{\Phi(\rho - \sigma)}_1.
\]

\begin{proof}
    If $\rho = \sigma$, the conclusion clearly holds. We assume that $\rho \ne \sigma$ and let $\rho - \sigma = \xi^+ - \xi^-$ be the Hahn-Jordan decomposition of $\rho - \sigma$. Then $\xi^+ \ne \xi^-$. Now $\xi^+ = \tfrac{(\rho - \sigma) + |\rho - \sigma|}{2}$ and $\xi^- = \tfrac{|\rho - \sigma|- (\rho - \sigma)}{2}$. Let $\Gamma \subset \mbb{R}$ be the spectrum of $\rho - \sigma$. Define $\mbb{F}$-valued functions $h_1, h_2 : \Gamma \to \mbb{F}$ by $h_1(\lambda) = \tfrac{|\lambda|+\lambda}{2}$ and $h_2(\lambda) = \tfrac{|\lambda| - \lambda}{2}$. Then $h_1, h_2 \in C(\Gamma)$, where $C(\Gamma)$ is the set of complex-valued continuous functions on $\Gamma$. By \cite[Thm. 5.4.7]{buhler2018functional}, if $B\in \Li(\mcl{X})$ satisfies $B (\rho-\sigma) = (\rho-\sigma)B$, then $B f(\rho-\sigma) = f(\rho-\sigma) B$ for all $f \in C(\Gamma)$. Now $\xi^+ = h_1(\rho-\sigma), \xi^- = h_2(\rho-\sigma) \in C(\Gamma)$. Since $\rho - \sigma = \pi(g) (\rho-\sigma) \pi(g)^{-1} \iff (\rho-\sigma)\pi(g) = \pi(g) (\rho-\sigma)$ for all $g \in G$, it follows that $\xi^+ = \pi(g) \xi^+ \pi(g)^{-1}$ and $\xi^- = \pi(g) \xi^- \pi(g)^{-1}$ for all $g \in G$. From Lemma \ref{Hahn-Jordan bound 2}, it is clear that $\xi^+, \xi^- \in \mcl{P}_k$. Hence, we conclude that $\xi^+, \xi^- \in \mcl{P}_k^\pi$. \\
    
    If $\mcl{Y}$ is any $\mbb{F}$-Euclidean space and $\Phi : \Li(\mcl{X}) \to \Li(\mcl{Y})$ is any $\mcl{P}_k^\pi$-phase retrievable super operator, then applying the same ideas used in the proof of Theorem \ref{thm: stability Pk-PRSO-input} it follows that we can find $K_{\pi}$ such that $\n{\rho - \sigma}_1 \le K_{\pi} \n{\Phi(\rho - \sigma)}_1.$
\end{proof}

\section{Stability of $\mcl{P}_k$-Phase Retrievable Measurements} \label{Sec: Stability of PKPRM}

Since $\mcl{P}_k$-phase retrievable super operators are stable, it is natural to also investigate the stability of measurements. For a set $\Omega \subseteq \Li(\mcl{X})$, Theorem \ref{thm: equivalence of P-phase retrievable measurements and channels} establishes that a measurement is $\Omega$-phase retrievable precisely when an associated super operator defined in (\ref{eqn: measurement SO}) is $\Omega$-phase retrievable. Hence for a finite set $\Sigma$ and an $\mbb{F}$-Euclidean space $\mcl{X}$, a measurement $\mu : \Sigma \longrightarrow \Li(\mcl{X})$ is not $\mcl{P}_k$-phase retrievable if and only if the associated super operator $\Psi_\mu$ defined in (\ref{eqn: measurement SO}) is not injective on $\mcl{P}_k$. In what follows, we use this correspondence to translate the results of Section \ref{Sec: Stability of PKPRMSO} into parallel statements for measurements.

\begin{theorem} 
    Let $k \in \mbb{N}$, $\mcl{X}$ an $\mbb{F}$-Euclidean space, $\Sigma$ a finite set, $\mcl{P}_k \subset \Pos(\mcl{X})$ be positive semidefinite operators of rank at most $k$, and $\mu : \Sigma \longrightarrow \Li(\mcl{X})$ a measurement. The following are equivalent:
    \begin{enumerate}[(a)]
        \item \label{stability of Pk-PRM-a} 
        The measurement $\mu$ is $\mcl{P}_k$-phase retrievable.
        
        \item \label{stability of Pk-PRM-b} 
        For every $\rho, \sigma \in \mcl{P}_k$, not both zero, such that $\rho\sigma=0$, there exists $a \in \Sigma$ such that $\pair{\mu(a)}{\rho} \ne \pair{\mu(a)}{\sigma}$.
    \end{enumerate}
\end{theorem}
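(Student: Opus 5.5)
The plan is to deduce this from Theorem \ref{thm: stability of Pk-PRSO} applied to the super operator $\Psi_\mu$ of (\ref{eqn: measurement SO}), using Theorem \ref{thm: equivalence of P-phase retrievable measurements and channels} as the bridge. By Theorem \ref{thm: equivalence of P-phase retrievable measurements and channels} with $\Omega = \mcl{P}_k$, the measurement $\mu$ is $\mcl{P}_k$-phase retrievable if and only if $\Psi_\mu$ is $\mcl{P}_k$-phase retrievable. By Theorem \ref{thm: stability of Pk-PRSO}, $\Psi_\mu$ is $\mcl{P}_k$-phase retrievable if and only if $\Psi_\mu(\rho) \ne \Psi_\mu(\sigma)$ for all $\rho,\sigma \in \mcl{P}_k$, not both zero, with $\rho\sigma = 0$.

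It then remains to translate that last condition into the form of statement (\ref{stability of Pk-PRM-b}). Since
\[
\Psi_\mu(\rho) - \Psi_\mu(\sigma) = \sum_{a\in\Sigma} \big(\pair{\mu(a)}{\rho} - \pair{\mu(a)}{\sigma}\big)\E_{a,a},
\]
and the $\E_{a,a}$ are linearly independent, $\Psi_\mu(\rho) \ne \Psi_\mu(\sigma)$ holds exactly when some coefficient is nonzero. That is, there exists $a\in\Sigma$ with $\pair{\mu(a)}{\rho} \ne \pair{\mu(a)}{\sigma}$. Chaining the three equivalences gives (\ref{stability of Pk-PRM-a})$\iff$(\ref{stability of Pk-PRM-b}).

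Alternatively, one could run the proof of Theorem \ref{thm: stability of Pk-PRSO} directly. For (\ref{stability of Pk-PRM-a})$\Rightarrow$(\ref{stability of Pk-PRM-b}): equality of all measurement values would force $\rho=\sigma$, hence $\rho^2=\rho\sigma=0$, so $\rho=\sigma=0$, a contradiction. For the converse: take the Hahn--Jordan decomposition $P-Q$ of $\rho-\sigma$. By Lemma \ref{Hahn-Jordan bound 2}, $P,Q\in\mcl{P}_k$; moreover $PQ=0$, and $P,Q$ are not both zero when $\rho\ne\sigma$. Condition (\ref{stability of Pk-PRM-b}) then gives some $a$ with $\pair{\mu(a)}{\rho-\sigma}\ne 0$.

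There is no real obstacle here. The only point needing care is the not-both-zero and orthogonality bookkeeping, which is already handled inside Theorem \ref{thm: stability of Pk-PRSO} and Lemma \ref{Hahn-Jordan bound 2}.
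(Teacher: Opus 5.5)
Your proposal is correct and follows the paper's intended route. The paper gives no separate proof; it obtains this statement by transporting Theorem \ref{thm: stability of Pk-PRSO} to $\Psi_\mu$ via Theorem \ref{thm: equivalence of P-phase retrievable measurements and channels}, exactly as in your main argument, with the linear independence of the $\E_{a,a}$ supplying the translation, and your alternative direct argument just inlines the proof of Theorem \ref{thm: stability of Pk-PRSO} and is also fine.
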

The next result is an immediate consequence of the previous one, in the same way that Corollary \ref{cor:k-phase retrievable s-operators} follows from Proposition \ref{prop: composed meas Herm} and Corollary \ref{cor: KPR Herm M}.

  \begin{corollary} \label{cor: Pk-phase retrievable ptps operator}
       Let $\mcl{X}$ and $\mcl{Y}$ be $\mbb{F}$-Euclidean spaces, $\Sigma$ a finite set, and $\mcl{P}_k = \{\rho \in \Pos(\mcl{X}): \rank(\rho) \le k \}$. Let $\mu : \Sigma \longrightarrow \Li(\mcl{Y})$ a measurement and $\Phi : \Li(\mcl{X}) \longrightarrow \Li(\mcl{Y})$ be a super operator. The following are equivalent:
    \begin{enumerate}[(a)]
        \item The super operator $\Phi$ is $\mcl{P}_k$-phase retrievable.
        \item For every $\rho, \sigma \in \mcl{P}_k$ not both zero, with $\rho \sigma = 0$, there exists $a \in \Sigma$ such that $\pair{\mu(a)}{\Phi(\rho)} \ne \pair{\mu(a)}{\Phi(\sigma)}$.   
    \end{enumerate}
  \end{corollary}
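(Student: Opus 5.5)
The plan is to reduce the corollary to the preceding theorem, applied to the composed measurement $\nu = \Phi^*\circ\mu : \Sigma \to \Li(\mcl{X})$. The key identity is
\[
\pair{\nu(a)}{\rho} = \pair{\Phi^*(\mu(a))}{\rho} = \pair{\mu(a)}{\Phi(\rho)} \qquad (a\in\Sigma,\ \rho \in \Li(\mcl{X})).
\]
With it, condition (b) of the corollary is literally condition (b) of the preceding theorem for $\nu$. That theorem then says (b) holds if and only if $\nu$ is $\mcl{P}_k$-phase retrievable.

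For (b)$\implies$(a), I would argue as follows. By the preceding theorem, (b) gives that $\Phi^*\circ\mu$ is $\mcl{P}_k$-phase retrievable. By Definition \ref{def: Omega-PRs-operator}, this says precisely that $\Phi$ is $\mcl{P}_k$-phase retrievable, with witnessing measurement $\mu$. Equivalently, one can go through Theorem \ref{thm: injectivity and phase retrievability of a quantum channel}: (b)$\implies$(a) there shows that $\Phi$ is injective on $\mcl{P}_k$.

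For (a)$\implies$(b), I would start from Theorem \ref{thm: injectivity and phase retrievability of a quantum channel}, which turns (a) into injectivity of $\Phi$ on $\mcl{P}_k$. The main obstacle is that this injectivity must be transferred to $\nu$. This transfer does not hold for an arbitrary $\mu$: for instance $\mu \equiv 0$ breaks it. So I would read the corollary as in the paper's convention that $\Phi$ is $\mcl{P}_k$-phase retrievable \emph{with respect to} $\mu$, meaning $\Phi^*\circ\mu$ is $\mcl{P}_k$-phase retrievable. Alternatively, I would assume the spanning condition $\Phi^*(\spa\{\mu(a)\}_{a\in\Sigma}) = \range(\Phi^*)$; this holds automatically when $\mu$ is informationally complete.

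Under the spanning hypothesis, part (c) of Theorem \ref{thm: injectivity and phase retrievability of a quantum channel} gives that $\nu=\Phi^*\circ\mu$ is $\mcl{P}_k$-phase retrievable. Under the "with respect to $\mu$" reading, this is immediate. In either case, the (a)$\implies$(b) direction of the preceding theorem, applied to $\nu$, yields the required $a\in\Sigma$ with $\pair{\mu(a)}{\Phi(\rho)} \neq \pair{\mu(a)}{\Phi(\sigma)}$ whenever $\rho,\sigma\in\mcl{P}_k$ are not both zero and $\rho\sigma=0$.

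If one prefers to avoid the preceding theorem, there is a direct route via Hahn--Jordan. Suppose $\rho,\sigma\in\mcl{P}_k$ are distinct with equal $\nu$-data, and write $\rho-\sigma = P-Q$ with $PQ=0$. Lemma \ref{Hahn-Jordan bound 2} puts $P,Q\in\mcl{P}_k$. By linearity $P$ and $Q$ have equal $\nu$-data, and they are not both zero, which contradicts (b).
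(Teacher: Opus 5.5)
Your argument is correct and follows the paper's route: the paper obtains this corollary as an immediate consequence of the preceding measurement theorem applied to the composed measurement $\Phi^*\circ\mu$, using $\pair{\Phi^*(\mu(a))}{\rho}=\pair{\mu(a)}{\Phi(\rho)}$. Your caveat is also right, and the paper does not address it: if (a) only means $\mcl{P}_k$-phase retrievability of $\Phi$ in the sense of Definition~\ref{def: Omega-PRs-operator}, then (a)$\implies$(b) fails (e.g.\ for $\mu\equiv 0$), so the paper's derivation tacitly reads (a) as ``$\Phi$ is $\mcl{P}_k$-phase retrievable with respect to $\mu$'' (alternatively, one can assume the spanning condition of Theorem~\ref{thm: injectivity and phase retrievability of a quantum channel}(c)).
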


  \begin{remark}
      In simple terms, Corollary \ref{cor: Pk-phase retrievable ptps operator} says that a super operator $\Phi$ is $\mcl{P}_k$-phase retrievable if and only if the measurement $\Phi ^ {*} \circ \mu$ distinguishes between any two positive semidefinite operators (not both zero) of rank at most $k$, with orthogonal images.
  \end{remark}

For the next result, we note that given a finite set $\Sigma$, an $\mbb{F}$-Euclidean space $\mcl{X}$, and a measurement $\mu : \Sigma \longrightarrow \Li(\mcl{X})$, we can deduce from the definition of the corresponding super operator $\Psi_\mu : \Li(\mcl{X}) \longrightarrow \Li(\mbb{C}^ \Sigma)$ that $\n{\Psi_\mu(X - Y)}_1 = \n{(\pair{\mu(a)}{X-Y})_{a \in \Sigma}}_{\ell_1}$ for all $X,Y \in \Li(\mcl{X})$. We use this idea to relate Lemma \ref{lem: Pk-PRSO equiv} and the lemma that follows.

\begin{lemma} \label{lem: Pk-PRM equiv}
    Let $\Sigma$ be a finite set, $\mcl{X}$ an $\mbb{F}$-Euclidean space, and $\mcl{P}_k = \{\rho \in \Pos(\mcl{X}): \rank(\rho) \le k\}$. A measurement $\mu : \Sigma \longrightarrow \Li(\mcl{X})$ is $\mcl{P}_k$-phase retrievable if and only if
    \[
    \underset{\substack{\rho, \sigma \in \mcl{P}_k \\ \rho \sigma = 0 \\ \n{\rho}_1+\n{\sigma}_1 = 1}}{\min} \n{(\pair{\mu(a)}{\rho - \sigma})}_{\ell_1} > 0.
    \]
\end{lemma}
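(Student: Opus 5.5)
The plan is to reduce the statement to Lemma \ref{lem: Pk-PRSO equiv}, applied to the associated super operator $\Psi_\mu$. The bridge is Theorem \ref{thm: equivalence of P-phase retrievable measurements and channels} with $\Omega = \mcl{P}_k$. It says that $\mu$ is $\mcl{P}_k$-phase retrievable if and only if $\Psi_\mu : \Li(\mcl{X}) \to \Li(\mbb{F}^\Sigma)$ is $\mcl{P}_k$-phase retrievable.

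Next we invoke Lemma \ref{lem: Pk-PRSO equiv} for $\Phi = \Psi_\mu$. It gives that $\Psi_\mu$ is $\mcl{P}_k$-phase retrievable if and only if
\[
\min_{\substack{\rho,\sigma\in\mcl{P}_k\\ \rho\sigma=0\\ \n{\rho}_1+\n{\sigma}_1=1}} \n{\Psi_\mu(\rho-\sigma)}_1 > 0 .
\]

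It then remains to identify the objective function. For any $X \in \Li(\mcl{X})$, $\Psi_\mu(X) = \sum_{a} \pair{\mu(a)}{X} \E_{a,a}$ is a diagonal operator. Its singular values are therefore the moduli $|\pair{\mu(a)}{X}|$, so $\n{\Psi_\mu(X)}_1 = \sum_{a\in\Sigma} |\pair{\mu(a)}{X}| = \n{(\pair{\mu(a)}{X})_{a\in\Sigma}}_{\ell_1}$. This is exactly the identity noted just before the lemma. Substituting $X = \rho - \sigma$, the two minima coincide term by term over the same domain, and the equivalence follows.

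The only point needing care is that the minimum is attained, so that writing $\min$ is legitimate. The domain $D$ is compact, as shown in the proof of Lemma \ref{lem: Pk-PRSO equiv}: it is closed because rank at most $k$, positivity and $\rho\sigma=0$ are closed conditions, and it is bounded. The map $(\rho,\sigma) \mapsto \n{(\pair{\mu(a)}{\rho-\sigma})_{a}}_{\ell_1}$ is continuous, so the minimum exists. I expect no real obstacle. Alternatively, one could argue directly as in Lemma \ref{lem: Pk-PRSO equiv}, using part (b) of the preceding measurement theorem (orthogonal pairs) together with compactness.
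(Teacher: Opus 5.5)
Your proposal is correct and follows the route the paper indicates (the paper gives no separate proof, only the remark before the lemma). That route combines Theorem \ref{thm: equivalence of P-phase retrievable measurements and channels} with Lemma \ref{lem: Pk-PRSO equiv} for $\Phi = \Psi_\mu$, using the identity $\n{\Psi_\mu(X)}_1 = \n{(\pair{\mu(a)}{X})_{a\in\Sigma}}_{\ell_1}$ for the diagonal operator $\Psi_\mu(X)$, and your compactness remark justifying the $\min$ supplies a detail the paper leaves implicit.
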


We now state the two main theorems of the section, which follow in correspondence from the two main theorems in Section \ref{Sec: Stability of PKPRMSO}. In the first, we show stability in the sense that if we replace a $\mcl{P}_k$-phase retrievable measurement with another one \emph{close enough}, then the new measurement will also be $\mcl{P}$-phase retrievable. But we need to discuss what it means to be ``close enough."
Note that a measurement $\mu : \Sigma \longrightarrow \Li(\mcl{X})$ is simply an element of the vector space $\Li(\mcl{X})^\Sigma$, so a natural way to compare measurements is to use a norm in this vector space. Below, we make a convenient choice of norm that will readily relate to the results in Section \ref{Sec: Stability of PKPRMSO}.
\\

Let $\Sigma$ be a finite set, and $\mcl{X}$ an $\mbb{F}$-Euclidean space. Let $\mcl{M}_{\Sigma}$ represent the set of all measurements of the form $\mu : \Sigma \longrightarrow \Li(\mcl{X})$. Define a function $d: \mcl{M}_{\Sigma} \times \mcl{M}_{\Sigma} \longrightarrow \mbb{R}$ by 

\[
    d(\mu, \nu) = \n{\Psi_\mu - \Psi_\nu}_{1 \to 1}.
\]

It is not difficult to see that $d$ is a metric on $\mcl{M}_{\Sigma}$. Hence, ($\mcl{M}_{\Sigma},d$) is a metric space, so we can think of the distance between two measurements defined on $\Sigma$ with the range contained in $\Li(\mcl{X})$. The following theorem says that if we have a $\mcl{P}_k$-phase retrievable measurement $\mu$, then up to some error $\delta > 0$, any measurement $\nu$ whose distance from $\mu$ is less than $\delta$, is also $\mcl{P}_k$-phase retrievable. We use the metric $d$ to measure the closeness of two measurements $\mu, \nu \in \mcl{M}_\Sigma$.
\\
\begin{theorem} \label{thm: stability of Pk-PRM}
    Let $\mcl{X}$ be an $\mbb{F}$-Euclidean space, $\Sigma$ a finite set and $\mcl{M}_\Sigma = \{\mu: \Sigma \longrightarrow \Li(\mcl{X})\}$ the set of measurements on $\mcl{M}_\Sigma$ with image in $\Li(\mcl{X})$. For every $\mcl{P}_k$-phase retrievable measurement $\mu \in \mcl{M}_{\Sigma}$ there exists $\delta > 0$ such that if $\nu \in \mcl{M}_{\Sigma}$ satisfies $d(\mu, \nu) < \delta$, then $\nu$ is also $\mcl{P}_k$-phase retrievable.
\end{theorem}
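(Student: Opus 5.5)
The plan is to reduce the statement to Theorem \ref{thm: stability for PkPR s-operators} through the correspondence $\mu \mapsto \Psi_\mu$, because the metric $d$ was defined precisely so that this reduction is immediate. Let $\mu \in \mcl{M}_\Sigma$ be $\mcl{P}_k$-phase retrievable. By Theorem \ref{thm: equivalence of P-phase retrievable measurements and channels}, the super operator $\Psi_\mu : \Li(\mcl{X}) \to \Li(\mbb{F}^\Sigma)$ is $\mcl{P}_k$-phase retrievable. Theorem \ref{thm: stability for PkPR s-operators} then gives $\delta > 0$ such that every super operator $\Psi : \Li(\mcl{X}) \to \Li(\mbb{F}^\Sigma)$ with $\n{\Psi_\mu - \Psi}_{1\to1} < \delta$ is $\mcl{P}_k$-phase retrievable. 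This is the $\delta$ we will use.

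Now let $\nu \in \mcl{M}_\Sigma$ satisfy $d(\mu,\nu) < \delta$. By definition, this means $\n{\Psi_\mu - \Psi_\nu}_{1\to 1} < \delta$. Applying the previous step with $\Psi = \Psi_\nu$ shows that $\Psi_\nu$ is $\mcl{P}_k$-phase retrievable. Theorem \ref{thm: equivalence of P-phase retrievable measurements and channels}, used in the reverse direction, then shows that $\nu$ is $\mcl{P}_k$-phase retrievable.

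There is no real obstacle here. The only point needing care is that Theorem \ref{thm: stability for PkPR s-operators} is stated for arbitrary super operators between the fixed spaces $\Li(\mcl{X})$ and $\Li(\mcl{Y})$. We take $\mcl{Y} = \mbb{F}^\Sigma$, and $\Psi_\nu$ is one such super operator, so the theorem applies.

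If one prefers a self-contained argument, Lemma \ref{lem: Pk-PRM equiv} can be used directly instead. Set $\epsilon = \min \n{(\pair{\mu(a)}{\rho-\sigma})_a}_{\ell_1} > 0$ over the compact set of orthogonal pairs $(\rho,\sigma)$ normalized by $\n{\rho}_1 + \n{\sigma}_1 = 1$. For any such pair, the triangle inequality together with $\n{\Psi_\mu(X)}_1 = \n{(\pair{\mu(a)}{X})_a}_{\ell_1}$ and $\n{\rho-\sigma}_1 = 1$ gives
\[
\n{(\pair{\nu(a)}{\rho-\sigma})_a}_{\ell_1} \ge \epsilon - d(\mu,\nu) > 0
\]
whenever $d(\mu,\nu) < \epsilon$. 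Lemma \ref{lem: Pk-PRM equiv} then yields that $\nu$ is $\mcl{P}_k$-phase retrievable, with $\delta = \epsilon$.
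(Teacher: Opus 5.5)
Your proposal is correct and follows the paper's proof: you pass to $\Psi_\mu$ via Theorem \ref{thm: equivalence of P-phase retrievable measurements and channels}, take $\delta$ from Theorem \ref{thm: stability for PkPR s-operators}, observe that $d(\mu,\nu)<\delta$ is by definition $\n{\Psi_\mu-\Psi_\nu}_{1\to1}<\delta$, and return to $\nu$ via the reverse direction of the equivalence theorem. That last step is stated more carefully than in the paper, which closes with a chain of ``if and only if''s; your self-contained variant via Lemma \ref{lem: Pk-PRM equiv} is the same argument with the proof of Theorem \ref{thm: stability for PkPR s-operators} inlined, and it is also valid.
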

   \begin{proof}
    By Theorem \ref{thm: equivalence of P-phase retrievable measurements and channels}, the assumption that $\mu \in \mcl{M}_\Sigma$ is $\mcl{P}_k$-phase retrievable implies that the corresponding super operator $\Psi_\mu$ defined in \eqref{eqn: measurement SO} is $\mcl{P}_k$-phase retrievable. By Theorem \ref{thm: stability for PkPR s-operators}, there exists $\delta > 0$ such that any super operator $\Phi$ that satisfies $\n{\Psi_\mu - \Phi}_{1 \rightarrow 1} < \delta$ is $\mcl{P}_k$-phase retrievable. A measurement $\nu \in \mcl{M}_\Sigma$ satisfies $d(\mu, \nu) < \delta$ if and only if $\n{\Psi_\mu-\Psi_\nu}_{1 \to 1} < \delta$, if and only if $\nu$ is $\mcl{P}_k$-phase retrievable.
\end{proof}

\begin{corollary} \label{cor: PKPRM set open}
    Let $\Sigma$ be a finite set, and $\mcl{X}$ an $\mbb{F}$-Euclidean space. The set of $\mcl{P}_k$-phase retrievable measurements of the form $\mu : \Sigma \longrightarrow \Li(\mcl{X})$ is open with respect to the metric $d$.
\end{corollary}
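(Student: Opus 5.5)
The statement to prove is Corollary \ref{cor: PKPRM set open}: the set of $\mcl{P}_k$-phase retrievable measurements $\mu:\Sigma\to\Li(\mcl{X})$ is open in the metric space $(\mcl{M}_\Sigma,d)$. The plan is simply to unwind the definition of an open set in a metric space and invoke Theorem \ref{thm: stability of Pk-PRM}.

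Fix a $\mcl{P}_k$-phase retrievable measurement $\mu\in\mcl{M}_\Sigma$. By Theorem \ref{thm: stability of Pk-PRM} there is $\delta>0$ such that every $\nu\in\mcl{M}_\Sigma$ with $d(\mu,\nu)<\delta$ is again $\mcl{P}_k$-phase retrievable. Equivalently, the open $d$-ball of radius $\delta$ centred at $\mu$ is contained in the set of $\mcl{P}_k$-phase retrievable measurements. Since $\mu$ was arbitrary, every point of that set has such a ball around it, so the set is open.

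Two small points need checking along the way. First, the inclusion of the ball relies on the direction ``$d(\mu,\nu)<\delta$ implies $\nu$ is $\mcl{P}_k$-phase retrievable'' in the proof of Theorem \ref{thm: stability of Pk-PRM}. That proof is written with an ``if and only if'', but only the forward implication is used. It holds because $\n{\Psi_\mu-\Psi_\nu}_{1\to1}<\delta$ makes $\Psi_\nu$ $\mcl{P}_k$-phase retrievable by Theorem \ref{thm: stability for PkPR s-operators}, and Theorem \ref{thm: equivalence of P-phase retrievable measurements and channels} then transfers this to $\nu$.

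Second, $d$ must be a genuine metric, so that ``open'' is meaningful. The only nontrivial axiom is definiteness: if $\Psi_\mu=\Psi_\nu$, then $\pair{\mu(a)-\nu(a)}{X}=0$ for all $X\in\Li(\mcl{X})$ and $a\in\Sigma$, so $\mu=\nu$. Symmetry and the triangle inequality are inherited from the operator norm $\n{\cdot}_{1\to1}$. Neither point is an obstacle.
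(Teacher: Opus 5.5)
Your proposal is correct and follows the paper's own argument: for each $\mcl{P}_k$-phase retrievable $\mu$, Theorem \ref{thm: stability of Pk-PRM} gives a $d$-ball around $\mu$ made up of $\mcl{P}_k$-phase retrievable measurements. Your two side checks are accurate and make the argument a bit more careful than the paper's. Only the forward implication in the last line of that theorem's proof is needed; the paper's ``if and only if'' there is an overstatement. And $d$ is definite because $\Psi_\mu=\Psi_\nu$ forces $\mu=\nu$.
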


\begin{proof}
    Let $\mu: \Sigma \longrightarrow \Li(\mcl{X})$ be a $\mcl{P}_k$-phase retrievable measurement. By Theorem \ref{thm: stability of Pk-PRM}, there exists $\delta > 0$ such that if $\nu : \Sigma \longrightarrow \Li(\mcl{X})$ satisfies $d(\mu, \nu) < \delta$, then $\nu$ is $\mcl{P}_k$-phase retrievable.
\end{proof}

The second main result shows the stability of a $\mcl{P}_k$-phase retrievable measurement with respect to elements of $\mcl{P}_k$ (regarded as inputs). 
\begin{theorem} \label{thm: PK-PRSO measurement}
    Let $k \in \mbb{N}$, $\mcl{X}$ an $\mbb{F}$-Euclidean space, $\Sigma$ a finite set and $\mcl{P}_k = \{\rho \in \Pos(\mcl{X}): \rank(\rho) \le k\}$. For every $\mcl{P}_k$-phase retrievable measurement $\mu : \Sigma \longrightarrow \Li(\mcl{X})$, there exists $K_\mu > 0$ such that for all $\rho, \sigma \in \mcl{P}_k$ we have
\[
\n{\rho -\sigma}_1 \le K_\mu \n{(\pair{\mu(a)}{\rho - \sigma})_{a \in \Sigma}}_{\ell_1}.
\]

\begin{proof}
    Assume $\mu$ is $\mcl{P}_k$-phase retrievable. Then the super operator $\Psi_\mu$ corresponding to $\mu$ is $\mcl{P}_k$-phase retrievable, and we have that $\n{\Psi_\mu(X)}_1 = \n{(\pair{\mu(a)}{X})_{a \in \Sigma}}_{\ell_1}$ for all $X \in \Li(\mcl{X})$. By Theorem \ref{thm: stability Pk-PRSO-input}, there exists $K_\mu > 0$ such that for all $\rho, \sigma \in \mcl{P}_k$, we have $\n{(\rho-\sigma)}_1 \le K_\mu \n{(\pair{\mu(a)}{\rho-\sigma})_{a \in \Sigma}}_1$.
\end{proof}
\end{theorem}

The last result in this section is the measurement version of Theorem \ref{thm: PK-PRSO group rep} and its proof is similar to that of Theorem \ref{thm: PK-PRSO measurement}.

\begin{theorem} \label{thm:PK-PRSO group rep measurement}
    Let $G$ be a group, $\mcl{X}$ an $\mbb{F}$-Euclidean space, $\Sigma$ a finite set, $\pi:G\to \U(\mcl{X})$ a unitary representation, and $\mcl{P}_k^\pi = \{\rho \in \Pos(\mcl{X}): \rank(\rho)\le k, \text{ and } \pi(g) \rho \pi(g) ^ {-1} = \rho \text{ for all } g \in G\}$. For every $\mcl{P}_k^\pi$-phase retrievable measurement $\mu:\Sigma \to \Li(\mcl{X})$, there exists $C_\mu > 0$ such that for all $\rho, \sigma \in \mcl{P}_k^\pi$, we have
    \[
    \n{\rho-\sigma}_1 \le C_\mu \n{(\pair{\mu(a)}{\rho-\sigma})_{a \in \Sigma}}_{\ell_1}.
    \]
\end{theorem}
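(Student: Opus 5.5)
The plan is to reduce this statement to Theorem \ref{thm: PK-PRSO group rep} via the correspondence between measurements and super operators, exactly as Theorem \ref{thm: PK-PRSO measurement} was deduced from Theorem \ref{thm: stability Pk-PRSO-input}.

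First, given a $\mcl{P}_k^\pi$-phase retrievable measurement $\mu:\Sigma\to\Li(\mcl{X})$, I will form the associated super operator $\Psi_\mu:\Li(\mcl{X})\to\Li(\mbb{F}^\Sigma)$ from \eqref{eqn: measurement SO}. By Theorem \ref{thm: equivalence of P-phase retrievable measurements and channels}, applied with $\Omega=\mcl{P}_k^\pi$ (that theorem holds for arbitrary $\Omega$), $\Psi_\mu$ is $\mcl{P}_k^\pi$-phase retrievable. Equivalently, by Theorem \ref{thm: injectivity and phase retrievability of a quantum channel}, it is injective on $\mcl{P}_k^\pi$, which is the hypothesis Theorem \ref{thm: PK-PRSO group rep} needs.

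Second, Theorem \ref{thm: PK-PRSO group rep} with $\mcl{Y}=\mbb{F}^\Sigma$ and $\Phi=\Psi_\mu$ gives a constant $K_\pi>0$ such that $\n{\rho-\sigma}_1\le K_\pi\n{\Psi_\mu(\rho-\sigma)}_1$ for all $\rho,\sigma\in\mcl{P}_k^\pi$. For any $X$, $\Psi_\mu(X)$ is diagonal with entries $\pair{\mu(a)}{X}$. The trace norm of a diagonal matrix is the sum of the absolute values of its diagonal entries, so $\n{\Psi_\mu(X)}_1=\n{(\pair{\mu(a)}{X})_{a\in\Sigma}}_{\ell_1}$, as already noted before Lemma \ref{lem: Pk-PRM equiv}. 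Taking $X=\rho-\sigma$ and setting $C_\mu=K_\pi$ yields the claimed inequality.

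The only point needing care is that Theorem \ref{thm: PK-PRSO group rep} really does rest on $\mcl{P}_k^\pi$ being closed under Hahn--Jordan decompositions. That theorem verifies this via functional calculus and Lemma \ref{Hahn-Jordan bound 2}, and its compactness argument is the analogue of Lemma \ref{lem: Pk-PRSO equiv} with $\mcl{P}_k^\pi$ in place of $\mcl{P}_k$. The minimum there is attained because the relevant set of orthogonal pairs with $\n{\rho}_1+\n{\sigma}_1=1$ in $\mcl{P}_k^\pi$ is closed: rank at most $k$ and the commutation conditions are closed conditions. The minimum is positive by injectivity of $\Psi_\mu$ on $\mcl{P}_k^\pi$. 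Since these facts were established in the super operator setting, the measurement case adds nothing beyond the norm identification above, and I do not expect any real obstacle.
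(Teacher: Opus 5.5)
Your proposal is correct and follows the paper's approach. The paper's proof is only indicated as being ``similar to that of Theorem \ref{thm: PK-PRSO measurement}'': you pass to $\Psi_\mu$ via Theorem \ref{thm: equivalence of P-phase retrievable measurements and channels} with $\Omega=\mcl{P}_k^\pi$, invoke Theorem \ref{thm: PK-PRSO group rep}, and identify the trace norm of the diagonal operator $\Psi_\mu(\rho-\sigma)$ with the $\ell_1$ norm of $(\pair{\mu(a)}{\rho-\sigma})_{a\in\Sigma}$.
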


\section{Stability of $\mcl{P}_k$-Phase Retrievability with respect to the Bures--Wasserstein distance} \label{Sec: Stability BW Distance}

In Section \ref{Sec: Xter PRMSO & PRQM}, we briefly introduced the Bhattacharyya coefficient and described it as a classical analogue of quantum fidelity. The quantum fidelity between two positive semidefinite operators is a non‑negative quantity that quantifies their degree of similarity. In particular, it reflects the extent to which the two operators overlap: the fidelity attains its maximal value when the operators coincide, and it vanishes precisely when their supports are orthogonal. For any pair of positive semi definite operators $\rho$ and $\sigma$ defined on some $\mbb{F}$-Euclidean space $\mcl{X}$, the quantum fidelity of $\rho$ and $\sigma$, denoted $F(\rho, \sigma)$, is defined as
\[
F(\rho, \sigma) = \n{\sqrt{\rho} \sqrt{\sigma}}_1.
\]
Using the definition of trace norm, it follows that an equivalent definition for quantum fidelity is
\[
F(\rho, \sigma) = \Tr\Big (\sqrt{\sqrt{\rho}\sigma \sqrt{\rho}} \Big).
\]
For any positive semidefinite operators $\rho$ and $\sigma$, it is always that case that $F(\rho, \sigma)^2 \le \Tr(\rho) \Tr(\sigma)$. If $\rho, \sigma$ are states, it follows that $0 \le F(\rho, \sigma)\le 1$. Furthermore, $F(\rho, \sigma) = 1$ if an only if $\rho = \sigma$, and $F(\rho, \sigma) = 0$ if and only if their ranges are orthogonal. For more information on quantum fidelity, see \cite[Chapter 3]{watrous2018theory}.

It is easy to see that the quantum fidelity function does not define a metric on the set $\Pos(\mcl{X}) \times \Pos(\mcl{X})$. However, the function $d_{BW} : \Pos(\mcl{X}) \times \Pos(\mcl{X}) \to \mbb{R}_{\ge 0}$ defined by
\[
d_{BW}(\rho, \sigma) = \sqrt{\Tr(\rho) + \Tr(\sigma) - 2 F(\rho, \sigma)},
\]
defines on metric on $\Pos(\mcl{X}) \times \Pos(\mcl{X})$. This function is called the \emph{Bures--Wasserstein} distance between the positive semidefinite operators $\rho$ and $\sigma$. 
For more details on this distance, see \cite{MR3992484}.
Thus, the quantum fidelity function induces a metric on the set of positive semidefinite operators. In the setting of states, the standard inequalities relating quantum fidelity and the Bures--Wasserstein distance are the Fuchs–van de Graaf inequalities, which, as stated in \cite[Thm. 3.33]{watrous2018theory}, assert that
\[
2 - 2F(\rho, \sigma) \le \n{\rho - \sigma}_1 \le 2 \sqrt{1- F(\rho, \sigma)^2},
\]
for any states $\rho, \sigma \in \D(\mcl{X})$.

In this section, we investigate phase retrievability with respect to the Bures--Wasserstein distance between positive semidefinite operators. The key reason for being interested in this particular form of stability is the fact that the known results for stability of phase retrieval in Frame Theory can be understood in terms of this distance. 
To wit, as mentioned in the introduction, the following stability result has been proved in a variety of ways:
if $\{y_a\}_{a\in\Sigma}$ is a collection of vectors in the Euclidean space $\mcl{X}$ which does phase retrieval (in the sense that the map $x \mapsto \big( |\pair{y_a}{x}| \big)_{a \in \Sigma}$ is injective up to a phase), then there exists a constant $C > 0$ such that for every $x,z \in \mcl{X}$ we have
\[
\inf_{|\lambda|=1} \n{x-\lambda z} \le C \n{\big( |\pair{y_a}{x}|-|\pair{y_a}{z}|  \big)_{a \in \Sigma}}_{\ell_2}.
\]
Straightforward calculations show that the inequality above is equivalent to
\[
d_{BW}(xx^*,zz^*) \le C d_{BW}(\Psi_\mu(xx^*),\Psi_\mu(zz^*))
\]
where $\Psi_\mu : \L(\mcl{X}) \to \L(\mathbb{F}^\Sigma)$ is the super operator associated, in the sense of \eqref{eqn 2.4}, to the measurement $\mu : \Sigma \to \L(\mcl{X})$ given by $\mu(a) = y_ay_a^*$; this makes it abundantly clear that the aforementioned stability of phase retrieval in Frame Theory can be understood as Lipschitz stability of retrieval with respect to the Bures--Wasserstein distance.

The main ingredient for this section are inequalities connecting the Bures--Wasserstein distance and the quantum fidelity for arbitrary pairs of positive semidefinite operators, which are just a non-normalized version of the Fuchs--van de Graaf inequalities.
Before we delve into the main results we state an auxiliary lemma.
\begin{lemma} \label{lem: Trace-fidelity}
    Let $\mcl{X}$ be an $\mbb{F}$-Euclidean space. Then $\Tr(\sqrt{\rho}\sqrt{\sigma}) \le F(\rho,\sigma)$ for all $\rho, \sigma \in \Pos(\mcl{X})$. 
\end{lemma}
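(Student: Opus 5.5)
The plan is to compare the trace of $\sqrt{\rho}\sqrt{\sigma}$ with its trace norm, which by definition is $F(\rho,\sigma)=\n{\sqrt{\rho}\sqrt{\sigma}}_1$. The general inequality we want is $|\Tr(A)| \le \n{A}_1$ for every $A \in \Li(\mcl{X})$. Granting it, we get
\[
\Tr(\sqrt{\rho}\sqrt{\sigma}) \le |\Tr(\sqrt{\rho}\sqrt{\sigma})| \le \n{\sqrt{\rho}\sqrt{\sigma}}_1 = F(\rho,\sigma).
\]

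First we check that the left-hand side is a real number, so that the inequality makes sense. Since $\sqrt{\rho}$ and $\sqrt{\sigma}$ are Hermitian, cyclicity and conjugation of the trace under adjoints give
\[
\overline{\Tr(\sqrt{\rho}\sqrt{\sigma})} = \Tr\big((\sqrt{\rho}\sqrt{\sigma})^*\big) = \Tr(\sqrt{\sigma}\sqrt{\rho}) = \Tr(\sqrt{\rho}\sqrt{\sigma}).
\]
Alternatively, $\Tr(\sqrt{\rho}\sqrt{\sigma}) = \pair{\sqrt{\rho}}{\sqrt{\sigma}}$ is the Frobenius inner product of two Hermitian operators, which is real.

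Next we prove $|\Tr(A)| \le \n{A}_1$. Take a singular value decomposition $A = \sum_j s_j x_j y_j^*$, where $s_j \ge 0$ and $\{x_j\}$, $\{y_j\}$ are orthonormal families. Then
\[
\Tr(A) = \sum_j s_j \pair{y_j}{x_j}.
\]
Each $|\pair{y_j}{x_j}| \le 1$ by Cauchy--Schwarz, so $|\Tr(A)| \le \sum_j s_j = \n{A}_1$.

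Another route is the variational formula $\n{A}_1 = \max_{U} |\Tr(UA)|$ over unitaries $U$ (see \cite[Chapter 1]{watrous2018theory}), taking $U = \mathbbm{1}_{\mcl{X}}$. This route works verbatim over $\mbb{R}$ as well, with orthogonal matrices in place of unitaries.

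There is no real obstacle. The only point needing care is that the statement compares a trace with a real number, so the reality of $\Tr(\sqrt{\rho}\sqrt{\sigma})$ should be recorded explicitly. One could note further that the trace is in fact nonnegative, since $\Tr(\sqrt{\rho}\sqrt{\sigma}) = \Tr(\rho^{1/4}\sqrt{\sigma}\rho^{1/4}) \ge 0$, though this is not needed for the inequality.
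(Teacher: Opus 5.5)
Your proof is correct and is essentially the paper's argument: the paper also writes $\Tr(\sqrt{\rho}\sqrt{\sigma}) \le |\Tr(\mathbbm{1}_{\mcl{X}}\sqrt{\rho}\sqrt{\sigma})| \le \n{\sqrt{\rho}\sqrt{\sigma}}_1$. It justifies the second step with the variational formula $\n{X}_1 = \max_U |\Tr(UX)|$ at $U = \mathbbm{1}_{\mcl{X}}$ (your second route), whereas your SVD argument proves $|\Tr(A)| \le \n{A}_1$ directly; your explicit check that $\Tr(\sqrt{\rho}\sqrt{\sigma})$ is real (indeed nonnegative) fills in a point the paper leaves implicit.
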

\begin{proof}
    First, denoting the set of unitary operators on $\mcl{X}$ by $\U(\mcl{X})$, it follows from \cite[Equation 1.182]{watrous2018theory} that $\n{X}_1 = \max \{|\Tr(UX)|: U \in \U(\mcl{X})\}$  for all $X \in \Li(\mcl{X})$. Therefore, given $\rho, \sigma \in \Pos(\mcl{X})$, it follows that $\Tr(\sqrt{\rho}\sqrt{\sigma}) = \Tr(\mathbbm{1}_\mcl{X}\sqrt{\rho}\sqrt{\sigma} ) \le  |\Tr(\mathbbm{1}_\mcl{X}\sqrt{\rho}\sqrt{\sigma} )|\le \n{\sqrt{\rho}\sqrt{\sigma}}_1$.
\end{proof}

\begin{theorem}[Generalized Fuchs--van de Graaf]  \label{thm: BW Fvd Graaf}
    Let $\mcl{X}$ be an $\mbb{F}$-Euclidean space. If $\rho, \sigma \in \Pos(\mcl{X})$ then
    \[
    \n{\rho - \sigma}_1 \le d_{BW}(\rho,\sigma) \sqrt{\Tr(\rho) +\Tr(\sigma)+2F(\rho, \sigma)} \quad\text{ and }\quad d_{BW}(\rho, \sigma)^2 \le \n{\rho - \sigma}_1.
    \]
\end{theorem}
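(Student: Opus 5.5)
The plan is to prove both inequalities by reducing them to standard facts about the fidelity and the trace norm.

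\emph{First inequality.} Set $A=\sqrt{\rho}$ and $B=\sqrt{\sigma}$. We use the identity
\[
\rho-\sigma=\tfrac12\big[(A+B)(A-B)+(A-B)(A+B)\big].
\]
Hölder's inequality for Schatten norms then gives $\n{\rho-\sigma}_1\le \n{A+B}_2\n{A-B}_2$. Expanding,
\[
\n{A\pm B}_2^2=\Tr(\rho)+\Tr(\sigma)\pm 2\Tr(\sqrt{\rho}\sqrt{\sigma}).
\]
By Lemma \ref{lem: Trace-fidelity}, $\Tr(\sqrt\rho\sqrt\sigma)\le F(\rho,\sigma)$, so $\n{A+B}_2^2\le \Tr\rho+\Tr\sigma+2F$. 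However, $\n{A-B}_2^2\ge d_{BW}^2$, which points the wrong way.

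To repair this, we exploit unitary freedom. By polar decomposition, choose a unitary $U$ with $\Tr(\sqrt\rho\sqrt\sigma U)=\n{\sqrt\rho\sqrt\sigma}_1=F(\rho,\sigma)$ (see \cite[Eq. 1.182]{watrous2018theory}). Replace $B$ by $B'=\sqrt\sigma U$; then $B'B'^*=\sigma$ and $\Re\Tr(A^*B')=F$. The operators $A$ and $B'$ need not be Hermitian, so we use the identity for arbitrary factorizations:
\[
AA^*-B'B'^*=\tfrac12\big[(A+B')(A-B')^*+(A-B')(A+B')^*\big].
\]
Hölder gives $\n{\rho-\sigma}_1\le \n{A+B'}_2\n{A-B'}_2$. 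Now $\n{A\pm B'}_2^2=\Tr\rho+\Tr\sigma\pm 2F$ exactly, which yields the first inequality. The main obstacle is selecting this optimal unitary so that both Frobenius norms are computed exactly rather than merely bounded; the polar decomposition of $\sqrt\rho\sqrt\sigma$ provides it.

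\emph{Second inequality.} We need $\Tr\rho+\Tr\sigma-2F(\rho,\sigma)\le \n{\rho-\sigma}_1$. The plan is to measure in an optimal basis and pass to the classical case.

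Take the projective measurement onto the eigenspaces of $\rho-\sigma$, with projections $\Pi_+$ and $\Pi_-=\mathbbm{1}-\Pi_+$. The measured distributions $p=(\Tr\Pi_\pm\rho)$ and $q=(\Tr\Pi_\pm\sigma)$ then satisfy $\sum|p_a-q_a|=\n{\rho-\sigma}_1$. Fidelity is monotone under measurement (\cite[Thm. 3.24]{watrous2018theory}, valid for unnormalized positive operators by homogeneity), so $F(\rho,\sigma)\le \B(p,q)$.

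The remaining step is elementary. Since
\[
\Tr\rho+\Tr\sigma-2\B(p,q)=\sum_a(\sqrt{p_a}-\sqrt{q_a})^2\le\sum_a|\sqrt{p_a}-\sqrt{q_a}|(\sqrt{p_a}+\sqrt{q_a})=\sum_a|p_a-q_a|,
\]
we obtain
\[
d_{BW}^2=\Tr\rho+\Tr\sigma-2F\le \Tr\rho+\Tr\sigma-2\B(p,q)\le\n{\rho-\sigma}_1 .
\]
If the cited fidelity–Bhattacharyya theorem is stated only for states, we rescale $\rho$ and $\sigma$ by their traces and use the joint homogeneity $F(s\rho,t\sigma)=\sqrt{st}\,F(\rho,\sigma)$ and $\B(sp,tq)=\sqrt{st}\,\B(p,q)$. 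The degenerate cases where $\rho$ or $\sigma$ is zero are trivial.
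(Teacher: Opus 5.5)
Your proof is correct, and it takes a different route from the paper on both halves.

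\textbf{First inequality.} The paper gives no argument of its own here. It cites \cite[Lemma S5]{coles2014equivalence}, namely $\n{\rho-\sigma}_1\le\sqrt{(\Tr\rho+\Tr\sigma)^2-4F(\rho,\sigma)^2}$, and then factors the right-hand side as $d_{BW}(\rho,\sigma)\sqrt{\Tr\rho+\Tr\sigma+2F(\rho,\sigma)}$. You prove exactly that bound directly. You write $\sigma=B'B'^*$ with $B'=\sqrt{\sigma}\,U$, where $U$ is the polar unitary of $\sqrt{\rho}\sqrt{\sigma}$. Combining this with the symmetrized identity and H\"older computes both Frobenius norms exactly. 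This is essentially the Uhlmann-optimal purification argument written in operator form, with H\"older taking the place of contractivity of the partial trace. It makes the statement self-contained, and it visibly works over $\mbb{R}$ as well, using the orthogonal polar factor.

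\textbf{Second inequality.} The paper uses the Powers--St{\o}rmer inequality $\n{\rho-\sigma}_1\ge\n{\sqrt{\rho}-\sqrt{\sigma}}_2^2$ \cite[Lem. 3.34]{watrous2018theory} together with Lemma~\ref{lem: Trace-fidelity}. That argument is two lines long and passes through the intermediate quantity $\n{\sqrt{\rho}-\sqrt{\sigma}}_2^2$.

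You instead reduce to the classical case:
\begin{itemize}
\item the Helstrom projective measurement $\{\Pi_+,\Pi_-\}$ realizes $\n{\rho-\sigma}_1$ as an $\ell_1$ distance;
\item the Fuchs--Caves bound gives $F(\rho,\sigma)\le\B(p,q)$;
\item the scalar inequality $(\sqrt{p}-\sqrt{q})^2\le|p-q|$ finishes the argument.
\end{itemize}
This is the original Fuchs--van de Graaf route. It costs a heavier tool, the measurement characterization of fidelity rather than an elementary operator inequality. In exchange, it ties the result directly to the Bhattacharyya coefficient already used in Corollary~\ref{cor:Bhattacharyya xterization}. Your homogeneity remark correctly handles the unnormalized case.
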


\begin{proof}
The first inequality follows from simple algebraic manipulations of \cite[Lemma S5]{coles2014equivalence}, see also \cite[Lemma 46]{MR4126881} for a different proof. %
    For the second one, we first note from \cite[Lem. 3.34]{watrous2018theory} that $\n{\rho-\sigma}_1 \ge \n{\sqrt{\rho}-\sqrt{\sigma}}_2^2$. From Corollary \ref{lem: Trace-fidelity} and the definition of the Schatten $p$-norm it follows that
    \[
    \begin{aligned}
        \n{\sqrt{\rho}-\sqrt{\sigma}}_2^2 & = \Tr((\sqrt{\rho}-\sqrt{\sigma})^2) = \Tr(\rho) + \Tr(\sigma) - 2 \Tr(\sqrt{\rho}\sqrt{\sigma}) \\ & \ge \Tr(\rho) + \Tr(\sigma) - 2 F(\rho,\sigma) = d_{BW}(\rho,\sigma)^2
    \end{aligned}
    \]
    
    Therefore, $d_{BW}(\rho,\sigma)^2 \le \n{\rho - \sigma}_1$.
\end{proof}

Using Theorem \ref{thm: BW Fvd Graaf} and Theorem \ref{thm: stability Pk-PRSO-input} we immediately translate our stability result with respect to norms into one involving Bures--Wasserstein distances.

\begin{corollary}
    \label{cor: norm stability plus F-vdG}
Let $\mcl{X}$ and $\mcl{Y}$ be $\mbb{F}$-Euclidean spaces, and $\Phi : \L(\mcl{X}) \to \L(\mcl{Y})$ a positive $\mcl{P}_k$-phase retrievable super operator.
Then there exists a constant $C > 0$ such that for every $\rho,\sigma \in \mcl{P}_k$ we have
\begin{equation}\label{eqn:stability Pk norm-to-BW cheap}
    d_{BW}(\rho,\sigma)^2\le \n{\rho-\sigma}_1 \le C d_{BW}\big( \Phi(\rho), \Phi(\sigma) \big) \big(\n{\Phi(\rho)}^{1/2}_1+\n{\Phi(\sigma)}^{1/2}_1 \big). 
\end{equation}
Thus, $\mcl{P}_k$-phase retrieval is $\frac{1}{2}$-H\"older stable with respect to the Bures--Wasserstein distance on bounded subsets of $\Pos(\mcl{X})$, and in particular it is so for $\D_k$-retrieval.
\end{corollary}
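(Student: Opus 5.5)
The plan is to chain three inequalities: the second part of Theorem \ref{thm: BW Fvd Graaf} on the input side, Theorem \ref{thm: stability Pk-PRSO-input} in the middle, and the first part of Theorem \ref{thm: BW Fvd Graaf} on the output side. Positivity of $\Phi$ is exactly what allows the last step.

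First, for $\rho,\sigma \in \mcl{P}_k \subseteq \Pos(\mcl{X})$, the second inequality of Theorem \ref{thm: BW Fvd Graaf} gives $d_{BW}(\rho,\sigma)^2 \le \n{\rho-\sigma}_1$ directly. Next, Theorem \ref{thm: stability Pk-PRSO-input} provides a constant $K_\Phi>0$ with $\n{\rho-\sigma}_1 \le K_\Phi \n{\Phi(\rho)-\Phi(\sigma)}_1$. Since $\Phi$ is positive, $\Phi(\rho),\Phi(\sigma)\in\Pos(\mcl{Y})$. We may therefore apply the first inequality of Theorem \ref{thm: BW Fvd Graaf} in $\mcl{Y}$:
\[
\n{\Phi(\rho)-\Phi(\sigma)}_1 \le d_{BW}(\Phi(\rho),\Phi(\sigma))\sqrt{\Tr\Phi(\rho)+\Tr\Phi(\sigma)+2F(\Phi(\rho),\Phi(\sigma))}.
\]

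To control the square root, use $F(\alpha,\beta)\le \sqrt{\Tr(\alpha)\Tr(\beta)}$ for positive semidefinite $\alpha,\beta$. The radicand is then at most $(\Tr(\alpha)^{1/2}+\Tr(\beta)^{1/2})^2$. Since $\Tr=\n{\cdot}_1$ on positive operators, the square root is at most $\n{\Phi(\rho)}_1^{1/2}+\n{\Phi(\sigma)}_1^{1/2}$. Taking $C=K_\Phi$ yields \eqref{eqn:stability Pk norm-to-BW cheap}.

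For the H\"older consequence, restrict to a bounded set, say $\Tr(\rho),\Tr(\sigma)\le R$. Then $\n{\Phi(\rho)}_1\le \n{\Phi}_{1\to1}R$, so the factor $\n{\Phi(\rho)}^{1/2}_1+\n{\Phi(\sigma)}^{1/2}_1$ is bounded by a constant $M$. This gives $d_{BW}(\rho,\sigma)\le (CM)^{1/2} d_{BW}(\Phi(\rho),\Phi(\sigma))^{1/2}$, which is $\tfrac12$-H\"older stability of the inverse. The case $\D_k$ is the special case $R=1$, using $\D_k\subset\mcl{P}_k$.

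The main obstacle is the bound $F(\alpha,\beta)^2\le\Tr(\alpha)\Tr(\beta)$, which the paper only asserts. It follows from H\"older's inequality: $\n{\sqrt{\alpha}\sqrt{\beta}}_1\le\n{\sqrt\alpha}_2\n{\sqrt\beta}_2=\sqrt{\Tr\alpha\,\Tr\beta}$. The rest of the argument is bookkeeping.
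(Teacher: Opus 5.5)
Your proposal is correct and follows essentially the same route as the paper. It uses the second inequality of Theorem~\ref{thm: BW Fvd Graaf} on the input side, then Theorem~\ref{thm: stability Pk-PRSO-input}, then the first inequality of Theorem~\ref{thm: BW Fvd Graaf} on the output (via positivity of $\Phi$ and $F(\alpha,\beta)^2\le\Tr(\alpha)\Tr(\beta)$), and finally the same bounded-trace argument for $\tfrac12$-H\"older stability; your H\"older-inequality justification of the fidelity bound fills in a step the paper only asserts.
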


\begin{proof}
    The first inequality already holds from Theorem \ref{thm: BW Fvd Graaf}, so we focus on the second. From Theorem \ref{thm: stability Pk-PRSO-input} there exists a constant $C>0$ with $\n{\rho - \sigma}_1 \le C \n{\Phi(\rho)-\Phi(\sigma)}_1$ for every $\rho,\sigma \in \mcl{P}_k$. From Theorem \ref{thm: BW Fvd Graaf} it follows that 
    \[
    C \n{\Phi(\rho)-\Phi(\sigma)}_1 \le C d_{BW}(\Phi(\rho),\Phi(\sigma))\sqrt{\Tr(\Phi(\rho)) +\Tr(\Phi(\sigma))+2F(\Phi(\rho), \Phi(\sigma))}.
    \] 
    Now $\Phi$ is a positive super operator, $\n{\omega}_1 = \Tr(\omega)$ for every $\omega \in \Pos(\mcl{Y})$, and quantum fidelity satisfies $F(\rho, \sigma)^2 \le \Tr(\rho) \Tr(\sigma)$. Using these facts we obtain
    \[
    \sqrt{\Tr(\Phi(\rho)) +\Tr(\Phi(\sigma))+2F(\Phi(\rho), \Phi(\sigma))} \le \n{\Phi(\rho)}^{1/2} + \n{\Phi(\sigma)}^{1/2}.
    \]
    Therefore, $\n{\rho-\sigma}_1 \le C d_{BW}\big( \Phi(\rho), \Phi(\sigma) \big) \big(\n{\Phi(\rho)}^{1/2}_1+\n{\Phi(\sigma)}^{1/2}_1 \big)$, and this completes the proof of \eqref{eqn:stability Pk norm-to-BW cheap}.

For the second part of the conclusion, if we restrict to a subset of $\Pos(\mcl{X})$ which is bounded in trace norm, say by a number $L$, then from the definition of the operator norm and the inequality on the right in Corollary \ref{cor: norm stability plus F-vdG} we deduce that
\[
d_{BW}(\rho, \sigma) \le C d_{BW}(\Phi(\rho),\Phi(\sigma)) (\n{\Phi}_{1\to 1}^{1/2} L^{1/2} + \n{\Phi}_{1 \to 1}^{1/2}L^{1/2}) = K \big(d_{BW}(\Phi(\rho),\Phi(\sigma))\big) ^{1/2},
\]
where $K = (2C\sqrt{L}\sqrt{\n{\Phi}_{1\to 1}}) ^ {1/2}$.
The desired statement about stability follows immediately.
\end{proof}

Other stability results for $\mcl{P}_k$-phase retrieval with respect to the Bures--Wasserstein distance were obtained in \cite{MR4477807}, where a very thorough analysis of said stability was performed. However, the results of \cite{MR4477807} are restricted to the particular case of super operators associated to a Hermitian measurement (that is, of the form $\Psi_\mu$ for a measurement $\mu : \Sigma \to \Herm(\mcl{X})$).
Previous works in a similar vein (i.e. dealing with Lipschitz stability of phase retrieval) also include 
\cite{MR3464067,MR3902102,MR4062062} and the very recent \cite{MR5003156}.
See \cite{MR3526436} for a nice survey of results for phase retrieval in Frame Theory including ones about stability.

Note that for $k>1$ it follows from \cite[Thm. 4.9]{MR4477807} that there are super operators for which we cannot have Lipschitz stability in Corollary \ref{cor: norm stability plus F-vdG}; thus, it is not surprising that we have only obtained H\"older stability.
In the case $k=1$, below we slightly refine the argument to get general Lipschitz stability with respect to the Bures--Wasserstein distance (which, as explained above, generalizes the well-known stability of phase retrieval in Frame Theory).

\begin{theorem}\label{thm: stability BW}
Let $\mcl{X}$ and $\mcl{Y}$ be $\mbb{F}$-Euclidean spaces, and  $\Phi : \L(\mcl{X}) \to \L(\mcl{Y})$ a positive $\mcl{P}_1$-phase retrievable super operator.
Then there exists a constant $C > 0$ such that for every $\rho,\sigma \in \mcl{P}_1$ we have
\[
d_{BW}(\rho,\sigma) \le C d_{BW}\big( \Phi(\rho), \Phi(\sigma) \big).
\]
\end{theorem}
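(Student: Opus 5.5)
The plan is to pass through the trace-norm Lipschitz stability of Theorem \ref{thm: stability Pk-PRSO-input}. The point is that for rank-one operators the first generalized Fuchs--van de Graaf inequality of Theorem \ref{thm: BW Fvd Graaf} is in fact an \emph{equality}. This removes the square-root loss that forced the H\"older exponent in Corollary \ref{cor: norm stability plus F-vdG}.

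\textbf{Step 1 (rank-one identity).} Write $\rho = xx^*$ and $\sigma = zz^*$, and set $a=\n{x}^2$, $b=\n{z}^2$, $c = |\pair{x}{z}|$. Then $F(\rho,\sigma) = \n{\sqrt{\rho}\sqrt{\sigma}}_1 = c$, since $\sqrt{\rho} = xx^*/\n{x}$ and similarly for $\sigma$. The operator $\rho-\sigma$ is Hermitian and supported on $\spa\{x,z\}$, with trace $a-b$ and (restricted) determinant $-(ab-c^2)\le 0$ by Cauchy--Schwarz. Hence its two possibly nonzero eigenvalues have opposite signs, and
\[
\n{\rho-\sigma}_1 = |\lambda_1-\lambda_2| = \sqrt{(a-b)^2+4(ab-c^2)} = \sqrt{(a+b)^2-4c^2}.
\]
This factors as
\[
\n{\rho-\sigma}_1 = d_{BW}(\rho,\sigma)\sqrt{\Tr(\rho)+\Tr(\sigma)+2F(\rho,\sigma)}.
\]
This step is the only real computation, and I expect it to be the main (though modest) obstacle. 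One must check the determinant and sign argument, including the degenerate cases where $x$ and $z$ are parallel or one of them is zero; those cases follow by the same formula.

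\textbf{Step 2 (chain of inequalities).} Let $\rho,\sigma\in\mcl{P}_1$. If $\Tr(\rho)+\Tr(\sigma)+2F(\rho,\sigma)=0$, then $\rho=\sigma=0$ and there is nothing to prove. Otherwise, Step 1, Theorem \ref{thm: stability Pk-PRSO-input} (with constant $K_\Phi$), and the first inequality of Theorem \ref{thm: BW Fvd Graaf} applied in $\Pos(\mcl{Y})$ give
\[
d_{BW}(\rho,\sigma)\sqrt{\Tr\rho+\Tr\sigma+2F(\rho,\sigma)} \le K_\Phi\, d_{BW}(\Phi(\rho),\Phi(\sigma))\sqrt{\Tr\Phi(\rho)+\Tr\Phi(\sigma)+2F(\Phi(\rho),\Phi(\sigma))}.
\]
Here positivity of $\Phi$ is used, so that $\Phi(\rho),\Phi(\sigma)\in\Pos(\mcl{Y})$.

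\textbf{Step 3 (comparing the square-root factors).} By $F(\omega,\tau)\le\sqrt{\Tr\omega\Tr\tau}\le\tfrac12(\Tr\omega+\Tr\tau)$, together with $\Tr\Phi(\rho)=\n{\Phi(\rho)}_1\le\n{\Phi}_{1\to1}\Tr\rho$ (again by positivity), the right-hand radicand is at most
\[
2\n{\Phi}_{1\to1}(\Tr\rho+\Tr\sigma) \le 2\n{\Phi}_{1\to1}\bigl(\Tr\rho+\Tr\sigma+2F(\rho,\sigma)\bigr).
\]
Dividing Step 2 by the nonzero factor $\sqrt{\Tr\rho+\Tr\sigma+2F(\rho,\sigma)}$ then yields the theorem with $C = K_\Phi\sqrt{2\n{\Phi}_{1\to1}}$.
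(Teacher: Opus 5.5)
Your proposal is correct and follows essentially the same route as the paper: the rank-one identity $\n{\rho-\sigma}_1 = d_{BW}(\rho,\sigma)\sqrt{\Tr\rho+\Tr\sigma+2F(\rho,\sigma)}$, then trace-norm Lipschitz stability combined with the generalized Fuchs--van de Graaf inequality, then bounding the output radicand by $2\n{\Phi}_{1\to1}(\Tr\rho+\Tr\sigma)$, which gives the same constant $\sqrt{2}K_\Phi\n{\Phi}_{1\to1}^{1/2}$. The only differences are minor: you verify the rank-one identity by an eigenvalue computation instead of citing Watrous (1.184), and you inline the step that the paper takes from Corollary \ref{cor: norm stability plus F-vdG}.
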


\begin{proof}
Assume $\rho, \sigma \in \mcl{P}_1$. Clearly, the inequality holds if $\rho = \sigma$, so we assume $\rho \ne \sigma$. Since $\rho$ and $\sigma$ have rank one, it is well-known (see \cite[Eqn. (1.184)]{watrous2018theory}) that
\[
\n{\rho-\sigma}_1 = \sqrt{\big(\Tr(\rho)+\Tr(\sigma) \big)^2 - 4F(\rho,\sigma)^2} = d_{BW}(\rho,\sigma) \sqrt{ \n{\rho}_1+\n{\sigma}_1 + 2F(\rho,\sigma) }.
\]
It follows from the above equality and Corollary \ref{cor: norm stability plus F-vdG} that there exists a constant $C>0$ such that for all $\rho,\sigma \in \mcl{P}_1$ with $\rho\not= \sigma$ we have,
\begin{align*}
    d_{BW}(\rho,\sigma) & \le C d_{BW}\big( \Phi(\rho), \Phi(\sigma) \big) \frac{ (\n{\Phi(\rho)}^{1/2}_1+\n{\Phi(\sigma)}^{1/2}_1)}{\sqrt{ \n{\rho}_1+\n{\sigma}_1 + 2 F(\rho, \sigma)}}\\
    &\le C d_{BW}\big( \Phi(\rho), \Phi(\sigma) \big) \frac{ (\n{\Phi(\rho)}^{1/2}_1+\n{\Phi(\sigma)}^{1/2}_1)}{\sqrt{ \n{\rho}_1+\n{\sigma}_1}} \\
    &\le  \sqrt{2}C d_{BW}\big( \Phi(\rho), \Phi(\sigma) \big)  \frac{ \sqrt{\n{\Phi(\rho)}_1+\n{\Phi(\sigma)}_1}}{\sqrt{ \n{\rho}_1+\n{\sigma}_1}} \\
    &\le  \sqrt{2}C \n{\Phi}_{1\to 1}^{1/2} d_{BW}\big( \Phi(\rho), \Phi(\sigma) \big).
\end{align*}
The second inequality holds because $\sqrt{ \n{\rho}_1+\n{\sigma}_1 + 2 F(\rho, \sigma)} \ge \sqrt{ \n{\rho}_1+\n{\sigma}_1}$. The third follows from the Cauchy-Schwarz inequality. The last is true because $\tfrac{\n{\Phi(\rho)}_1 + \n{\Phi(\sigma)}_1}{\n{\rho}_1 + \n{\sigma}_1} \le  \tfrac{\n{\Phi}_{1 \to 1}(\n{(\rho}_1 + \n{\sigma}_1)}{\n{\rho}_1 + \n{\sigma}_1} =\n{\Phi}_{1 \to 1}$.
\end{proof}

\section*{Generative AI disclosure}

The authors used ChatGPT by OpenAI to perform some calculations, to search for references, and to check the manuscript for typos.
The writing was performed by the authors themselves, who are responsible for its correctness.

\bibliography{references}
\bibliographystyle{amsalpha} 
\end{document}